\tikzset{
  ->-/.style={decoration={markings, mark=at position 0.5 with {\arrow{to}}},
              postaction={decorate}},
}
\tikzset{
  -<-/.style={decoration={markings, mark=at position 0.5 with {\arrow{to reversed}}},
              postaction={decorate}},
}
\tikzset{
  dbl->-/.style={
double, 
double equal sign distance,
shorten >= 1pt,
shorten <= 1pt,
 decoration={markings, mark=at position 0.5 with {\arrow{implies}}},
              postaction={decorate}},
}
\tikzset{
  dbl-<-/.style={
double, 
double equal sign distance,
shorten >= 1pt,
shorten <= 1pt,
 decoration={markings, mark=at position 0.5 with {\arrowreversed{implies}}},
              postaction={decorate}},
}
\newcommand{\half}{\tfrac{1}{2}}
\newcommand{\Lap}{\triangle}
\title{Celestial holography meets twisted holography: 4d amplitudes from chiral correlators}
\author{Kevin Costello$^1$, Natalie M. Paquette$^2$}
\date{%
    $^1$Perimeter Institute for Theoretical Physics\\%
    $^2$Department of Physics, University of Washington, Seattle\\[2ex]%
}
\begin{document}

\maketitle

\begin{abstract}
  We propose a new program for computing a certain integrand of scattering amplitudes of four-dimensional gauge theories which we call the \textit{form factor integrand}, starting from 6d holomorphic theories on twistor space. We show that the form factor integrands can be expressed as sums of products of 1.) correlators of a 2d chiral algebra, related to the algebra of asymptotic symmetries uncovered recently in the celestial holography program, and 2.) OPE coefficients of a 4d non-unitary CFT. We prove that conformal blocks of the chiral algebras are in one-to-one correspondence with local operators in 4d. We use this bijection to recover the Parke-Taylor formula, the CSW formula, and certain one-loop scattering amplitudes.  Along the way, we explain and derive various aspects of celestial holography, incorporating techniques from the twisted holography program such as Koszul duality. This perspective allows us to easily and efficiently recover the infinite-dimensional chiral algebras of asymptotic symmetries recently extracted from scattering amplitudes of massless gluons and gravitons in the celestial basis. We also compute some simple one-loop corrections to the chiral algebras and derive the three-dimensional bulk theories for which these 2d algebras furnish an algebra of boundary local operators.
\end{abstract}

\section{Introduction}
A great deal of progress has been made in recent years on the structure of scattering amplitudes for supersymmetric gauge theory on flat space.  In one direction, inspired by twistor string theory \cite{Witten:2003nn},  exact loop-level results have been obtained for the integrand of $\mc{N}=4$ Yang-Mills scattering amplitudes. 

In a different direction, there has been a surge of recent work on the asymptotic symmetries of scattering amplitudes in flat space (see e.g. \cite{Strominger:2017zoo}). Perhaps the greatest success in this direction has been the realization \cite{Guevara:2021abz, Strominger:2021lvk, Himwich:2021dau} that there are beautiful chiral algebras and infinite-dimensional Lie algebras emerging from the study of conformally-soft gluons and gravitons.    

These developments are not completely unrelated, although the precise connection has been somewhat mysterious.  A starting point for Witten's twistor-string theory work was Nair's observation \cite{Nair:1988bq}, relating tree-level amplitudes for $\mc{N}=4$ gauge theory to correlators of a super Kac-Moody algebra, which appears to be related to the chiral algebras of celestial holography.  However, Nair's algebra has a non-zero Kac-Moody level, unlike the Kac-Moody algebras found in celestial holography.  Further, Nair's identity only holds after discarding multi-trace terms in the Kac-Moody correlators.

In this work, we provide a general method for understanding form factors of certain non-supersymmetric gauge theories as correlators of chiral algebras of the type studied in \cite{Guevara:2021abz}; such form factors are, in turn, related to certain scattering amplitudes in QCD. Our main result is a formula for a certain integrand, which we dub the \textit{form factor integrand}, that computes scattering amplitudes in the presence of a local operator insertion (i.e. a form factor), as a sum of products of two quantities:
\begin{enumerate} 
	\item Correlators of a chiral algebra closely related to that appearing from the study of soft gluons \cite{Guevara:2021abz}, and in particular containing a level $0$ Kac-Moody algebra;
	\item  OPE coefficients of a four-dimensional non-unitary CFT.  
\end{enumerate}
We explicitly check our formulae against known results for certain tree-level and one-loop amplitudes.  

Both quantities in our formula are very tightly constrained by associativity or crossing symmetry, in dimensions $2$ and $4$ respectively.  This suggests that one can use this method to bootstrap the integrand for scattering amplitudes at loop level.

\subsection{The $4d$ CFT}
The starting point for our analysis is a class of $4d$ CFTs considered in  \cite{Costello:2021bah}.  These are theories that come from local holomorphic field theories on twistor space. At the classical level, any self-dual gauge theory can be described in this way.  For non-supersymmetric theories, this can be spoiled at the quantum level by anomalies \cite{Costello:2021bah}.  Fortunately, in many cases, the anomaly can be cancelled by an unusual Green-Schwarz mechanism which requires the introduction of an axion field.   

This cancellation works with gauge group $SU(2)$, $SU(3)$, $SO(8)$ or an exceptional group. In these cases the Lagrangian is
\begin{equation}
	\int \op{tr}(B F(A)_-) - \half \int (\Delta \rho)^2 - \frac{\sqrt{10} \sh^\vee  }{8 \pi \sqrt{3} \sqrt{\dim g + 2}  }  \int \rho \op{tr}(F(A)^2). \label{eqn:sample_lagrangian}
\end{equation}
where $\sh^\vee$ is the dual Coxeter number.  (The constants come from the coefficients of a trace identity, and the coefficients of a Feynman diagram on twistor space).  

In this expression, $\rho$ is a scalar field and $B$ is an adjoint-valued ASD\footnote{Here we use the opposite conventions to those in \cite{Costello:2021bah} in order to match the ``mostly $+$'' conventions of the scattering amplitudes literature.}  $2$-form.  (One can also take the gauge group to be $SU(2)$, $SO(8)$ or an exceptional group. If we introduce matter, we could take  $SU(N_c)$ with $N_f =  N_c$. In each case the axion coupling needs to be tuned to cancel the anomaly).     

The fact that the theory arises from an anomaly free theory on twistor space implies that all correlation functions, and OPE coefficients, are rational functions. 

We are interested in deforming this theory by $g_{YM}^2 \op{tr} (B^2)$.  As is well known, once we add $\op{tr} (B^2)$ to the Lagrangian we get a theory that is perturbatively equivalent to ordinary Yang-Mills theory, plus an axion field.  Thus, one can compute quantities in ordinary Yang-Mills theory at order $2n$ in the coupling constant $g_{YM}$ by placing the operators $\op{tr}(B^2)$ at points $x_1,\dots,x_n \in \R^4$ and then integrating over their position.

The quantity of interest in this paper is what we shall refer to as the \emph{form factor integrand}: the scattering amplitudes of the gauge theory in the presence of the operator $\op{tr}(B^2)$ at points $x_1,\dots, x_n$. The name is chosen to emphasize that amplitude is computed in the presence of an operator \footnote{There is a large literature on the computation of form factors, especially in $\mathcal{N}=4$ SYM; see e.g. \cite{Bern:2005iz, Brandhuber:2012vm} for some loop-level results and \cite{Yang:2019vag} for a review with further references.}.

We should emphasize that the form-factor integrand is \emph{not}  the same as what other authors call the integrand,  although it is related.  Our form-factor integrand is closely related to natural quantities appearing in twistor-string theory \cite{Witten:2003nn}, where amplitudes are expressed as integrals over spaces of curves in twistor space.  The connection is given by noting that each point $x_i \in \R^4$ gives rise to a curve $\CP^1_{x_i}$ in twistor space. 

As mentioned, we will present a formula for the form factor integrand which is a sum of products of the OPE coefficients of this CFT, together with the correlation functions of a chiral algebra that we will now discuss.

\subsection{The chiral algebra}
The chiral algebra we use is very closely related to that studied in the celestial holography literature \cite{Guevara:2021abz}. Here  we will write down the generators of the chiral algebra and their OPEs explicitly. They are derived in the bulk of the paper by starting with the twistor space description of the theory and using the method of Koszul duality  \cite{Costello:2020jbh, PW}. We work in Euclidean signature here, although since our integrand is an entire analytic function, we can readily move to other signatures.  We write $\op{Spin}(4)$ as $SU(2)_+ \times SU(2)_-$.  The chiral algebra lives on a $\CP^1$ with coordinate $z$, which is rotated by $SU(2)_-$. 

The chiral algebra has four towers of states, each living in an infinite sum of finite-dimensional representation of $SU(2)_+$.  A state in the chiral algebra has a spin, in the usual sense of chiral algebras; a weight under the Cartan of $SU(2)_+$; and also a lives in a $SU(2)_+$ representation of some heighest weight. The generators, and the 4d fields to which they couple (as described in more detail in the main text and below), are listed in table \ref{table:chiralalgebra}. 

\begin{table}
\begin{tabular}{c |  c |  c |  c | c |c }
	Generator  & Spin & Weight & $SU(2)_+$ representation & Field & Dimension \\
	$J[m,n]$, $m,n \ge 0$ & $1-(m+n)/2$   & $(m-n)/2$ & $(m+n)/2$ & $A$ & $-m-n$  \\
	$\til{J}[m,n]$, $m,n \ge 0$ & $-1-(m+n)/2$   & $(m-n)/2$ & $(m+n)/2$ & $B$ & $-m-n-2$ \\
	$E[m,n]$, $m+n > 0$ &  $-(m+n)/2$   & $(m-n)/2$ & $(m+n)/2$ & $\rho$ & $-m-n$ \\
	$F[m,n]$, $m,n \ge 0$ & $ -(m+n)/2  $   & $(m-n)/2$ & $(m+n)/2$ & $\rho$ & $-m-n-2$  
\end{tabular}
	\caption{The generators of our 2d chiral algebra and their quantum numbers.  Dimension refers to the charge under scaling of $\R^4$. \label{table:chiralalgebra}}
\end{table}

At tree level, the OPEs for the $J$, $\til{J}$ currents are
\begin{equation} 
	\begin{split}
		J^a[r,s](0) J^b[t,u](z) &\sim \frac{1}{z} f^{ab}_c J^c[r+t, s+u] (0) \\
		J^a[r,s](0) \til{J}^b[t,u](z) &\sim \frac{1}{z} f^{ab}_c \til{J}^c[r+t, s+u] (0) 
	\end{split} \label{eqn:OPEs}
\end{equation}
This OPE is subject to loop corrections. The method of Koszul duality gives a well-defined prescription for computing these, but we have not yet fully analyzed all loop corrections.  At one loop we do know there is an additional term in the OPE
\begin{equation} 
	\begin{split} 
		J_a [1,0] (0)  J_b [0,1] (z) & \sim \frac{C}{z}  K^{fe} f_{ae}^c f_{bf}^d ( \til{J}_c[0,0] J_d[0,0] +  J_c[0,0] \til{J}_d[0,0]   )      \\  
		J_a [1,0] (0)  \til{J}_b [0,1] (z)  &\sim \frac{C}{z}  K^{fe} f_{ae}^c f_{bf}^d \til{J}_c[0,0] \til{J}_d[0,0]    
	\end{split}
\end{equation}
Here we have only written the relation in the case the indices $a,b$ are such that $[\t_a,\t_b] = 0$; and $C$ is a constant we have not determined.   

To write the OPEs involving the $E,F$ towers, it is convenient to introduce a constant $\lambda_{\g}$ so that 
\begin{equation} 
	\op{Tr}(X^4) = \lambda_{\g}^2 \op{tr}(X^2)^2 
\end{equation}
where on the right hand side we take trace in the fundamental, and on  the left in the adjoint. This trace identity only holds for the gauge groups we consider.  Explicitly \cite{} we have
\begin{equation} 
	\lambda_{\g} = \frac{\sqrt{10} \sh^\vee}{\sqrt{\dim \g +2} } \end{equation}
where $\sh^\vee$ is the dual Coxeter number.  Then we set
\begin{equation} 
	\what{\lambda}_{\g} = \frac{\lambda_{\g}} { (2 \pi \i)^{3/2} \sqrt{12} } .
\end{equation}
The constant arises from the coupling constant on twistor space required to cancel the anomaly \cite{Costello:2021bah}.  Then, we have the additional OPEs
\begin{equation} 
	 \begin{split}	
		 J^a[r,s](0) E[t,u](z) &\sim     \frac{ \what{\lambda}_{\g}   }{z} \frac{(ts - ur)}{t + u} \til{J}^a [t+r - 1, s + u -1](0)  \\
		J^a[r,s](0) F[t,u](z) &\sim  -  \frac{  \what{\lambda}_{\g}  }{z} \partial_z \til{J}^a[r+t, s + u](0)  - \frac{   \what{\lambda}_{\g}   }{z^2} (1 + \frac{r + s}{t+u+2}) \til{J}^a[r+t, s+u](0) \\
		J^a[r,s](0) J^b[t,u](z) \sim& \frac{   \what{\lambda}_{\g}  }{z} K^{ab} (ru-st) F[r+t-1,s+u-1] (0)\\
		&- \frac{    \what{\lambda}_{\g}  }{z} K^{ab} (t+u)  \partial_z E[r+t,s+u](0) - \frac{1}{z^2} K^{ab} (r+s+t+u) E[r+t,s+u](0).  
	 \end{split} \label{eqn:axion_opes}
\end{equation}
(It can also be natural to include the coefficient of the coupling between the axion and the gauge field in these expressions explicitly, but this can be removed by a redefinition of the generators $E,F,\til{J}$.)

\subsection{Formula for form factors}
We can now put the 4d and 2d pieces together to obtain the advertised expression for the form factor integrand, which are related to certain integrands of scattering amplitudes by the previous discussion. To explain our formula for form factors, we need to first state some properties of the relation between the chiral algebra and the four-dimensional CFT. 
\begin{enumerate}
	\item The generators of the vertex algebra, as listed above, are in bijection with single-particle conformal primary states of the four-dimensional theory in the sense of \cite{Pasterski:2017kqt}, of mostly \emph{negative} conformal dimension (the conformal dimension is the spin of the field in table \ref{table:chiralalgebra}).  The generators $J^a[r,s]$ correspond to gluons of positive helicity, and $\til{J}^a[r,s]$ to gluons of negative helicity. 
    \item Conformal blocks of our vertex algebra are in bijection with local operators in the $4d$ theory.
\end{enumerate}
For our purposes, a \emph{conformal block} is a way of defining correlation functions of the vertex algebra compatible with the OPEs.  

Thus, given any local operator $\mc{O}$ of the $4d$ theory, we can define the correlation functions of the vertex algebra by using the conformal block corresponding to $\mc{O}$. Such correlation functions will be denoted by
\begin{equation}
		\ip{\mc{O} \mid V_1(z_1) \dots V_n(z_n) } 
\end{equation}
where $V_i$ are elements of the vacuum module of the vertex algebra placed at points $z_i$. (We lose no generality by taking the $V_i$ to be single-particle conformal primary states such as $J[r,s]$, $\til{J}[r,s]$).

Our first result is:
\begin{proposition}
The following two quantities are equal:
\begin{enumerate}
	\item Scattering amplitudes of the $4d$ theory in the presence of our chosen local operator at fixed position (these quantities are known as \emph{form factors}). 
    \item Correlation functions of the chiral algebra defined using the corresponding conformal block. 
\end{enumerate}
\end{proposition}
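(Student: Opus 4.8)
The plan is to show that both quantities are two packagings of a single computation in the holomorphic theory on twistor space, so that their equality follows from the very way the chiral algebra was produced from that theory by Koszul duality. Thus the first move is to recast the gauge-theory side entirely in twistor-space language, and the second is to recognize the chiral-algebra side as the Koszul-dual description of the same correlator, with the local operator $\mc{O}$ supplying the conformal block.

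First I would express the form factor in twistor-space terms. The self-dual sector is the holomorphic theory on $\CP^3$ with fields $A$ (positive-helicity gluons) and $B$ (negative-helicity), deformed by $g_{YM}^2 \op{tr}(B^2)$; each external gluon in the conformal-primary basis of \cite{Pasterski:2017kqt} is a definite cohomology class supported on the twistor line $\CP^1$ determined by its point $z$ on the celestial sphere, and the insertion of $\op{tr}(B^2)$ at $x_i \in \R^4$ is an insertion along the twistor line $\CP^1_{x_i}$. The form factor is then a correlation function of the holomorphic twistor theory with these prescribed external and internal insertions, organized as a perturbative expansion in the number of $\op{tr}(B^2)$ insertions.

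Next I would invoke Koszul duality, the tool used in the body of the paper to generate the chiral algebra from the twistor theory. The generators of table~\ref{table:chiralalgebra} are by construction the modes of the operators that create single-particle conformal-primary states, and their OPEs, equations~\eqref{eqn:OPEs} and~\eqref{eqn:axion_opes}, encode the cubic and higher couplings of the twistor theory. The general statement of Koszul duality in this setting is that coupling the bulk theory to external sources of this type is computed by correlators of the dual chiral algebra, which identifies the twistor correlator above with a correlator of the $J,\til{J},E,F$ towers. What makes such a correlator well-defined is a conformal block, i.e. a rule compatible with the OPEs for closing up the correlation function; by the asserted bijection these are labelled by $4d$ local operators, and I would argue that the block attached to $\mc{O}$ is exactly the one obtained by evaluating the twistor correlator in the state created by $\mc{O}$. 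The equality of the two quantities then follows because both compute the same twistor-space correlator.

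I expect the main obstacle to be establishing that the assignment $\mc{O} \mapsto$ (chiral-algebra correlator) is the physically correct conformal block rather than merely a formal bijection, so that the $4d$ OPE data enters with the right coefficients, and controlling the multi-trace contributions and loop corrections to the OPEs that the paper flags as only partially analyzed. The cleanest way to pin the correspondence down is to match both sides pole by pole in the $z_i$: the singularities of the chiral-algebra correlator are fixed by the OPEs, the singularities of the form factor come from collinear limits of the external gluons, and these must agree block by block, order by order in $g_{YM}$, with the tree-level OPEs governing the leading terms and the one-loop corrections governing the next.
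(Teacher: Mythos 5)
Your overall architecture matches the paper's in spirit: both quantities are to be read off from one and the same $6d$ holomorphic computation, with Koszul duality supplying the dictionary between twistor fields and the generators $J,\til{J},E,F$. But there is a genuine gap at the decisive step. You assert that ``coupling the bulk theory to external sources of this type is computed by correlators of the dual chiral algebra,'' yet Koszul duality by itself only produces the universal defect algebra and its OPEs; it does not rule out contributions to the form factor in which the external conformal-primary states, supported at distinct points $z_i$ of the twistor $\CP^1$, scatter off one another through bulk $6d$ exchanges that never touch the defect. A priori such diagrams contribute to quantity (1) but have no counterpart in quantity (2). The paper closes exactly this hole with an axial-gauge argument: one chooses a gauge in which the $6d$ propagator has support only in the $v_1,v_2$ directions, with no propagation in the $z$ direction (slightly smeared to tame the usual singularity of axial gauges). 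In such a gauge, insertions at different values of $z$ cannot communicate through the bulk at all, so every scattering process is mediated entirely by the defect degrees of freedom, and the amplitude collapses to the chiral-algebra correlator in the block attached to $\mc{O}$. Without this (or an equivalent) step, your claim that both sides ``compute the same twistor correlator'' is an assertion rather than a proof.

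Relatedly, your fallback strategy of pinning down the block by matching poles in the $z_i$ against collinear singularities is not sufficient on its own: a rational correlator is determined by its poles only together with its behaviour at $z=\infty$ (the paper's explicit computations always impose both), and pole-matching would at best verify the identification case by case rather than establish it for all local operators $\mc{O}$ at once. The paper instead makes the block concrete by realizing it through an auxiliary $2d$ chiral free-fermion system coupled gauge-invariantly along the $\CP^1$ over $0\in\R^4$: the fermion theory has a unique conformal block, its correlators of the composite operators representing the chiral-algebra generators define a block of the Koszul-dual algebra, and the corresponding $4d$ local operator is precisely the one obtained by integrating out the fermions; the axial-gauge argument then gives the equality, and the reasoning extends to arbitrary conformal blocks. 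Incorporating the no-$z$-propagation argument is what would turn your outline into a proof.
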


We have stated that conformal primary generators of the chiral algebra are the same as single-particle states of the $4d$ theory in the conformal basis.  To translate to standard formulae for scattering amplitudes, we should express states in the momentum basis in terms of the chiral algebra.  A null momentum $p_{\alpha \dot{\alpha}}$ can be expressed as a pair of spinors, $\lambda^{\alpha}$ and $\mu^{\dot{\alpha}}$.    The momentum eigenstates of positive and negative helicity corresponding to the pair of spinors $\lambda,\mu$ are obtained by taking $\lambda = (1,z)$, and looking at the generating function
\begin{equation}
	\begin{split} 
		J(\mu,z) &= \sum \frac{(\mu^{\dot{1}})^r (\mu^{\dot{2}})^s}{r! s!} J[r,s]. \\
		\til{J}(\mu,z) &= \sum \frac{(\mu^{\dot{1}})^r (\mu^{\dot{2}})^s}{r! s!} \til{J}[r,s].  
	\end{split}\label{eqn:generating_function}	
\end{equation}
The expansion in powers of $\mu$ is an expansion of a momentum eigenstate in soft modes, where the energy has been absorbed into the scale of $\mu$. 

Correlators of the chiral algebra will then be expressed in terms of\begin{equation} 
	\begin{split}
		\ip{ij}&= z_i - z_j\\
		[ij] &= \eps_{\dot{\alpha} \dot{\beta}} \mu_i^{\dot{\alpha}} \mu_j^{\dot{\beta}}.
	\end{split}
\end{equation}
Our theorem relating correlators and form factors is best implemented using these generating functions.   Suppose we have a Lorentz invariant local operator $\mc{O}$.  Lorentz invariance tells us that 
\begin{equation} 
	\ip{ \mc{O} \middle| J_{a_1}(\mu_1,z_1) \dots \til{J}_{a_n}(\mu_n, z_n) }  
\end{equation}
is expressed as a (finite) sum involving only $[ij]$, $\ip{ij}$, and contractions of the colour indices $a_i$.   These expressions can then be identified with standard expressions in the amplitudes literature.

In this work, we will focus on the scattering amplitudes in the presence of the operator $\op{tr}(B^2)$ placed at points $x_1, \dots, x_n$. This is the quantity we called the form factor integrand. 

There is an operator product expansion
\begin{equation} 
	\op{tr}(B^2)(0) \op{tr}(B^2)(x_1) \dots \op{tr}(B^2)(x_{n-1)} \sim \sum F^i(x_1,\dots, x_{n-1}) \mc{O}_i(0) 
\end{equation}
where $\mc{O}_i$ runs over a basis of local operators in the $4d$ CFT, and if $\mc{O}_i$ has dimension $d$ then $F$ is a rational function of the $x_i$ of degree $d -2n$. It is important to note that all CFTs that come from local theories on twistor space do not have anomalous dimensions of local operators, so that $d$ is an integer.  

Our formula is:
\begin{theorem}\label{mainthm}
The form factor integrand for scattering amplitudes of $n$ positive helicity and $m$ negative helicity conformal primary states has an expansion 
	\begin{equation} 
		\begin{split} 
			\sum F^i(x_1,\dots, x_{n-1})           \Big\langle \mc{O}_i(0) \mathrel{\Big|} J^{a_1} (\mu_1,z_1) & \dots J^{a_n}(\mu_n,z_n)  \\
				&  \til{J}^{b_1}(\mu'_1,z'_1) \dots \til{J}^{b_m}(\mu'_m, z'_m) \Big \rangle. 
		\end{split}
	\end{equation}	
\end{theorem}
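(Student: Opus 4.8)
The plan is to obtain Theorem~\ref{mainthm} as a corollary of the Proposition identifying form factors with conformal-block correlators, combined with the operator product expansion of the $\op{tr}(B^2)$ insertions and the dictionary between chiral-algebra generators and single-particle conformal primaries. The form factor integrand is, by definition, the scattering amplitude computed in the presence of the product $\op{tr}(B^2)(x_1)\cdots\op{tr}(B^2)(x_n)$. The first move is to collapse this product of local operators into a single local operator insertion, after which the Proposition applies verbatim.

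Concretely, I would first use translation invariance of the $4d$ theory to place one of the $\op{tr}(B^2)$ operators at the origin, recording the remaining positions as $x_1,\dots,x_{n-1}$, and then invoke the stated operator product expansion
\[
\op{tr}(B^2)(0)\,\op{tr}(B^2)(x_1)\cdots\op{tr}(B^2)(x_{n-1}) \sim \sum_i F^i(x_1,\dots,x_{n-1})\,\mc{O}_i(0),
\]
which rewrites the product as a sum over a basis $\mc{O}_i$ of local operators with coefficient functions $F^i$. The point that makes this an exact algebraic identity inside correlators, rather than a merely asymptotic relation, is the earlier remark that the CFT descends from an anomaly-free theory on twistor space: all correlators and OPE coefficients are honest rational functions, and local operators carry integer (non-anomalous) dimension, so the degree count ($F^i$ of degree $d-2n$ when $\mc{O}_i$ has dimension $d$) controls the sum and the expansion is genuinely algebraic.

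Next I would use linearity. Both sides of the Proposition are linear in the inserted local operator — the left side because scattering amplitudes are linear in the operator insertion, and the right side because the assignment $\mc{O}\mapsto\langle\mc{O}\mid-\rangle$ sending a local operator to its defining conformal block is linear. Hence inserting the resummed operator $\sum_i F^i\,\mc{O}_i(0)$ yields $\sum_i F^i\,\langle\mc{O}_i(0)\mid-\rangle$. At this stage the external scattering states must be rewritten in terms of the chiral algebra: using the bijection between single-particle conformal primaries and the generators $J^a$ and $\til{J}^a$ (positive- and negative-helicity gluons respectively), together with the momentum-basis generating functions \eqref{eqn:generating_function}, each positive-helicity external leg becomes $J^{a}(\mu,z)$ and each negative-helicity leg becomes $\til{J}^{b}(\mu',z')$. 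Assembling the factors produces exactly the claimed expansion.

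The main obstacle I anticipate is justifying the exactness and compatibility of the OPE step with the block correlators — that one may legitimately resum the product of $\op{tr}(B^2)$ operators term by term \emph{inside} the chiral-algebra correlator, with the full $x$-dependence factorizing cleanly into the coefficients $F^i$ while the $(\mu,z)$-dependence is carried entirely by $\langle\mc{O}_i(0)\mid-\rangle$. This rests on the rationality and absence of anomalous dimensions established for these twistorial CFTs, together with linearity of the block assignment; once those structural facts are granted, the remaining steps (fixing the origin by translation invariance, the degree count for $F^i$, and the helicity/generating-function dictionary) are routine bookkeeping.
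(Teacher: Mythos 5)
Your proposal is correct and follows essentially the same route as the paper: the theorem is derived as an immediate consequence of the Proposition identifying form factors with conformal-block correlators, combined with replacing the product of $\op{tr}(B^2)$ insertions by its operator product expansion $\sum_i F^i\, \mc{O}_i(0)$ and using linearity of the block assignment together with the generator/primary-state dictionary. Your added justification of the OPE's exactness via rationality and integer dimensions is a detail the paper states elsewhere (in the introduction) rather than in the proof itself, but it is the same argument, not a different one.
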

We note that on the right hand side of the formula we find a sum of products of the OPE coefficients $F^i$ and of correlation functions of the chiral algebra.

\subsection{The Parke-Taylor formula}
In the body of the paper we will prove this result carefully. Here, we will give some examples, starting with the case $n = 1$. Then, we are studying the scattering amplitudes of self-dual gauge theory in the presence of the operator $\op{tr}(B^2)$ at the origin.  At tree level, these are the same as MHV amplitudes, given by the Parke-Taylor formula.  

We will check that our formula at tree level reproduces the Parke-Taylor formula. Since we work at tree level we do not need to concern ourselves with the axion field.

First, we find the conformal block corresponding to the operator $\op{tr}(B^2)$.  By considering how conformal blocks transform under the Lorentz group, we find (as we will explain in more detail later)  
\begin{equation} 
	\ip{ \op{tr}(B^2) \middle| \til{J}^a [0,0] (z_1) \til{J}^b[0,0] (z_2) } = K^{ab} (z_1 - z_2)^2.  
\end{equation}
In this conformal block, insertions of any operator $J[i,j]$ or $\til{J}[i,j]$ with $i+j > 0$ give zero, as do insertions of three or more $\til{J}$.  The non-zero correlation functions are those involving two $\til{J}[0,0]$'s and $n$ $J[0,0]$, and they are completely determined by the OPEs \eqref{eqn:OPEs}.  

In this calculation, we will not have any dependence on the spinors $\mu_i$ in the generating functions $J(\mu_i,z_i)$. Thus, we will write $J$, $\til{J}$ for $J[0,0]$ and $\til{J}[0,0]$. 

The three-point correlation function is 
\begin{equation}
	\begin{split} 
		\ip{ \op{tr}(B^2) \middle| \til{J}^a  (z_1) \til{J}^b (z_2) J^c(z_3)  }  =& f^{cb}_d \frac{1}{z_{23} }  \ip{\op{tr}(B^2) \middle| \til{J}^a (z_1) \til{J}^d (z_2) } \\
		&+ f^{ca}_d \frac{1}{z_{13} }  \ip{\op{tr}(B^2) \middle| \til{J}^d (z_1) \til{J}^b (z_2) } \\    
		=& \frac{z_{12}^3}{z_{13} z_{23} } f^{abc} 
	\end{split} 
\end{equation}
which matches the Parke-Taylor formula.  Proceeding by induction, it is not difficult to show that the colour-ordered\footnote{Here we mean that we consider the term where the colour indices are contracted by $\op{tr}(\t_{a_1} \dots \t_{a_n})$. }  correlation function in the chiral algebra is
\begin{equation}
	\begin{split} 
		\ip{ \op{tr}(B^2) \middle|J^{a_1}(z_1) \dots  \til{J}^{a_i}  (z_i)  \dots \til{J}^{a_j}  (z_j) \dots J^{a_n}(z_n)  }  &=   \frac{ z_{ij}^4 } {z_{12} z_{23} \dots z_{n1} }
	\end{split}\label{eqn:PT}	
\end{equation}
again matching the Parke-Taylor amplitude (without the momentum conserving delta function, as we discuss later). 

It is important to note that there are no multi-trace terms in our correlator, just as there are no multi-trace terms in the tree-level amplitude. All terms in our correlator are permutations of the Parke-Taylor amplitude.  This tells us that our amplitude matches exactly with the tree-level MHV gauge theory amplitude. This would not be the case, for instance, if we had a non-zero Kac-Moody level for the $J$ currents, which would lead to multi-trace terms. 

\subsection{Cachazo-Svrcek-Witten formula}
Next, let us consider what happens when we have two copies of $\op{tr}(B^2)$, still working at tree level.  One of the terms in the tree-level OPE is
\begin{equation} 
	\op{tr}(B^2)(0) \op{tr}(B^2)(x) \sim \frac{1}{\norm{x}^2} B^a_{\alpha_1 \beta_1} B^b_{\alpha_2 \beta_2} B^c_{\alpha_3 \beta_3} f_{abc} \eps^{\beta_1 \alpha_2} \eps^{\beta_2 \alpha_3} \eps^{\beta_3 \alpha_1}. 
\end{equation}
We write $\op{tr}(B^3)$ as short hand for the operator on the right hand side, with the understanding that the spinor indices of $B_{\alpha \beta}$ are contracted in the unique Lorentz invariant way. The operator on the right hand side corresponds to the conformal block characterized by
\begin{equation} 
	\ip{\op{tr}(B^3) \middle| \til{J}^a (z_1) \til{J}^b(z_2) \til{J}^c(z_3) } = f^{abc} z_{12} z_{13} z_{23}. \label{eqn:B3}  
\end{equation}
The non-zero correlators are those with three $\til{J}$ insertions and $n$ $J$ insertions. They are determined from the correlator \eqref{eqn:B3} by the poles in the OPEs. 

We find that these correlators reproduce an un-integrated veresion of the Cachazo-Svrcek-Witten \cite{Cachazo:2004kj} formula for NMHV amplitudes.  In the CSW formula, one builds NMHV amplitudes by treating the MHV amplitudes as a vertex in a Feynman diagram, and then connecting these vertices by a propagator. 

Our prescription with the OPE has a similar description.  For example, we have the following formula.   If $V_i(z_i)$ denote the $n$ chiral algebra insertions, states, $3$ of which are $\til{J}$ and $n-3$ are $J$, we have
\begin{multline} 
	\ip{\op{tr}(B^3) \middle| V_1(z_1) \dots V_n(z_n)   } \\ = -\frac{1}{6} \sum \ip{\op{tr}(B^2) \middle| V_{i_1} (z_{i_1}) \dots V_{i_k}(z_{i_k}) \til{J}^a (z) }    \ip{\op{tr}(B^2) \middle|J_a(z)  V_{j_1} (z_{j_1}) \dots V_{j_{n-k}}(z_{j_{n-k} }) }   \label{eqn:NMHV}
	\end{multline}
where $z$ is arbitrary.  The sum on the right hand side is over all ways of distributing the chiral algebra insertions among the correlators.    

Clearly this formula is reminiscent of the CSW formula, as it expresses an NMHV correlator by gluing together MHV correlators.  It may seem at first sight that the CSW propagator (which is a propagator for a scalar field) is missing.  To see this propagator, we should recall that  $\op{tr}(B^3)$ appears as  the coefficient of $\norm{x}^{-2}$ in the OPE of two copies of $\op{tr}(B^2)$.  Correlators with respect to the conformal block $\op{tr}(B^3)$, when multiplied by $\norm{x}^{-2}$, thus contribute to the NMHV integrand.   Since $\norm{x}^{-2}$ is the propagator of a scalar field, equation \eqref{eqn:NMHV} is a close match with the CSW prescription.

Equation \eqref{eqn:NMHV} is proved in the bulk of the paper by an inductive method. The initial case is when $n = 3$, and is the identity
\begin{align*}
	-3	\ip{\op{tr}(B^3) \middle| \til{J}^{a_1}(z_1) \til{J}^{a_2} (z_2) \til{J}^{a_3}(z_3)   } =&  \ip{\op{tr}(B^2) \middle| \til{J}^{a_1}(z_1)  \til{J}^b (z)   }    \ip{\op{tr}(B^2) \middle|J_b(z) \til{J}^{a_2}(z_2) \til{J}^{a_3}(z_3)   }  \\
		+&  \ip{\op{tr}(B^2) \middle| \til{J}^{a_2}(z_2)  \til{J}^b (z)   }    \ip{\op{tr}(B^2) \middle|J_b(z) \til{J}^{a_3}(z_3) \til{J}^{a_1}(z_1)   }  \\
		+&  \ip{\op{tr}(B^2) \middle| \til{J}^{a_3}(z_3)  \til{J}^b (z)   }    \ip{\op{tr}(B^2) \middle|J_b(z) \til{J}^{a_1}(z_1) \til{J}^{a_2}(z_2)   }. 
\end{align*}
which is entirely elementary using the definitions for the OPEs and correlators given above.

We expect that at tree level, our formula for amplitudes is equivalent to an integrand version of the CSW prescription.  Our formula works equally well at loop level, as long as one understands loop corrections in both the chiral algebra and the OPEs in the $4d$ CFT.

\subsection{One loop amplitudes}
Our $4d$ CFT by itself does not have any non-trivial amplitudes; this is true of any local field theory on twistor space.  However, self-dual gauge theory does have non-trivial one-loop amplitudes. The simplest of these is the one-loop $4$-point amplitude with all particles of positive helicity:
\begin{equation} 
	\ip{1 2 3 4 } = \frac{ [12][34] }{\ip{12} \ip{34} } \op{tr}(\t^{a_1} \t^{a_2} \t^{a_3} \t^{a_4} ) + \text{ permutations }, 
\end{equation}
up to a prefactor which we are not concerned with.

In the $4d$ CFT we consider, these amplitudes are not present.  Therefore, as suggested  by Lionel Mason and Atul Sharma,  they must be cancelled by an axion exchange. Because the axion is part of a Green-Schwarz mechanism, this should be a tree-level exchange of axions. 

Working directly with the Lagrangian \eqref{eqn:sample_lagrangian}, we can see the $4$-point one-loop  amplitude as follows.
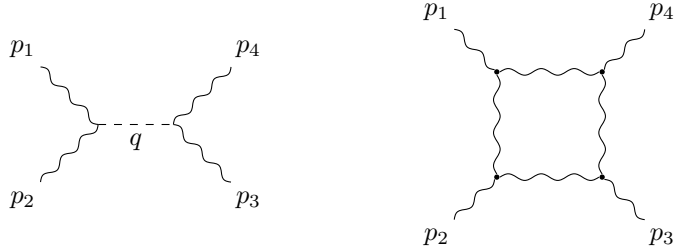
\begin{figure}
		\begin{center}
	\begin{tikzpicture}
	\begin{scope}
		\node(P1) at (-1,1) {$p_1$};
		\node(P2) at (-1,-1) {$p_2$};
		\node(P3) at (2,-1) {$p_3$};
		\node(P4) at (2,1) {$p_4$};
		\draw[decorate,decoration=complete sines] (P1)  -- (0,0); 
		\draw[decorate,decoration=complete sines] (P2) -- (0,0);
		
		\draw[dashed] (0,0) -- (1,0) node[midway, below]{$q$}; 	
		\draw[decorate,decoration=complete sines] (P3) -- (1,0); 
		\draw[decorate,decoration=complete sines] (P4) -- (1,0) ;
	\end{scope}
	\begin{scope}[shift={(6,0)}]
		\node(P1) at (-1.5,1.5) {$p_1$};
		\node(P2) at (-1.5,-1.5) {$p_2$};
		\node(P3) at (1.5,-1.5) {$p_3$};
		\node(P4) at (1.5,1.5) {$p_4$};

		\node[circle,fill=black, inner sep=0pt, minimum size=2pt ](V1) at (-0.7,0.7) {};
		\node [circle,fill=black, inner sep=0pt, minimum size=2pt ] (V2) at (-0.7,-0.7) {};
		\node[circle,fill=black, inner sep=0pt, minimum size=2pt ] (V3) at (0.7,-0.7) {};
		\node[circle,fill=black, inner sep=0pt, minimum size=2pt ](V4) at (0.7,0.7) {};

		\draw[decorate,decoration=complete sines] (P1)  -- (V1);	
		\draw[decorate,decoration=complete sines] (P2)  -- (V2);
		\draw[decorate,decoration=complete sines] (P3)  -- (V3); 
		\draw[decorate,decoration=complete sines] (P4)  -- (V4); 
		\draw[decorate, decoration=complete sines](V1) -- (V2);  
\draw[decorate, decoration=complete sines](V2) -- (V3);  
\draw[decorate, decoration=complete sines](V3) -- (V4);
\draw[decorate, decoration=complete sines](V4) -- (V1);  
	\end{scope}
\end{tikzpicture}
	\end{center}
	\caption{On the left we have the tree-level axion exchange, which by a Green-Schwarz mechanism matches the one-loop four-point amplitude on the right  \label{fig:axion_exchange}}
\end{figure}
In Figure \ref{fig:axion_exchange} we depict the exchange of an axion.  The axion propagator is $\frac{1}{q^4}$, because the Lagrangian \eqref{eqn:sample_lagrangian} has a fourth-order kinetic term for the axion. The axion couples to the gauge field by $F(A)^2$, which on-shell is the same as $F(A)_{+}^2$.  In spinor-helicity notation, the coupling of the axion to the gauge field can be written $[i,j]^2$. 

Therefore, the amplitude for the diagram \eqref{fig:axion_exchange} is
\begin{equation} 
	\frac{[12]^2[34]^2} { q^4 }. 
\end{equation}
Momentum conservation at the vertices, together with the fact that the incoming particles are massless, tells us that  $q^2 = 2 p_1 \cdot p_2 = 2 p_3 \cdot p_4$.  Therefore the amplitude (including the colour factors) is
\begin{equation} 
	\frac{[12][34]} { \ip{12} \ip{34}  } \op{tr}(\t_{a_1} \t_{a_2}) \op{tr}(\t_{a_3} \t_{a_4} ).  
\end{equation}
The expression $\frac{[12][34]} { \ip{12} \ip{34} }$ is totally symmetric\footnote{This is a consequence of conservation of momentum, which in spinor-helicity notation takes the form $\sum_{j}  \ip{ij} [jk] = 0$.  With four particles, this means $[12]\ip{13} = [42]\ip{43}$, so that $[12][34] \ip{13}\ip{24} = [13][24] \ip{12}\ip{34}$.  }. The Green-Schwarz mechanism on twistor space works precisely when the 
\begin{equation} 
	\op{tr}(\t_{(a_{1}} \t_{a_2} ) \op{tr}(\t_{a_3} \t_{a_{4})})   =\lambda_\g^2 \op{Tr}(\t_{(a_{1}} \t_{a_2} \t_{a_3} \t_{a_{4})})  
	\label{eqn:lieidentity} 
\end{equation}
where on both sides we have symmetrized the colour indices.  We conclude that the amplitude for the diagram \eqref{fig:axion_exchange} is proportional to
\begin{equation} 
	\frac{[12][34]} { \ip{12} \ip{34}  } \op{Tr}(\t_{a_1} \t_{a_2}\t_{a_3} \t_{a_4} )  
\end{equation}
which is the correct one-loop amplitude. 

How can we see this from the chiral algebra perspective?  The chiral algebra construction only works in the presence of the axion field, as it requires the theory on twistor space to be anomaly free. However, we are free to add local operators to the $4d$ theory, as we did when moving from self-dual gauge theory to the integrand for Yang-Mills theory by adding on $\op{tr}(B^2)$. 

To see the one-loop all $+$ scattering amplitudes we will need to add a local operator which has the effect of decoupling the axion field.  A first guess might be to try to add $\rho F(A)^2$. This doesn't work, however, as only the derivatives of $\rho$ -- and not $\rho$ itself -- are really part of the $4d$ theory ($\rho$ is a periodic scalar).  

What does work is to introduce the operator $(\tr\rho )^2$. If we add this term to the Lagrangian, then with the appropriate coefficient it will cancel the kinetic term of the axion field.  Introducing it as a local operator will have much the same effect: scattering processes in the presence of the operator $(\tr\rho)^2$ will cancel those processes which have a single axion exchange, as long as the sum of the external momenta vanishes.

We conclude that the one-loop all $+$ scattering amplitudes should be chiral algebra correlators using the conformal block corresponding to $(\tr\rho)^2$.  To determine this conformal block,  we will first find the conformal block corresponding to $\tr \rho$.  This is a Lorentz invariant conformal block which involves one axion field and no other fields.  It must pair with an operator in the chiral algebra which is of spin $0$, dimension $-2$,  and invariant under $SU(2)_+$.  Looking at Table \ref{table:chiralalgebra}, we see that the only such operator is $F[0,0]$, so that 
\begin{equation} 
	\ip{\tr \rho \middle| F[0,0](z) } = C 
\end{equation}
for some non-zero constant $C$. (As usual, other correlation functions in the presence of this conformal block are determined from this identity by the OPE).  Since in this section we are only computing the amplitude up to an overall prefactor, we will set $C = 1$.  

Similarly, we must have
\begin{equation} 
	\ip{(\tr \rho)^2 \middle| F[0,0](z_1) F[0,0](z_2)} = 1. 
\end{equation}
Let us now use the OPEs in \eqref{eqn:axion_opes} to derive the amplitude.  Since we are interested in the four-point all $+$ amplitude, we need to consider OPEs where four $J$'s become two $F[0,0]$'s.  The only relevant OPE is
\begin{equation} 
	J^a[1,0] (z_1) J^b[0,1](z_2)  =\frac{1}{z_{12}} F[0,0] \op{tr}(\t^a \t^b).  
\end{equation}
 $J[1,0]$, $J[0,1]$ form a doublet under $SU(2)_+$.   

 It is convenient to arrange them into a generating function in terms of an auxiliary spinor $\mu$, as in equation \eqref{eqn:generating_function}: 
\begin{equation} 
	J[1](z,\mu) = J[1,0](z)\mu^{\dot 1}  +  J[0,1](z) \mu^{\dot 2} .
\end{equation}
 We then identify $[ij] = \eps_{\dot\alpha \dot\beta} \mu^{\dot\alpha}_i v^{\dot\beta}_j$, and as before $\ip{ij} = z_i - z_j$. 

In this notation, we find 
\begin{equation} 
	J^a[1] (z_1, \mu_1) J^b[1](z_2, \mu_2)  = \frac{[12]}{\ip{12}} F[0,0] \op{tr}(\t^a \t^b)   \what{\lambda}_{\g}  
\end{equation}
From this, it is immediate that
\begin{multline} 
	\ip{(\tr \rho)^2 \mid J^{a_1}[1](z_1,v^\alpha_1)  J^{a_2}[1](z_2,v^\alpha_2)  J^{a_3}[1](z_3,v^\alpha_3)  J^{a_4}[1](z_4,v^\alpha_4) } \\
=   (\what{\lambda}_{\g})^2  \frac{[12][34]} {\ip{12} \ip{34} } \op{tr}(\t_{a_1} \t_{a_2} ) \op{tr}(\t_{a_3} \t_{a_4} )  + \text{ permutations }. \end{multline}	

Since $[12][34]/\ip{12}\ip{34}$ is totally symmetric,  using equation \eqref{eqn:lieidentity}, we can rewrite this as
\begin{equation} 
	\ip{(\tr \rho)^2 \mid J^{a_1}[1](z_1,\zbar_1)  J^{a_2}[1](z_2,\zbar_2)  J^{a_3}[1](z_3,\zbar_3)  J^{a_4}[1](z_4,\zbar_4) } =  \frac{1} { (2 \pi \i)^{3} 12 }   \frac{[12][34]}{\ip{12}\ip{34}} \op{Tr}(\t^{(a_1} \dots \t^{a_4)}) 
\end{equation}
which is the correct amplitude, up to normalization.  

More generally, in \S \ref{s:oneloop}, we show that one loop amplitudes for $n$ positive helicity gluons match\footnote{We do not attempt to match the overall normalization, which can be factored into the normalization of the conformal  block $(\tr \rho)^2$. } chiral algebra correlators. For this, we use the generating function of equation \eqref{eqn:generating_function}, 
\begin{equation} 
	J(\mu,z) = \sum \frac{(\mu^{\dot{1}})^r (\mu^{\dot{2}})^s}{r! s!} J[r,s](z) 
\end{equation}
We have 
\begin{multline} 
	\ip{ (\tr \rho)^2 \mid J_{a_1}(\mu_1, z_1) \cdots J_{a_n}(\mu_n, z_n) } \\
	=   \frac{1}{n} \sum_{\sigma \in S_n}  \frac{   \sum_{1 \le i_1 < i_2 < i_3 < i_4 \le n} \ip{\sigma_{i_1} \sigma_{i_2}} [\sigma_{i_2} \sigma_{i_3} ] \ip{\sigma_{i_3} \sigma_{i_4} } [\sigma_{i_4} \sigma_{i_1} ]       }{\ip{\sigma_1 \sigma_2} \ip{\sigma_2 \sigma_3}  \dots \ip{\sigma_n \sigma_1} } \op{Tr}(\t_{a_{\sigma_1}} \dots \t_{a_{\sigma_n}} )
\end{multline}
matching, up to normalization,  the one-loop amplitudes computed in \cite{Bern:1993qk} and \cite{Mahlon:1993fe}.  

\subsection{WZW correlators as scattering amplitudes in the presence of an axion}
We have expressed many amplitudes of gauge theory in terms of correlators of a chiral algebra which includes the Kac-Moody algebra at level zero. One can ask, is it possible to modify the gauge theory so that the Kac-Moody algebra acquires a level? 

We will see that we can do this by consider gauge theory in the presence of an axion field with a logarithmic profile.  This means we add the term
\begin{equation} 
	\int \log \norm{x}^k F(A)^2 
\end{equation}
to the Yang-Mills Lagrangian.

We study the tree-level scattering amplitudes where all incoming particles are of positive helicity.  Without the axion, this amplitude vanishes. However, in the presence of the axion, we find it is non-zero and is equal to the correlators of the currents in chiral WZW model at level $k$: 
\begin{equation} 
	\ip{ J^{a_1}(z_1) \dots J^{a_n} (z_n) }_{WZW_k} = \text{ scattering amplitudes of } n \text{ gluons} \label{eqn:WZW_amplitude} 
\end{equation}
As in the Parke-Taylor formula, we can rewrite the left hand side using the spinor-helicity formalism.  We trivialize the canonical bundle of $\CP^1$ using the meromorphic $1$-form $\d z$. This allows us to view the left hand side of \eqref{eqn:WZW_amplitude} as a rational function in the $n$ variables $z_i$.  Since it is invariant under an overall translation, it can be rewritten as an expression in $\ip{ij} = z_{ij}$.  

For instance, the colour-ordered single-trace $WZW_k$ correlators are given by the Parke-Taylor denominator  
\begin{equation} 
	\ip{J^{a_1} (z_1) \dots J^{a_n} (z_n) }_{WZW_k}  = - k \op{tr}(\t^{a_1} \dots \t^{a_n} ) \frac{1}{z_{12} z_{23} \dots z_{(n-1)n} z_{n1} } +  \dots 
\end{equation}
where $\dots$ indicates terms with a different colour ordering, as well as terms of order $k^2$ and higher that are not single-trace.

In our identity \eqref{eqn:WZW_amplitude} we include all terms on the right hand side, including multi-trace terms.  Multi-trace terms can appear in the gauge theory scattering amplitude from diagrams where the background axion field appears several times, in disconnected tree-level diagrams.   (In our identification between amplitudes and correlators, it is most natural to include disconnected diagrams; it just so happened that these did not play a role in our other computations).

We also expect, but do not prove, that scattering amplitudes of $k$ states of positive conformal dimension with $n$ states of negative conformal dimension are given by WZW correlators in the presence of $k$ modules.

\subsection{Connections to celestial holography}

As advertised, our program has many natural connections to the celestial holography program. As we work towards derivations of our main result Thm \ref{mainthm}, we explain these connections from the point of view of 6d holomorphic theories on twistor space. Each such 6d theory can be viewed as the parent theory of both the 4d CFTs described above and, via Koszul duality, the 2d chiral algebra. 

Let us briefly recall the appearance of 2d chiral algebras in celestial holography. The chiral algebras in that context capture asymptotic symmetries in flat spacetime. Although the story of asymptotic symmetries begins in the usual momentum space basis (see e.g. \cite{Strominger:2017zoo}), we will be largely interested in amplitudes of massless states expressed in the celestial, or conformal, basis. To pass to the conformal basis from the momentum basis, one performs a Mellin transform for massless\footnote{There is also a transform for massive states, which we will not consider further in the present paper.} momentum eigenstates $\mathcal{O}_i$. From this procedure, one can obtain a normalizable basis of 2d conformal primaries in which the dilatation operator is diagonal. Restricting to the principle series for massless operators $\Delta \in 1 + i \lambda, \lambda \in \mathbb{R}$ guarantees that the operators are invertible and normalizable with respect to the Klein-Gordon inner product \cite{Pasterski:2017kqt}. The 4d scattering amplitudes expressed in terms of the Mellin-transformed variables which diagonalize boosts are referred to as \textit{celestial amplitudes}. Recent reviews on aspects on celestial amplitudes include \cite{Raclariu:2021zjz, Pasterski:2021rjz}.

Recall that the Mellin transform and its inverse for massless states are 
\begin{align}
    \hat{\mathcal{O}}^{\pm}(\Delta, z, \bar{z})&= \int_0^{\infty}d\omega \omega^{\Delta-1} \mathcal{O}(\pm \omega, z, \bar{z}) \\
    \mathcal{O}(\pm |\omega|, z, \bar{z}) &= \int_{1 - i \infty}^{1 + i \infty}{d \Delta \over 2 \pi i }|\omega|^{-\Delta}\hat{\mathcal{O}}^{\pm}(\Delta, z, \bar{z})
\end{align}where the signs $\pm$ denote in, respectively out, states. We will parameterize in/out null momenta in the usual celestial presentation via 
\begin{equation}
    p(z, \bar{z}, \pm \omega) = \pm \omega (1 + |z|^2, 2 \textrm{Re}(z), 2 \textrm{Im}(z), 1 - |z|^2)
\end{equation} where $z, \bar{z}$ are coordinates on the celestial sphere\footnote{We also remind the reader, as in the previous subsection, that null momenta can be determined by a choice of two-component complex spinor up to scale by $p_{\alpha \dot\alpha} =\mu_{\dot\alpha}\lambda_{\alpha}$, so that the direction of null vector is given by $\lambda$, $\mu$ up to scale, or equivalently a point $z$ on the celestial sphere $\mathbb{CP}^1$. In affine coordinates, $\lambda = (1, \  z)$, $\langle \lambda_1  \lambda_2 \rangle = z_1 - z_2$, and so on.}.

The Mellin transformed scattering amplitude transforms as a 2d conformal correlation function,  and the Mellin transformed operators correspond to insertions of local operators (for massless states). The Lorentz symmetry can be interpreted in this basis, for example, as a global conformal symmetry $SL(2, \mathbb{C})$. 

We will be most interested in the ``conformally soft'' symmetries of celestial amplitudes, given by currents satisfying $h \rightarrow 0$ (for negative helicity states) or $\bar{h} \rightarrow 0$ (for positive helicity states). In the momentum space basis, soft theorems are associated to conservation laws corresponding to large gauge symmetries. One can check by direct computation that the $\Delta \rightarrow 1$ \footnote{We recall that $h = {1 \over 2}(\Delta + J), \bar{h} = {1 \over 2}(\Delta - J)$ in terms of the conformal dimension and spin.} limit of a Mellin operator of positive helicity coincides with the $\omega \rightarrow 0$ limit of the momentum space operator. Similar limits can be taken to extract the subleading soft factors, assuming the insertion of the operator in the amplitude falls off sufficiently fast with energy; for example the subleading soft photon in the celestial basis corresponds to a $\Delta \rightarrow 0$ limit. Taking similar $\Delta \rightarrow -n$ limits for all $n=-1, 1, 0, \ldots$ in an expansion $\mathcal{O}^+ = \sum_k \omega^k \mathcal{O}^+_k$ of the positive helicity states leads to an infinite tower of conformally soft currents, which yield conformally soft constraints on amplitudes \cite{Pate:2019lpp}. 

It is the algebra of this infinite tower of currents \cite{Guevara:2021abz, Strominger:2021lvk, Himwich:2021dau} that we revisit from a twisted holography point of view, and compute using Koszul duality. It is not obvious that there should be a chiral algebra hidden in scattering amplitudes of massless particles.  For self-dual theories, 
as we will see, this is neatly explained by twistor theory. However, the symmetries are known to persist beyond the self-dual limit, at least at tree-level and for certain one-loop amplitudes \cite{Ball:2021tmb}.

Let us recall the celestial holography results on 2-to-1 scattering processes of positive-helicity massless particles at tree-level, which  possess a beautiful chiral algebra structure. Consider the situation where $z, \bar{z}$ are independent real coordinates, so that four dimensional spacetime becomes signature $(2, 2)$ and the Lorentz group becomes $SL(2, \mathbb{R})_L \times SL(2, \mathbb{R})_R$. The OPEs of celestial operators at tree-level were studied in \cite{Pate:2019lpp}, where it was shown that poles in the OPEs of operators (say, for $z_{12} \rightarrow 0, \bar{z}_{12}$ fixed) on the celestial sphere are the Mellin transforms of collinear limits\footnote{For massless particles that couple via a three-point vertex, collinear limits arise when the particles' momenta become parallel, giving a ${1 \over p_1 \cdot p_2}$ pole.} of momentum space operators, and hence can be computed using Feynman diagrammatics. Alternatively, taking an ansatz for the form of an OPE in a holomorphic collinear expansion, \cite{Pate:2019lpp} showed that the OPE coefficients can also be fixed by application of leading and sub(sub)leading soft theorems and global conformal (i.e. Lorentz) invariance. A generalization of these OPE computations, whose results we will reinterpret in a twistorial language, was developed in \cite{Guevara:2021abz}, and reorganized in \cite{Strominger:2021lvk}. There, the $SL(2, \mathbb{R})_R$ (which acts on the $\bar{z}$ coordinates) descendants of the primary operators in the OPE were resummed using an OPE block. The infinite tower of conformally soft operators were further expanded in powers of $\bar{z}$. Studying the algebra of holomorphic modes resulting from this expansion (i.e. the holomorphic residues of the resulting $\bar{z}$-Laurent series) naturally produces the Kac-Moody algebra of area-preserving symmetries of the plane, i.e. the loop algebra of $w_{1 + \infty}$. Restricting to those modes which form representations of $SL(2, \mathbb{R})_R$ produces the corresponding wedge subalgebra. 

We will study these chiral algebras (more precisely, enlargements of them which include axion contributions and states of both helicities) from a twistorial point of view. It is not a surprise that twistor theory places a role in such symmetries. In Penrose's \cite{Penrose:1976jq}   non-linear graviton construction, it was found that solutions to the self-dual Einstein equation can be built on twistor space by a gluing construction, where the gluing data is an element of \emph{precisely} the same Lie algebra as found in the recent work \cite{Strominger:2021lvk}.  For gauge theory, the same thing holds \cite{Ward:1977ta} but where the gluing data in the Penrose-Ward correspondence is the Lie algebra of celestial symmetries for gluons found in \cite{Guevara:2021abz}. As shown in Proposition 3.5 of \cite{DunajskiMason} and recently emphasized, and given an ambitwistor string interpretation, in \cite{Adamo:2021lrv}, twistor space makes manifest that the gravitational celestial algebra studied in \cite{Guevara:2021abz, Strominger:2021lvk} is the loop algebra $Lw_{1 + \infty}$ of Poisson diffeomorphisms of the plane (with similar results for gauge theory). Indeed, twistor space neatly and geometrically captures, via the usual Penrose-Ward correspondence, the physics of the self-dual sectors of gauge theory and gravity.

As we discussed already, our main result is a derivation of form factors and scattering amplitudes from chiral algebra correlators.  
We also attempt to address some other questions in celestial holography.  We use a twistor construction to build a bulk $3d$ theory, whose boundary algebra is the celestial chiral algebra.  We also discuss how the Koszul duality perscription suggests that the chiral algebras must be quantized, although we do not fully understand the quantization. 


\subsection{Outline}

Our plan for the rest of this paper is as follows. In section \S \ref{s:twistor} we will review some basic aspects of the twistor correspondence. In \S \ref{s:anomalies} we will review the appearance of one-loop anomalies in holomorphic theories on twistor space, their cancellation in certain theories, and briefly discuss the ramifications for chiral algebras. In \S \ref{s:gauge} we will explain how the celestial symmetry algebras are realized as gauge transformations of certain holomorphic theories on twistor space. In \S \ref{s:states}, we illustrate how to obtain the 4d states of negative conformal dimension, which correspond to chiral algebra generators, from twistor space using the free scalar field theory as an example. In \S \ref{s:3d}, we present an alternative way to understand the 2d chiral algebra, as the boundary of a 3d theory holomorphic-topologically twisted theory. We discuss various features of this bulk-boundary system. In \S \ref{s:Koszul}, we explain an alternative, efficient way to obtain the celestial chiral algebras using inspiration from twisted holography, via Koszul duality. In \S \ref{s:blocks} we derive our main result, Theorem \ref{mainthm}, and illustrate it by reproducing the Parke-Taylor formula for tree-level MHV amplitudes. In \S \ref{s:CSW} we derive the formula for the (unintegrated) CSW formula in terms of chiral algebra data. In \S \ref{s:oneloop}, we derive the all-plus one-loop amplitudes. In \S \ref{s:KacMoody}, we derive the result that tree-level amplitudes in gauge theory in the presence of an axion field with nontrivial profile are captured by Kac-Moody correlators. We conclude with brief discussions of works in progress and open questions in \S \ref{s:conclusions}.

\section{Recollections on the twistor correspondence}\label{s:twistor}

Twistor space $\PT$ is the total space of the bundle $\Oo(1) \oplus \Oo(1)$ over $\CP^1$.  Since the pioneering work of Penrose \cite{Penrose:1977in} (see \cite{Adamo} for a pedagogical review), it has been known that \emph{holomorphic} field theories on twistor space become \emph{massless} field theories on real space, either Euclidean or Lorentzian\footnote{Twistor space is signature-agnostic, and gives rise to theories which can be analytically continued to any signature.}.

Here we will recall some aspects of the twistor correspondence. Let us give twistor space coordinates $z,v_1,v_2$ where $z$ is a coordinate on $\CP^1$, and $v_i$ are coordinates on the $\Oo(1)$ fibres.  We choose our coordinates so that $v_i$ have poles at $z = \infty$.

Twistor space is closely connected with analytically-continued space-time $\C^4$. Let us give $\C^4$ complex coordinates $u_i$ with complex metric $\sum \d u_i^2$.  There is a bijection between $\C^4$ and complex lines $\CP^1 \subset \PT$ (which are embedded linearly).

A point $(u_1,\dots,u_4) \in \C^4$ corresponds to the complex line in $\PT$ cut out by the equations 
\begin{equation} 
	\begin{split}
		v_1 &= u_1 + \i u_2 + z(u_3 - \i u_4)\\
		v_2 &= u_3 + \i u_4 - z(u_1 - \i u_2).
	\end{split}
\end{equation}
We refer to this curve as $\CP^1_u$.

Two curves $\CP^1_x$, $\CP^1_y$ for $x,y \in \C^4$ intersect if and only if $x,y$ are null-separated, that is, $\norm{x-y}^2 = 0$. In particular, if we work in Euclidean signature by taking all the $u_i$ to be real, then the curves $\CP^1_u$ are all disjoint. The curves $\CP^1_u$, for $u \in \R^4$, foliate $\PT$, giving an isomorphism of real manifolds
\begin{equation} 
	\PT \iso \R^4 \times \CP^1. 
\end{equation}

\subsection{The free scalar field theory}
The very simplest example of the twistor correspondence relates the free scalar field theory on $\mathbb{R}^4$ with an Abelian gauge theory on twistor space.  The field of the gauge theory is\footnote{In general, the Penrose transform is a bijection between zero rest mass fields of helicity $h$ on analytically continued spacetime, and the Dolbeault cohomology group $H^{0, 1}(\PT, \mc{O}(2h-2))$. Here, $2h-2$ can be viewed as the weight of the field under the homogeneous scaling symmetry of twistor space, viewed as an open subset of $\mathbb{CP}^3$. So, we are studying $(0, 1)$-forms on twistor space with fixed weights.}
\begin{equation} 
	\mc{A} \in \Omega^{0,1}(\PT, \Oo(-2)). 
\end{equation}
The Lagrangian is
\begin{equation} 
	\int_{\PT} \mc{A} \dbar \mc{A} 
\end{equation}
which makes sense because the canonical line bundle $K_{\PT} = \Oo(-4)$. This field is subject to gauge transformations
\begin{equation} 
	\mc{A} \mapsto \mc{A} + \dbar \chi 
\end{equation}
where 
\begin{equation} 
	\chi \in \Omega^{0,0}(\PT, \Oo(-2)).	 
\end{equation}
Gauge-equivalence classes of on-shell fields on twistor space are the Dolbeault cohomology group
\begin{equation} 
	H^1(\PT, \Oo(-2)). 
\end{equation}
Penrose \cite{Penrose:1977in} shows that this cohomology group is isomorphic to the space of entire analytic functions
\begin{equation} 
	\phi : \C^4 \to \C 
\end{equation}
which are harmonic:
\begin{equation} 
	\partial_{x_i} \partial_{x_i} \phi = 0. 
\end{equation}
Such a $\phi$ of course restricts to a solution of the free-field equations in \emph{any} signature.  

We build the field $\phi$ from the gauge-field $\mc{A}$ on twistor space by
\begin{equation} 
	\phi(x) = \int_{\CP^1_x} \mc{A} .
\end{equation} 
The expression on the right hand side makes sense, as $\mc{A}$ is twisted by $\Oo(-2)$ which is the canonical bundle of $\CP^1_x$. This measurement of $\mc{A}$ is gauge invariant.

\subsection{Self-dual Yang-Mills theory}
The next example of the twistor correspondence is the Penrose-Ward correspondence.  This relates self-dual Yang-Mills on $\R^4$ with holomorphic BF theory on twistor space.

Fix a simple Lie algebra $\g$. Self-dual Yang-Mills theory has fields
\begin{equation} 
	\begin{split} 
		A &\in \Omega^1(\R^4,\g) \\
		B & \in \Omega^2_-(\R^4,\g)
	\end{split}
\end{equation}
with the Chalmers-Siegel action \cite{CS}
\begin{equation} 
	\int \op{Tr} (B \wedge F(A)_-). 
\end{equation}
If we add on the term 
\begin{equation} 
	g_{YM} \int \op{Tr}(B^2), 
\end{equation}
the theory becomes ordinary Yang-Mills theory in the first-order formulation\footnote{With a certain value of the $\theta$-angle.}.

Self-dual Yang-Mills theory on twistor space \cite{Boels:2006ir, Boels:2007qn} is represented by the gauge theory with fields
\begin{equation} 
	\begin{split} 
		\mc{A} & \in \Omega^{0,1}(\PT,\g)\\
		\mc{B} & \in \Omega^{3,1}(\PT,\g)
	\end{split}
\end{equation}
with Lagrangian
\begin{equation} 
	\int_{\PT} \op{Tr} ( \mc{B} F^{0,2}(\mc{A})). 
\end{equation}
The fields $\mc{A},\mc{B}$ are subject to two kinds of gauge transformations, with generators
\begin{equation} 
	\begin{split} 
		\chi & \in \Omega^{0,0}(\PT,\g) \\
		\nu & \in \Omega^{3,0}(\PT,\g)
	\end{split}
\end{equation}
where
\begin{equation} 
	\begin{split} 
		\delta \mc{A} &= \dbar \chi + [\mc{A},\chi] \\
		\delta \mc{B} &= \dbar \nu + [\mc{B},\chi].
	\end{split}
\end{equation}

\subsection{The non-linear graviton construction}
Finally, we will discuss a more complicated twistor transform, which relates the self-dual limit of Einstein gravity with a certain BF theory on twistor space \cite{Mason:2007ct,Sharma:2021pkl}.

Let us introduce the holomorphic Poisson tensor
\begin{equation} 
	\pi = \partial_{v_1} \partial_{v_2} 
\end{equation}
on twistor space. This vanishes to order $2$ at $z = \infty$, and so can be thought of as a Poisson tensor twisted by $\Oo(-2)$.

The fields of self-dual gravity on twistor space consist of a field
\begin{equation} 
	\mc{H} \in \Omega^{0,1}(\PT, \Oo(2)) 
\end{equation}
and a Lagrange multiplier field
\begin{equation} 
	\beta \in \Omega^{3,1}(\PT, \Oo(-2)). 
\end{equation}
We can think of $\mc{H}$ has having a pole of order $2$ at $z = \infty$, and $\beta$ as being a $(3,1)$ form with a zero of order two at $z = \infty$.  

The Lagrangian is
\begin{equation}
	\begin{split}
		& \int \beta \dbar \mc{H} + \tfrac{1}{2} \beta \{\mc{H}, \mc{H}\} \\
		&= \int \beta \dbar \mc{H} + \tfrac{1}{2} \beta \eps_{ij} \partial_{v_i}\mc{H} \partial_{v_j} \mc{H}. 
	\end{split} 
\end{equation}
The integrand in both terms is a $(3,3)$ form with no poles.  For the kinetic term, the order two zero in $\beta$ cancels the order two pole in $\mc{H}$, and for the interaction, the order four pole coming from the two copies of $\mc{H}$ is canceled by the order $2$ zero from $\beta$ and the order two zero from the Poisson tensor $\partial_{v_1} \partial_{v_2}$.

Just like holomorphic BF theory, this theory has two kinds of gauge transformations, generated by
\begin{equation} 
	\begin{split} 
		\chi & \in \Omega^{0,0}(\PT,\Oo(2)) \\
		\nu & \in \Omega^{3,0}(\PT,\Oo(-2))
	\end{split}
\end{equation}
where
\begin{equation} 
	\begin{split} 
		\delta \mc{H} &= \dbar \chi + \eps_{ij} \partial_{v_i} \mc{H} \partial_{v_j} \chi \\
		\delta \beta &= \dbar \nu + \eps_{ij} \partial_{v_i} \beta \partial_{v_j} \chi 
	\end{split}
\end{equation}
The gauge transformations commute as
\begin{equation} 
	\begin{split} 
		[\chi_1,\chi_2] &= \eps_{ij} \partial_{v_i} \chi_1\partial_{v_j} \chi_2 \\
		[\chi,\nu] &= \eps_{ij} \partial_{v_i} \chi \partial_{v_j} \nu \\
		[\nu_1,\nu_2] &= 0.
	\end{split}
\end{equation}

\section{Anomalies on twistor space}\label{s:anomalies}
Before we proceed, we should remark that holomorphic field theories on twistor space tend to suffer from anomalies \cite{Costello:2019jsy,Costello:2021bah}. One can not build the chiral algebra at the quantum level (at least using our methods) unless the theory is anomaly free. 

The anomalies for the Poisson BF theory related to self-dual gravity are currently not understood.   For the holomorphic BF theory giving self-dual Yang-Mills, the twistor anomaly is well understood and in fact easy to calculate using the index theorem.  There is an anomaly associated to the box diagram in Figure \ref{fig:anomaly}.   
\begin{figure}
	\begin{center}
		\includegraphics[scale=0.25]{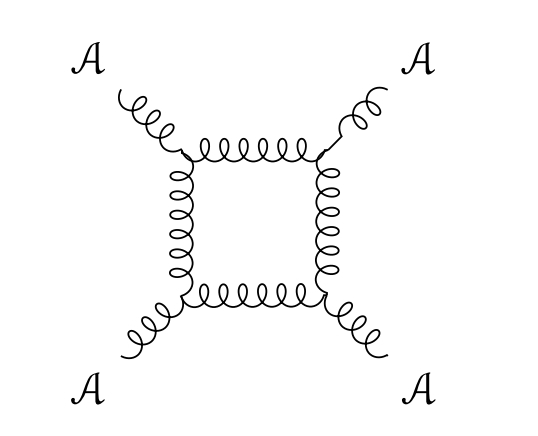}
	\end{center}
	\caption{\label{fig:anomaly} The anomaly in holomorphic BF theory  } 
\end{figure} 
In \cite{Costello:2021bah} it is shown that this anomaly can be canceled by a Green-Schwartz mechanism in certain cases.  For this to work, we need the gauge Lie algebra $\g$ to be $\mf{sl}_2$, $\mf{sl}_3$, $\mf{so}(8)$ or one of the exceptional algebras.  (We can also include matter and cancel the anomaly for $\mf{so}(N_c)$ with $N_f = N_c - 8$, or $\mf{sl}(N_c)$ with $N_f = N_c$, though we will not discuss these cases in this paper).

In these cases, we introduce a new field on twistor space
\begin{equation} 
	\eta \in \Omega^{2,1}(\PT) 
\end{equation}
constrained so that $\partial \eta = 0$, and subject to gauge transformations $\eta \mapsto \eta + \dbar \gamma$ for $\gamma \in \Omega^{2,0}(\PT)$.  The Lagrangian for $\eta$ is the free limit of the Kodaira-Spencer Lagrangian \cite{BCOV}:
\begin{equation} 
	\tfrac{1}{2} \int \dbar \eta \partial^{-1} \eta. 
\end{equation}
This is then coupled to the gauge field $\mc{A}$ by 
\begin{equation} 
	\frac{\lambda_{\g}}{4 (2 \pi \i)^{3/2} \sqrt{3} }\int \eta \mc{A} \partial \mc{A} 	 
\end{equation}
where $\lambda_{\g}$ is a constant such that
\begin{equation} 
	\op{Tr}(X^4) = \lambda_{\g}^2 \op{tr}(X^2)^2 
\end{equation}
where $\op{Tr}$ means the trace in the adjoint and $\op{tr}$ that in the fundamental. 

In four dimensions, the field $\eta$ introduces a new axion field $\rho$, which couples to self-dual Yang-Mills by
\begin{equation} 
	 \half \int (\Lap \rho)^2  +  \lambda_{\g} \frac{1}{8 \pi  \sqrt{3} }\int  \d \rho CS(A) \label{eqn_sdym_axion}
\end{equation}
where $CS(A)$ is the Chern-Simons three-form.  The field $\rho$ is related to $\eta$ by
\begin{equation} 
	 \rho(x) = 	\frac{\i }{ \sqrt{8 \pi \i} }     \int_{\CP^1_x} \partial^{-1} \eta 
\end{equation}
where we are integrating over the $\CP^1$ corresponding to $x \in \R^4$. 

It is best to understand the coupling between $\eta$ and $\mc{A}$ in the BV formalism, where $\mc{A}$ gets extended to a field in $\Omega^{0,\ast}(\PT,\g)[1]$, and $\eta$ to one in $\Omega^{2,\ast}(\PT, \g)[1]$. The symbol $[1]$ indicates a shift in ghost number, so that fields in Dolbeault degree $i$ are in ghost number $1-i$. Then, the interaction takes the same form,
\begin{equation} 
	\frac{\lambda_{\g}}{4 (2 \pi \i)^{3/2} \sqrt{3} }\int \boldsymbol{\eta} \boldsymbol{\mc{A}} \partial\boldsymbol{ \mc{A}} 	 \label{eqn:bv_action} 
\end{equation}
where $\boldsymbol{\eta}$ and $\boldsymbol{\mc{A}}$ refer to the fields including all Dolbeault degrees.  By decomposing this action into components, we can read off the non-linear terms in the gauge transformations. These come from components where one of the fields is in Dolbeault degree zero, one in Dolbeault degree $1$, and one in Dolbeault degree $2$ (and so is an anti-field).  Noting that the Dolbeault degree $2$ component of $\boldsymbol{\mc{A}}$ is the anti-field to $\mc{B}$, we find the following extra gauge transformations:
\begin{equation}
	\begin{split} 
		\delta_{\chi} \eta &=  \frac{\lambda_{\g}}{2 (2 \pi \i)^{3/2} \sqrt{3} } \partial \chi \partial \mc{A}  	\\
		\delta_{\gamma} \mc{B} &= \frac{\lambda_{\g}}{2 (2 \pi \i)^{3/2} \sqrt{3} }   \gamma  \partial \mc{A} .	
	\end{split}	
\end{equation}
Finally, from the term in equation \eqref{eqn:bv_action} where two fields are in Dolbeault degree $0$ and one is in Dolbeault degree $3$, we find two extra terms in the commutators of the gauge transformations. The first is where a $\gamma \in \Omega^{2,0}$ and a $\chi \in \Omega^{0,0}$ commute to become a $\nu \in \Omega^{3,0}$: 
\begin{equation} 	
	[\gamma,\chi] =    \frac{\lambda_{\g}}{2 (2 \pi \i)^{3/2} \sqrt{3} } \gamma \partial \chi \in \Omega^{3,0}	
\end{equation}
The second is where two $\chi$'s commute to become a $\gamma$:
\begin{equation} 	
	[\chi_1,\chi_2] =    \frac{\lambda_{\g}}{2 (2 \pi \i)^{3/2} \sqrt{3} } \partial \chi_1 \wedge \partial \chi_2  \in \Omega^{2,0}.	
\end{equation}

\section { Celestial symmetry algebras are gauge transformations on twistor space}\label{s:gauge}
As we have reviewed above, the twistor transform relates self-dual gauge theory on $\R^4$ with holomorphic BF theory on twistor space, and self-dual gravity on $\R^4$ with a certain Poisson BF theory on twistor space. In this section, we will see that gauge transformations of the theories on twistor space match the Lie algebra of modes of the chiral algebras found in \cite{Guevara:2021abz,  Strominger:2021lvk}.

\subsection{Chiral algebra for gauge theory}
Let us first recall the chiral algebra of positive helicity gluons in \cite{Guevara:2021abz}.  As briefly reviewed in the introduction, one considers the holomorphic collinear OPE, reorganized to incorporate $SL(2, \mathbb{R})_R$ descendants of any given primary:

\begin{equation}
    \mc{O}^a_{\Delta_1}(z_1, \bar{z}_1) \mc{O}^b_{\Delta_2}(z_2, \bar{z}_2) \sim {-i f^{ab}_c \over z_{12}}\sum_{n=0}^{\infty}B(\Delta_1 - 1 + n, \Delta_2 - 1){\bar{z}_{12}^n \over n!}\bar{\partial}^n \mc{O}^c_{\Delta_1 + \Delta_2 - 1}(z_2, \bar{z}_2).
\end{equation}

The (closed) subalgebra of asymptotic symmetries come from studying conformally soft operators, given by $\Delta_1, \Delta_2 \in \left\lbrace 1, 0, -1, -2, \ldots \right\rbrace$. The authors then consider the following expansion of these  operators in the conformally soft limit:
\begin{equation}
    \textrm{lim}_{\epsilon \rightarrow 0} \epsilon \mc{O}^a_{k + \epsilon}(z, \bar{z}) =  \textrm{lim}_{\epsilon \rightarrow 0} \sum_{n= {k-1 \over 2}}^{{1-k \over 2}} {\epsilon \mc{O}^a_{k + \epsilon, n}(z) \over \bar{z}^{n + (k-1)/2}}
\end{equation} with $k \in \mathbb{Z}_{\leq 1}$. As explained in \cite{Guevara:2021abz}, outside the given range of $n$, the $SL(2, \mathbb{R})_R$ invariant norm vanishes (up to possible contact terms).

One then defines the holomorphic modes $R^{a}_{k, n}(z) := \textrm{lim}_{\epsilon \rightarrow 0} \mc{O}^{a}_{k + \epsilon, n}(z)$, which naturally furnish $(2-k)$-dimensional $SL(2, \mathbb{R})$-representations, $(k-1)/2 \leq n \leq (1-k)/2$ \footnote{There is an equivalent approach to obtaining holomorphic currents from the light transform \cite{Himwich:2021dau}. The latter has been interpreted as a half-Fourier transform on twistor space in \cite{Sharma:2021gcz}, which may be more natural for our purposes}. 

To sum up, the chiral algebra can be written in terms of a sequence of conformal primaries
\begin{equation} 
	R^a_{k,n}(z) 
\end{equation}
where $a$ is a  Lie algebra index, $k$ is an integer telling us the spin of the $SU(2)$ \footnote{Here we refer to $SU(2)$ rather than $SL(2, \mathbb{R})_R$, hoping that it will not cause confusion. Twistor space is signature-agnostic, and the two choices differ by a choice of real form on $SO(4, \mathbb{C})$ that will not be important for us.} representation the operator lives in, and $n$ indicates the weight of a vector in this representation. These operators satisfy the algebra
\begin{equation} 
[R^a_{k, n}, R^b_{l, m}] = -i f^{a b}_c {k' - n + l' - m\choose k' - n}{k' + n + l' + m\choose k' + n} R^c_{k + l - 1, m + n}.	 
\end{equation} where we have introduced the shorthand $j' := (1-j)/2$.
Following \cite{Strominger:2021lvk}, it is more convenient to define the generators via
\begin{equation}
\hat{R}^a_{k', m} = (k' - m)!(k' + m)! R^a_{k, m},    \end{equation} where we relabel the first argument by $k'= (1-k)/2= 0, 1/2, 1, \ldots$ instead of $k=1, 0, -1, \ldots$. These generators obey
\begin{equation}
  [\hat{R}^a_{k', m}, \hat{R}^b_{l', n}]= -i f^{ab}_c \hat{R}^c_{k' + l', m + n}.   
\end{equation}
Finally, we will define the generators
\begin{equation} 
	J^a[m, n] = \hat{R}^a_{m' + n', m - n}. 
\end{equation} (Altogether, the generators $J^a[k, l]$ are related to the currents $S^p_m$ of \cite{Strominger:2021lvk}, via $p = 1 + (k+l)/2, m = (k-l)/2$). 

Using our definition of $m' + n'$, we can see that these are conformal primaries of spin $1-m/2-n/2 $. This is because $v_i$ have charge $-\half$ under rotation of $z$.   They satisfy the OPE
\begin{equation} 
	J^a[m, n] (z) J^b[r, s](z') = \frac{1}{z-z'} f^{ab}_c J^c(z)  [m+r,n+s] \label{J_ope} 
\end{equation}
This OPE implies that the Lie algebra of modes 
\begin{equation} 
	J^a[m,n,k] = \oint J^a[m, n] z^{k} \d z
\end{equation}
satisfies the simple commutation relation
 \begin{equation} 
	 [J^a[m, n, k], J^b[r,s, l] ]= f^{ab}_c J^c[m+ r, n+ s, k+l].	  
 \end{equation}
This Lie algebra appears naturally on twistor space.  

Consider the open subset of twistor space where $z$ is not $0 $ or $\infty$.  On this open subset we have holomorphic coordinates $v_1, v_2, z, z^{-1}$.  We can consider the infinitesimal gauge transformations of the field $\mc{A}$ which preserve the vacuum field configuration $\mc{A}= 0$,  $\mc{B}=0$.  Such gauge transformations are holomorphic maps
\begin{equation} 
	\C \times \C \times \C^\times \to \mf{g}. 
\end{equation}
This is the sub-algebra of the triple loop algebra of $\mf{g}$ that we can write as
\begin{equation} 
	\g[v_1, v_2, z, z^{-1}].	 
\end{equation}
This is identified with the Lie algebra of modes of the celestial chiral algebra, by 
\begin{equation} 
	J^a[m,n,k] = \t^a v_1^m v_2^n z^k. 
\end{equation}
Note that $v_i$ transform, under coordinate transformations of the $z$ plane, as $(\d z)^{1/2}$. This explains why $J^a[m,n,k]$ has spin $-k-m-n$.

The full Lie algebra of gauge transformations of holomorphic BF theory also includes the transformations of $\mc{B}$.  On the same patch of twistor space, these are indexed by
\begin{equation} 
	\til{J}^a[m,n,k] 
\end{equation}
with the commutators 
\begin{equation} 
	\begin{split}
		[ \til{J}^a[m,n,k], \til{J}^b[r,s,l] ] &= 0\\ 
		[J^a[m,n,k], \til{J}^b[r,s,l] ] &= f^{ab}_c \til{J}^c[m+r,n+s,k+l] ].
	\end{split}
\end{equation}
These can be obtained by enlarging the chiral algebra with OPE \eqref{J_ope} by adjoining additional primary operators $\til{J}^a[m,n]$ of spin $-1 -m-n $ with the OPE
\begin{equation} 
	J^a[m, n] (z) \til{J}^b[r, s](z') = \frac{1}{z-z'} f^{ab}_c \til{J}^c(z)  [m+r,n+s]. \label{ope_tilJ} 
\end{equation}
These additional conformal primaries correspond to states of negative helicity. 

As explained in \cite{He_2016, Distler:2018rwu, Stieberger:2015kia} in the energetically soft basis, one does not expect two independent copies (holomorphic and anti-holomorphic) of the same chiral algebra when considering both positive and negative helicity particles. This is due to order-of-limits ambiguities when multiple particles of opposite helicity become soft. Indeed, here the negative helicity states transform under the adjoint representation of the algebra generated by the holomorphic states. 

A priori, we do not expect the simple OPE \eqref{ope_tilJ} to correspond to a celestial chiral algebra of full Yang-Mills theory: only of the self-dual limit at tree level. 

\subsection{Gravitational celestial symmetries}
For gravity, a similar analysis holds.  In \cite{Guevara:2021abz}, it was shown that the celestial symmetries for gravity are given by the Kac-Moody algebra built from a certain infinite-dimensional Lie algebra, described in \cite{Strominger:2021lvk} as the wedge algebra of the Kac-Moody algebra of $w_{1+\infty}$.

The Lie algebra $w_{1+\infty}$ is the Lie algebra of polynomial functions on a cylinder $\C \times \C^\times$, with coordinates $v_1,v_2$:
\begin{equation} 
	w_{1+\infty} = \C[v_1,v_2,v_2^{-1}]. 
\end{equation}
The Lie bracket is the Poisson bracket with respect to the Poisson tensor $\partial_{v_1} \partial_{v_2}$.  Inside this Lie algebra is a copy of $\mf{sl}_2$ with basis $v_1^2$, $v_1 v_2$, $v_2^2$.  

The  wedge algebra of $w_{1+\infty}$ is the subalgebra consisting of vectors which transform in finite-dimensional representations of this $\mf{sl}_2$. Recall that in the celestial holography picture, this was the Lie algebra of $SL(2, \mathbb{R})_R$ \footnote{We will later see, from the geometry of twistor space, that it is natural and expected that $SL(2, \mathbb{R})_L,  SL(2, \mathbb{R})_R$ play very different roles when deriving the chiral algebra.}.   It is easy to see that the wedge algebra $\wedge w_{1+\infty}$ is the subalgebra consisting of polynomials regular at $v_1 = 0$:
\begin{equation} 
	\wedge w_{1+\infty} = \C[v_1,v_2] 
\end{equation}
under the Poisson bracket. In other words, it is the Lie algebra of Hamiltonian vector fields on the plane $\C^2$.

As such, we will use the more standard notation
\begin{equation} 
	\op{Ham}(\C^2) = \wedge w_{1+\infty}
\end{equation}
denoting the algebra of Hamiltonian vector fields on $\mathbb{C}^2$.

In \cite{Strominger:2021lvk}, it is shown that the celestial chiral algebra at tree-level is the Kac-Moody algebra for $\op{Ham}(\C^2)$ at level zero. This can be obtained analogously to the gauge theory case. Namely, we study the conformally soft limits of positive helicity graviton operators, $w_k(z, \bar{z}):= \textrm{lim}_{\epsilon \rightarrow 0}G_{k + \epsilon}((z, \bar{z})$ in the mode expansion
\begin{equation}
    w_k(z, \bar{z}) = \sum_{n = {k-2 \over 2}}^{{2-k \over 2}} {w_{k, n}(z) \over \bar{z}^{n + {k-2 \over 2}}}. 
\end{equation}
We change the normalization of the generators in the same way as the gauge theory case, to absorb some pesky factorials as well as an additional factor of ${1 \over \sqrt{32 \pi G}}$. Similarly, we will relabel the generators by their highest weight under $SL(2, \mathbb{R})$ to define the generators of the chiral algebra
\begin{equation} 
	w[m,n]  
\end{equation}
of spin $2-m/2-n/2$,  corresponding to $v_1^m v_2^n \in \op{Ham}(\C^2)$. (These are related to the generators denoted by $w^p_m$ in \cite{Strominger:2021lvk} using the relations $p-1 = {k+ l \over 2}, m = {k-l \over 2}$, as with the gauge theory re-indexing). 

These have OPEs
\begin{equation} 
	w[m,n](0) w[r,s](z) = \frac{1}{z} ( ms-nr ) w[r+m-1,n+s-1]. 
\end{equation}
Notice that these OPEs are \emph{not} those of the $W_{\infty}$ chiral algebra (see, e.g., \cite{Pope:1991ig} and references therein for the latter) \footnote{See also \cite{Mago:2021wje} for a discussion of the quantum deformations of the algebra in the presence of non-minimal couplings, and how the deformations differ from $W_{\infty}$.}.

We will identify the mode algebra of this Kac-Moody algebra with the gauge symmetries of Poisson BF theory on twistor space.

The mode algebra of this Kac-Moody algebra is the loop algebra of $\op{Ham}(\C^2)$, which is
\begin{equation} 
	\op{Ham}(\C^2) [z,z^{-1}] = \C[v_1,v_2,z,z^{-1}] \label{eqn_ham} 
\end{equation}
under the Poisson bracket using the Poisson tensor $\partial_{v_1} \partial_{v_2}$.  A basis of this Lie algebra is given by the expressions
\begin{equation} 
	w[m,n,k] = v_1^m v_2^n z^{k},  
\end{equation}
and these have the commutation relations
\begin{equation} 
	[w[m,n,k], w[r,s,l] ] = (ms-nr)w[m+n-1,r+s-1, k+l].  
\end{equation}

Let us compare to this to what we find from Poisson BF theory. There, we are interested in the gauge transformations on a patch $\C^\times \times \C^2$ which preserve the vacuum field configuration $\mc{H} = 0$, $\beta = 0$. 

The infinitesimal gauge transformations of $\mc{H}$ are precisely the Lie algebra \eqref{eqn_ham}. 

There are also the gauge transformations associated to the field $\beta$, which form the vector space $\C[v_1,v_2,z,z^{-1}]$.   These commute with each other and live in the adjoint representation of $\op{Ham}(\C^2)[z,z^{-1}]$.

In sum, the Lie algebra of gauge transformations consists of the mode algebra of the chiral algebra found in \cite{Strominger:2021lvk}, together with a copy of its adjoint representation.  As in the gauge theory case, additional elements in our Lie algebra correspond to states of the opposite helicity.

We can build a chiral algebra whose mode algebra is this enlarged (by the copy of the adjoint) Lie algebra, just as before. It consists of conformal primaries
\begin{equation} 
	w[r,s] , \til{w}[r,s] 
\end{equation}
where $w[r,s]$ has spin $2-r/2-s/2$, $\til{w}[r,s]$ has spin $-2-r/2-s/2$, and they satisfy the OPEs
\begin{equation} 
	\begin{split}
		w[m,n](0) w[r,s](z) &\sim \frac{1}{z} (ms-nr) w[m+r-1,n+s-1]  (0) \\
		w[m,n](0)\til{w}[r,s](z)&\sim \frac{1}{z}(ms-nr)\til{w}[r,s](0) \\
		\til{w}(0) \til{w}[r,s](z) &\sim 0. 
	\end{split}
\end{equation}
We will explain how to derive these OPEs directly from twistor space in section \ref{s:Koszul}.

\subsection{Including the axion}\label{s:axion}
As we have seen, to cancel the anomaly on twistor space we need to introduce an axion field.  Here we show how to enlarge the gauge theory symmetry algebra by including the axion contribution.  

On twistor space, the gauge symmetry for the axion is given by a closed holomorphic two-form. Working as above on a patch of the form $\C^2 \times \C^\times$, with coordinates $v_i, z$, we can write a basis for the space of closed two-forms as
\begin{equation}
	\begin{split} 
		E[r,s,k]   	 &= \frac{1}{r+s} z^k \d z \d ( v_1^r v_2^s) \\
		F[r,s,k]  &= \d \left( z^k \frac{1}{r+s+2}(v_1^{r+1} v_2^s \d v_2 - v_1^{r} v_2^{s+1} \d v_1 )   \right)\label{eqn:axionmodes} 
	\end{split}	
\end{equation}
As before, for fixed $r+s$, these transform in the representation of spin $(r+s)/2$ of $\mf{sl}_2$.  

We can read the Lie brackets between the gauge theory symmetries and the axion symmetries from the term $\int \op{tr}  ( \mc{A} \partial \mc{A}) \eta$.  To do this, we should interpret all fields in the BV formalism, so that $\mc{A} \in \Omega^{0,\ast}(\PT,\g)$. The generator of gauge symmetry for $\mc{A}$ is the component in $\Omega^{0,0}$, and the anti-field to the generator for the gauge symmetry of $\mc{B}$ is the component in $\Omega^{0,3}$. Using standard BV machinery, we determine that there are terms in the Lie bracket whereby:
\begin{enumerate} 
	\item The commutator of a $J$ with a $J$ becomes the closed two-form symmetry, by
		\begin{equation} 
			[J , J] = (\partial J \wedge \partial J)  
		\end{equation}
		Here the colour indices are contracted with the Killing form, and we are viewing $J$ as a holomorphic function on $\C^2 \times \C^\times$.  The right hand side is a closed two-form on this space.
		
		The fact that there are two copies of $\partial$ on the right hand side is because the BV anti-bracket for the $\eta$ field involves a $\partial$, see for instance \cite{Costello:2019jsy}.
	\item The commutator of a $2$-form with a $J$ yields a $\til{J}$, by
	\begin{equation}
		\begin{split} 	
		[E, J ] &= E \wedge \partial J \\
		[F, J] &= F \wedge \partial J 
	\end{split}
	\end{equation}
		where the right hand side is interpreted as in the $\til{J}$ part of the Lie algebra, which consists of $3$-form on $\C^2 \times \C^\times$. 
\end{enumerate}
In all of these expressions, we will find a factor of
\begin{equation} 
	\what{\lambda}_{\g} = \frac{\lambda_{\g}}{ (2 \pi \i)^{3/2} \sqrt{12} } . 
\end{equation}
We can if we like absorb this factor into a redefinition of  $E,F$ and $\til{J}$.  In this section we will keep it. 

Let us now write the brackets out in components.

We find that
\begin{equation}
	\begin{split} 
		\frac{1}{\what{\lambda}_{\g}} 	[J^a[r,s,k], J^b[t,u,l] ] =&   K^{ab} \d ( v_1^r v_2^s z^k ) \d ( v_1^t v_2^u z^l )  \\
		=& K^{ab}   z^{k+l-1} \d z (tk - rl) v_1^{r+t -1}  v_2^{s+u} \d v_1 \\ 
		&+   K^{ab} z^{k+l-1} \d z (uk - sl)  v_1^{r+t} v_2^{s+u-1} \d v_2 \\
		&+  K^{ab} z^{k+l} (ru-st) v_1^{r+t-1} v_2^{u+s-1} \d v_1 \d v_2   \\ 
		= &K^{ab}	(k(u+ t) -l(r+s) )  E[r+t, s+u, k+l-1] \\
		&+ K^{ab} (ru - st) F[r+t-1,s+u-1,k+l] 
	\end{split}
\end{equation}

Similarly, the commutator between $J$ and $E$ is given by
\begin{equation} 
	\begin{split} 
		\frac{1}{\what{\lambda}_{\g}} 		[J^a[r,s,k], E[t,u,l] ] &= \frac{1}{t+u} z^l \d z \d ( v_1^t v_2^u) \d (v_1^r v_2^s z^k ) \\ 
		&= \frac{1}{t+u}z^{k+l} \d z (t s - u r) v_1^{t+r-1} v_2^{s+u-1} \d v_1 \d v_2 \\
		&= \frac{1}{t + u}(ts - ur) \til{J}^a[t+r-1,s+u-1,k+l] . 
	\end{split}
\end{equation}
Finally, the commutator between $J$ and $F$ is given by 
\begin{equation} 
	\begin{split} 
		\frac{1}{\what{\lambda}_{\g}} 		[J^a[r,s,k], F[t,u,l] ] &=   \d \left( z^l \frac{1}{t+u+2}(v_1^{t} v_2^u \eps^{ij} v_i \d v_j)    \right)  \d (v_1^r v_2^s z^k ) \\
			&= z^{k+l-1} v_1^{r+t } v_2^{s+u} \d z \d v_1 \d v_2 \frac{ k (t + u+2) - l (r+s) }{t+u+2} \\
		&= \til{J}^a[r+t, u+s, k+l - 1]  \frac{ k (t + u +2) - l (r+s) }{t+u+2}.  
	\end{split}
\end{equation}

From these commutators one can read off the OPEs.  Our conventions are that the modes $J^a[r,s,k]$ are obtained as the modes $\frac{1}{2 \pi \i} \oint z^k \d z J^a[r,s]$ of the state $J^a[r,s]$ in the chiral algebra, and similarly for the towers $\til{J}$, $E$, $F$.  From this we find that the term in the $JJ$ OPE  which include the axion fields is the following:
\begin{equation} 
	\begin{split} 
		\frac{1}{\what{\lambda}_{\g}} 		J^a[r,s](0) J^b[t,u](z) =& \frac{1}{z} K^{ab} (ru-st) F[r+t-1,s+u-1] (0)\\
		&- \frac{1}{z} K^{ab} (t+u)  \partial_z E[r+t,s+u](0) \\
		&- \frac{1}{z^2} K^{ab} (r+s+t+u) E[r+t,s+u](0).  
	\end{split}
\end{equation}
To check that this gives the correct commutator, we compute the commutator
\begin{equation} 
	\left[ \frac{1}{2 \pi \i}\oint_{z_1}\d z_1 z_1^{k} J^a[r,s] , \frac{1}{2 \pi \i} \oint_{z_2} z_2^l \d z_2 J^b[t,u] \right]. 
\end{equation}
This is computed by moving one contour past each other, as usual, leaving us with
\begin{equation} 
	\frac{1}{2 \pi \i} \oint_{z_1}  \d z_1 \oint_{\abs{w} = \eps} \d w z_1^k (z_1 + w)^l J^a[r,s] (z_1) J^b[r,s](z_1 + w).  
\end{equation}
Expanding the right hand side using the OPE and performing the contour integral over $w$ gives the desired commutator. 

Similarly, the $JE \to \til{J}$ and $JF \to \til{J}$ OPEs are given by
\begin{equation} 
	\begin{split} 
		\frac{1}{\what{\lambda}_{\g}} 		J^a[r,s](0) E[t,u](z) &= \frac{1}{z} \frac{(ts - ur)}{t + u} \til{J}^a [t+r - 1, s + u -1](0)  \\
		\frac{1}{\what{\lambda}_{\g}} 	J^a[r,s](0) F[t,u](z) &=  - \frac{1}{z} \partial_z \til{J}^a[r+t, s + u](0)  - \frac{1}{z^2} (1 + \frac{r + s}{t+u+2}) \til{J}^a[r+t, s+u](0)  
	\end{split}
\end{equation}

We should emphasize that that there is a tree level anomaly in the field theory on twistor space coming from axion exchange, which cancels a one loop anomaly in the gauge theory via a Green-Schwarz mechanism.  This tells us that we should not expect a fully consistent chiral algebra where we include the axion but do not quantum-correct the OPEs from the gauge theory sector.  The quantum chiral algebra is not fully understood.  However, we describe some corrections to the $JJ$ and $J \til{J}$ OPEs in section \ref{s:quantum_chiral}.  

\section{States from the point of view of twistor space}\label{s:states}
We complete our survey of basic aspects of the celestial/twistor correspondence by describing how certain states on twistor space give rise bijectively to the states of negative conformal dimension in 4d which generate conformal primary states of the form studied in \cite{Pasterski:2016qvg}. We emphasize that our states capture the conformally soft modes of mostly-negative integral conformal weights, rather than the principal series of \cite{Pasterski:2016qvg}\footnote{See \cite{Donnay:2020guq} for a discussion of the relationships between these two bases.}. We will illustrate this in the simplest example, though the relationship extends to the more general massless 4d/holomorphic twistor theories we are interested in.
\subsection{Preliminaries} 
Recall that we use holomorphic coordinates $v_i,z$ on twistor space. These are related to coordinates $x_i$ on $\R^4$ by
\begin{equation} 
	\begin{split}
		v_1 &= x_1 + \i x_2 + z(x_3 - \i x_4)\\
		v_2 &= x_3 + \i x_4 - z(x_1 - \i x_2).
	\end{split}
\end{equation}
If we fix $z$, then $v_i(z)$ are holomorphic functions on $\R^4$ in the complex structure determined by $z$. They are also null and orthogonal, $\ip{v_i(z), v_j(z)} = 0$.  

In the celestial holography literature, an important role is played by a parameterization of the space of complexified null momenta by a triple of complex numbers $(\omega, z,\zbar)$.  As in \cite{Pate:2019lpp, Guevara:2021abz} we will treat $z,\zbar$ as independent, i.e.\ not require that $z$ is the complex conjugate of $\zbar$.  When we want to use real momenta in various signatures, we need to impose reality conditions on $\omega,z,\zbar$.  

We identify momenta (living in the dual $\R^4$) with vectors in $\R^4$ using the inner product. Then, the formula for the null momentum $p(\omega,z,\zbar)$ is 
\begin{equation}
	\begin{split} 
		 	p(\omega,z,\zbar) &= \eps \omega (-\i - \i z \zbar, 1 - z \zbar,  -z - \zbar, -\i (z-\zbar) ) \\
		&=  \omega\eps \i \left(- v_1(z) + \vbar_2(z) \zbar  \right)
		\end{split}
\end{equation}
where $\eps = \pm 1$ indicates whether the particle is incoming or outgoing.

From this, we see that the parameter $z$ used in the celestial holography literature is the same as the coordinate $z$ on the twistor $\CP^1$.  From the point of view of twistor space, the coordinate $\zbar$ has a quite different interpretation.  The fibre over $z \in \CP^1$ in $\PT$ is a copy of $\C^2$, and $\zbar$ is coordinate on the projectivization of this $\mathbb{C}^2$, i.e. a coordinate on a different $\mathbb{CP}^1$. That is, $\zbar$ can be understood as $v_2/v_1$.  

Thus, from the twistor perspective, treating the parameters called $z$ and $\zbar$ in the celestial holography literature as complex conjugate is very unnatural. To avoid confusion, in this section in what follows we will tend not to use the symbol $\zbar$, and instead use $\lambda$:
\begin{equation} 
	\lambda := \zbar. 
\end{equation}

In celestial holography, one considers states which are conformal primaries under the $SL_2$ rotating the celestial sphere. These are obtained as a Mellin transform of the usual basis of states. For a scalar field theory \cite{Pasterski:2016qvg}, a state of dimension $\Delta$ is an expression of the form
\begin{equation}
	\ip{p(z,\lambda),x}^{-\Delta} = \left( - v_1(z) + \vbar_2(z) \lambda \right)^{- \Delta}. 
\end{equation}

\subsection{States on twistor space}

We would like to explain how the basis of states described in \cite{Pasterski:2016qvg} appears in a natural way by writing states in terms of twistor space.  To explain this point, we will work with the very simplest example of the twistor correspondence: a free scalar field theory $\phi$ on $\R^4$.

As reviewed in \S \ref{s:twistor}, this corresponds on twistor space to a free holomorphic Chern-Simons theory. The fundamental field is
\begin{equation} 
	\mc{A} \in \Omega^{0,1}(\PT, \Oo(-2)) 
\end{equation}
i.e. an Abelian holomorphic Chern-Simons gauge field twisted by $\Oo(-2)$.  The Lagrangian is
\begin{equation} 
	\int \mc{A} \dbar \mc{A} 
\end{equation}
(which makes sense, because $\mc{A} \dbar \mc{A}$ is a $(0,3)$ form with values in $\Oo(-4)$, and $\Oo(-4)$ is the canonical bundle of twistor space).  $\mc{A}$ is subject to gauge transformations $\mc{A} \mapsto \mc{A} + \dbar \chi$, where $\chi$ is a section of $\Oo(-2)$.

The space of solutions to the equations of motion of this theory, modulo gauge, are 
\begin{equation} 
	H^1_{\dbar}(\PT, \Oo(-2)). 
\end{equation}
As we reviewed, this is the space of harmonic functions $\phi$ on $\R^4$  which are entire analytic functions of $x_1,\dots,x_4$. 

A single-particle on-shell state is a solution to the equations of motion, and so can be represented by a $\dbar$ closed $(0,1)$ form on $\PT$, twisted by $\Oo(-2)$. We can build states localized along the complex surface $z = z_0$ in $\PT$ by the expression
\begin{equation}\label{states} 
	 \d z\delta_{z = z_0} (v_1 + \lambda v_2)^{n}. 
\end{equation}
(We include $\d z$ to emphasize that this is a section of $\Oo(-2)$).

When we translate this into states of a free scalar field, by integrating over the $z_0$ plane, we simply get
\begin{equation}
	(v_1(z_0) + \lambda v_2(z_0))^{n}. 
\end{equation}
These states are \emph{precisely} the standard basis of conformal primary states!

The states on twistor space have a natural action of the Virasoro algebra, coming from holomorphic vector fields on the $z$ plane.  Under these transformations, $v_i$ transform as $(\d z)^{1/2}$. Explicitly,
\begin{equation}
L_k = -z^{k+1} \partial_z - \frac{k + 1}{2}z^k v_i \partial_{v_i}.
\end{equation}
The states on twistor space \eqref{states} are conformal primaries: after shifting $z$ to $z - z_0$ they are annihilated by $L_k$ for $k > 0$, and are of conformal dimension $-n/2$.

In sum:
\begin{proposition}
	There is a bijection between
	\begin{enumerate} 
		\item Conformal primary states of conformal dimension $n \le 0$, at $z=z_0$.
		\item Conformal primary states on twistor space of conformal dimension $n/2$.
	\end{enumerate}
\end{proposition}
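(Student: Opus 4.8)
The plan is to realize the bijection concretely through the Penrose transform and then check separately that it intertwines the conformal structures on the two sides. The candidate map is the one already written down: the twistor state $\mc{A}_N := \d z\,\delta_{z=z_0}(v_1 + \lambda v_2)^{N}$ is sent to the $4d$ field $\phi(x) = \int_{\CP^1_x}\mc{A}_N$, and I would take $N \ge 0$ and set $n := -N \le 0$ so as to match the dimension labels in the statement.

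First I would confirm that $\mc{A}_N$ is a genuine on-shell state: it is $\dbar$-closed because $v_1 + \lambda v_2$ is holomorphic while $\delta_{z=z_0}$ is the $\dbar$-closed $(0,1)$-current dual to the divisor $\{z = z_0\}$, so it defines a class in $H^1_{\dbar}(\PT,\Oo(-2))$. Next I would evaluate the Penrose transform: the delta current localizes the integral over $\CP^1_x$ to the single point $z = z_0$ of that curve, where $(v_1+\lambda v_2)$ restricts to $v_1(x,z_0) + \lambda v_2(x,z_0)$, giving $\phi(x) = (v_1(x,z_0)+\lambda v_2(x,z_0))^{N}$. By the Preliminaries this is exactly the scalar celestial conformal primary $\ip{p(z_0,\lambda),x}^{-\Delta}$ of dimension $\Delta = -N = n \le 0$. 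Because the Penrose transform is an isomorphism $H^1_{\dbar}(\PT,\Oo(-2)) \iso \{\text{entire harmonic } \phi\}$, this map is automatically well defined and injective at the level of cohomology classes.

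I would then match the conformal data using the explicit generators $L_k = -z^{k+1}\partial_z - \tfrac{k+1}{2}z^k v_i\partial_{v_i}$. After recentering at $z_0$, every term of $L_k$ for $k>0$ carries a positive power of $z$, which annihilates the delta current, so $\mc{A}_N$ is a full \emph{Virasoro} primary and not merely a primary for the global $SL(2)$; and $L_0$ reduces on the homogeneous factor to $-\tfrac12 v_i\partial_{v_i}$, returning the weight $-N/2 = n/2$. This is item 2. On the $4d$ side the global conformal (Lorentz) generators are identified with $L_{-1}, L_0, L_1$, so the highest-weight condition defining a celestial primary of dimension $n$ is the restriction of the twistor primary condition; together these give the relation recorded in the proposition, namely that the $2d$ conformal dimension $n/2$ is exactly half the $4d$ conformal dimension $n$.

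Finally I would argue bijectivity dimension by dimension. At fixed $z_0$ and fixed weight, a Virasoro primary localized at $z_0$ must be $\delta_{z=z_0}\d z$ times a homogeneous degree-$N$ polynomial in $(v_1,v_2)$, and these polynomials are precisely the coefficients appearing in the $\lambda$-expansion of $(v_1+\lambda v_2)^{N}$, i.e.\ they fill the spin-$N/2$ multiplet; the integer $4d$ primaries of dimension $n$ at $z_0$ form the identical multiplet, and the map is the identity on these coefficients, so invoking the classification of scalar celestial primaries recalled in the Preliminaries gives surjectivity. The step I expect to be the real obstacle is the distributional and cohomological bookkeeping around $\delta_{z=z_0}\d z$: making precise its $\Oo(-2)$-twisting and its behaviour under $L_k$, so that (i) the fibre integral genuinely localizes to produce the stated field and (ii) the $-z^{k+1}\partial_z$ pieces contribute no anomalous weight, confirming that the full Virasoro primary condition on twistor space is exactly the image of the $4d$ global-conformal primary condition and neither loses nor over-constrains states. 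Restricting to integer $n \le 0$ rather than the principal series is what guarantees the relevant $\phi$ are polynomial and hence honest delta-localized twistor classes, and I would keep this restriction explicit throughout.
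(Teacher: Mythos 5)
Your proposal is correct and follows essentially the same route as the paper: the paper also establishes the bijection by exhibiting the delta-localized representatives $\d z\,\delta_{z=z_0}(v_1+\lambda v_2)^{N}$, performing the fibre integral over $\CP^1_x$ to land on the standard celestial primaries $(v_1(z_0)+\lambda v_2(z_0))^{N}$, and checking the primary condition with the same generators $L_k = -z^{k+1}\partial_z - \tfrac{k+1}{2}z^k v_i\partial_{v_i}$. Your added bookkeeping (the $\dbar$-closedness check, the multiplet-by-multiplet surjectivity count, and the sign convention $n=-N$) only makes explicit what the paper leaves implicit in its summary.
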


\subsection{States of positive conformal dimension}
We can also elucidate the twistor space origin for states of positive conformal dimension. 
Let us start by considering the states of conformal dimension $1$. These are of the form 
\begin{equation}
\frac{1}{v_1(z_0) + \lambda v_2(z_0) + \eps }
\end{equation}
where $\eps$ is a regulating parameter, introduced to ensure that this expression satisfies the Klein-Gordon equation without a source term.

On twistor space, these states are represented by 
\begin{equation}
\frac{1}{v_1 + \lambda v_2  + \eps } \d z \delta_{z=z_0}.
\end{equation}
This expression does not satisfy the equations of motion, but it instead satisfies
\begin{equation}
\dbar \frac{1}{v_1 + \lambda v_2 + \eps } \d z \delta_{z=z_0}
= \d z \delta_{z=z_0} \delta_{v_1 + \lambda v_2 = -\eps}.
\end{equation}
This field is therefore the field sourced by the operator
\begin{equation}
\int_{z=z_0, v_1 = -\lambda v_2 - \eps} \mc{A} ,
\end{equation}
which is of course an Abelian holomorphic Wilson line.

One can ask why the field sourced by such a surface defect can be thought of as a state. This is a rather delicate question, and depends on the signature.   To understand this point, we need to understand how the Penrose correspondence relating cohomology groups and harmonic functions  depends on the signature. 

The standard formulation of the result is in terms of harmonic functions that extend to all of $\C^4$:
\begin{equation} 
	H^1(\PT, \Oo(-2))=\text{Harmonic functions on Euclidean space that analytically extend to } \C^4	 .
\end{equation}
In Euclidean signature, all harmonic functions extend analytically to all of $\C^4$. However, in other signatures, there are harmonic functions which do not extend to all of $\C^4$.  For example, if we use Euclidean coordinates $x_i$, so that $t = \i x_1$, then the expression
\begin{equation} 
	\frac{1}{x_1 + \i x_2 +  \eps}  
\end{equation}
for $\eps$ real, is a harmonic function that extends meromorphically to $\C^4$. The pole in this expression does not intersect the Lorentzian $\R^{3,1}$ where $x_1$ is imaginary and $x_2,x_3,x_4$ are real.   In Lorentzian signature, therefore, this expression satisfies the Klein-Gordan equation without a source term. We can therefore think of it as a state.  In Euclidean signature, by contrast, there is a source term. 

What this shows us is that a carefully chosen surface defect on twistor space can give rise to a state in Lorentzian (or $(2,2)$) signature but not in Euclidean signature. This will happen if the location of the defect in $\PT$ does not intersect any of the $\CP^1$'s in $\PT$ which correspond to points in Lorentzian $\R^{3,1} \subset \C^4$.

Returning to the example of the state sourced by a holomorphic Wilson line at $z = z_0$, $v_1 + \lambda v_2 + \eps = 0$, the singular locus of the state it sources is the subset of $\C^4$ defined by
\begin{equation} 
	x_1 + \i x_2 + z_0(x_3 - \i x_4) + \lambda (x_3 + \i x_4) - \lambda z_0 ( x_1 - \i x_2) + \eps = 0.  
\end{equation}
This will not intersect the Lorentzian locus where $x_1$ is imaginary and $x_2,x_3,x_4$ are real as long as $\lambda = -\zbar_0$ and $\eps$ is real and non-zero.  

In sum, states of positive conformal dimension correspond to surface defects on $\PT$, where the surface defect lives on a non-compact surface $\C \subset\PT$ whose parameters $(\lambda,z_0,\eps)$ satisfy these conditions.  

These states furnish modules for the chiral algebra generated by the states of negative conformal dimension, which correspond to the algebra generators. It would be interesting to have a more complete understand of chiral algebra modules in terms of defects on twistor space.

\section{Celestial chiral algebras as boundary algebras}\label{s:3d}

We have so far discussed celestial operators and states from a twistorial point of view, leveraging the Penrose-Ward correspondence. In particular, we reproduced certain chiral algebras on the celestial sphere, recently discovered from flat space scattering amplitudes in the conformal basis, in terms of gauge symmetries of holomorphic theories on twistor space. We would like to better understand some physical features of this 2d chiral algebra. 

In the standard $AdS/CFT$ correspondence, the chiral algebra of a two-dimensional CFT is part of the algebra of operators living at the boundary of $AdS_3$.  Here we will explain how to derive a similar picture for the celestial chiral algebras studied in \cite{Guevara:2021abz, Strominger:2021lvk}.

We will show that the celestial chiral algebras live at the boundary of a 3d theory on 
\begin{equation} 
	\R_{> 0} \times \CP^1. 
\end{equation}
This theory is built by compactifying the field theory on the open subset 
\begin{equation} 
	\PT \setminus \CP^1 
\end{equation}
of twistor space to three dimensions. Locally, this open subset is $(\C^2 \setminus 0) \times \C$, with coordinates $v_i, z$ where $v_1,v_2$ are not both zero. We can write this as
\begin{equation} 
	(\C^2 \setminus 0) \times \C = S^3 \times \R_{> 0} \times \C.
\end{equation}
We can then compactify to a theory on $\R \times \C$ along the unit three-sphere\footnote{  We could also compactify along the unit three-sphere on $\R^4$, which in these coordinates is given by $\abs{v_1}^2 + \abs{v_2}^2 = (1+\abs{z}^2)^{-1}$.  This does not change anything essential in our analysis.   } 
\begin{equation} 
	\abs{v_1}^2 + \abs{v_2}^2 = 1 
\end{equation} to obtain our 3d theory.

It is very important to note, however, that even when we compactify the twistor representation of self-dual gravity to three dimensions, we \emph{do not} find a gravitational theory on $\R_{> 0} \times \CP^1$. In the non-linear graviton construction, the allowed diffeomorphisms on twistor space preserve the coordinate $z$, and the fields in the theory do not include a Beltrami differential varying the complex structure in the $z$ direction.

Similarly, the celestial chiral algebra does not contain a stress tensor, which is a hallmark of having a gravitational theory in the bulk. From this perspective, therefore, we cannot say that the celestial chiral algebra fits into a standard holographic dictionary. One may view this bulk/boundary system as rather analogous to the Chern-Simons/WZW correspondence.

However, our three dimensional bulk theory is \emph{not} topological. Rather, it is topological in the radial direction $\R_{> 0}$ and holomorphic in the $\CP^1$ direction. Such theories were studied in \cite{Aganagic:2017tvx,Costello:2020jbh}, and arise from supersymmetric localization of 3d $\mathcal{N}=2$ theories. When the bulk theory is not topological, it implies that the boundary algebra does not have a stress tensor: see \cite{Costello:2020jbh} for details.

Because this theory is not topological, it is not correct to say that the celestial chiral algebra is part of a two-dimensional CFT.  It is the algebra of boundary operators of a  three-dimensional theory.   This is a much more general class of chiral algebras, and includes such degenerate chiral algebras as those with only non-singular OPEs.

\subsection{Description of the bulk theory}
Let us now describe the bulk $3$-dimensional theory, whose boundary algebra is given by the celestial chiral algebra. We will start with the theory for gauge theory, and then include gravity.  We will work on a patch in the $z$-plane, excluding $z = \infty$.  

We will start by describing the zero-modes of the theory obtained by reduction of holomorphic BF theory from $S^3 \times \R_{> 0} \times \C$ to $\R \times \C$, and then incorporate the KK modes. Because we are describing the zero modes we are only interested in the fields which are invariant under the $SU(2)$ rotating $S^3$.   We will prove the following result.
\begin{proposition}
	The three dimensional theory arising from the zero modes of holomorphic BF theory on twistor space is the holomorphic topological theory of \cite{Aganagic:2017tvx, Costello:2020jbh}, associated to the supersymmetric localization of $3d$ $\mathcal{N}=2$ theory with Chern-Simons level $0$ and adjoint-valued matter. 
\end{proposition}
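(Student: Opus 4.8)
The plan is to carry out the Kaluza-Klein reduction of holomorphic BF theory along the three-sphere $\abs{v_1}^2 + \abs{v_2}^2 = 1$ sitting inside the fibre $\C^2_v$ of $\PT \setminus \CP^1 \iso (\C^2 \setminus 0) \times \C_z$, retain only the $SU(2)_+$-invariant modes (the zero modes), and identify the resulting theory on $\R_{>0} \times \C_z$ with the holomorphic-topological theory of \cite{Aganagic:2017tvx, Costello:2020jbh}. The organizing principle is that a holomorphic theory, reduced along the radial and angular directions of $\C^2_v$, becomes \emph{topological} along the radius $\R_{>0}$ and stays \emph{holomorphic} along $\C_z$ — exactly the geometry of the target HT theory. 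Concretely, the BV fields $\mc{A} \in \Omega^{0,\ast}(\PT,\g)[1]$ and $\mc{B} \in \Omega^{3,\ast}(\PT,\g)[1]$ reduce to complexes of the form $\big(\Omega^{0,\ast}(\C^2 \setminus 0)^{SU(2)} \otimes \Omega^{0,\ast}(\C_z)\big)\otimes \g$, so the whole task is to compute the invariant factor and to track the action through it.

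First I would compute the invariant complex $\Omega^{0,\ast}(\C^2\setminus 0)^{SU(2)}$ directly. Writing $s = \abs{v_1}^2 + \abs{v_2}^2$, the $SU(2)_+$-invariant forms are generated over functions of $s$ by $1$ in degree zero, by $\dbar s = \sum_i v_i \, \d\bar v_i$ and $\omega := \bar v_1 \d\bar v_2 - \bar v_2 \d\bar v_1$ in degree one, and by $\d\bar v_1 \wedge \d\bar v_2$ in degree two. A short calculation gives $\dbar f = f'(s)\,\dbar s$ and $\dbar(h(s)\,\omega) = (s h' + 2h)\,\d\bar v_1 \d\bar v_2$, while $\dbar s$ is closed. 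Hence the invariant complex splits: the pair (functions of $s$, coefficient of $\dbar s$) forms the de Rham complex $\Omega^\ast(\R_{>0})$, and the pair (coefficient of $\omega$, coefficient of $\d\bar v_1 \d\bar v_2$) forms a second two-term complex with the ``massive'' differential $s\partial_s + 2$, whose cohomology is the Bochner-Martinelli class $\omega/s^2$. Tensoring with $\Omega^{0,\ast}(\C_z)$ and $\g$, the de Rham factor produces a field valued in $\Omega^\ast(\R_{>0})\otimes\Omega^{0,\ast}(\C_z)\otimes\g$ — precisely the BV multiplet of an HT gauge field — and the second factor produces an adjoint-valued field with the degrees of a chiral multiplet. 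Running the same decomposition for $\mc{B}$ (whose holomorphic factor $\d v_1 \d v_2\, \d z$ is itself $SU(2)_+$-invariant) supplies the BV-conjugate fields, so that the two towers $\mc{A}, \mc{B}$ together assemble into one HT gauge multiplet plus one adjoint chiral multiplet with its antifield.

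With the field content matched, I would reduce the action $\int_{\PT} \op{tr}(\mc{B}\, F^{0,2}(\mc{A})) = \int \op{tr}(\mc{B}\dbar\mc{A}) + \half\int\op{tr}(\mc{B}[\mc{A},\mc{A}])$. The $(-1)$-shifted pairing $\int \op{tr}(\mc{B}\dbar\mc{A})$ restricts on the de Rham factors to the canonical BV pairing of the HT gauge multiplet, which is of BF type: there is no $\mc{A}\dbar\mc{A}$ self-term, so the Chern-Simons level is zero; on the matter factors it reproduces the adjoint chiral kinetic term. The cubic vertex $\int\op{tr}(\mc{B}[\mc{A},\mc{A}])$ reduces to the cubic gauge self-interaction of the HT gauge theory and to the minimal coupling of the adjoint chiral, while the absence of a three-matter term (a superpotential) follows because no $SU(2)_+$-invariant cubic pairing of the matter factors survives. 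This is exactly the field content and action of the HT theory obtained by supersymmetric localization of $3d$ $\mathcal{N}=2$ gauge theory with Chern-Simons level $0$ and adjoint matter.

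The main obstacle is the second step: identifying the invariant complex correctly, and in particular recognizing that the reduction is \emph{not} simply $\Omega^\ast(\R_{>0})$ but carries the extra Bochner-Martinelli tower with the twisted differential $s\partial_s + 2$, then interpreting that ``$+2$'' not as a genuine mass but as the weight of the matter field under the scaling of $\R_{>0}$ (equivalently, a background for the topological gauge symmetry), so that the reduced theory is genuinely holomorphic-topological rather than merely massive. A secondary subtlety is to confirm that truncating to $SU(2)_+$ zero modes is consistent at this stage — the full KK tower, which restores the complete celestial chiral algebra on the boundary, being reinstated separately — and to fix normalizations so that the BF pairing lands precisely on the level-$0$ HT gauge structure rather than on a nonzero level.
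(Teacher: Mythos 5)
Your proposal is correct and takes essentially the same route as the paper's proof: Kaluza--Klein reduction along the unit $S^3$ in the $\C^2_v$ fibre, truncation to the $SU(2)$-invariant modes of $\mc{A}$ and $\mc{B}$, and matching of the field content, gauge symmetries, and reduced action $\int \d z\, \op{Tr}\left( B F(A) + \eta\, \d_A \phi \right)$ with the holomorphic-topological theory at Chern--Simons level $0$ with adjoint matter. The only difference is bookkeeping: where you exhibit the invariant Dolbeault complex abstractly and find the twisted differential $s\partial_s + 2$ on the Bochner--Martinelli tower, then correctly reinterpret the $+2$ as a scaling weight, the paper's component ansatz $\phi_a\, \eps^{ij}\vbar_i \d\vbar_j / r^4$ builds the factor $s^{-2}$ into the field from the start, which untwists that differential to the plain de Rham operator along $\R_{>0}$.
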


To prove this, we will analyze field configurations on a patch of $\PT$ which are invariant under the $SU(2)$ under which the $v_i$ coordinates transform as a doublet.  The coordinates $r = \norm{v}$ and $z,\zbar$ are $SU(2)$ invariant.

For each Lie algebra index $a$, the $(0,1)$ form field $\mc{A}_a$ on $\PT$ has three components.  The most general $SU(2)$ invariant forms these three components can take are
\begin{equation}
	\begin{split}
		&A_a^{r} (z,\zbar,r) \dbar r\\ 
		&A_a^{\zbar}(z,\zbar,r) \d \zbar\\ 
		&\phi_a(z,\zbar,r) \frac{\eps^{ij} \vbar_i \d \vbar_j } { r^4} .
	\end{split}
\end{equation}

There is a single (adjoint valued) gauge parameter $\chi(z,\zbar,r)$ which is invariant under $SU(2)$.   Under this transformation,  $\phi_a^{S^3}$ transforms as a scalar in the adjoint representation, and $A_a^{r}$, $A_a^{\zbar}$ transform as a partial connection:
\begin{equation} 
	\begin{split}
		\delta A_a^{r} &= \partial_r \chi_a + f_a^{bc} \chi_b A^{r}_c\\
		\delta A_a^{\zbar}&= \partial_{\zbar} \chi_a + f_a^{bc} \chi_b A^{\zbar}_c \\
		\delta \phi_a &= f_a^{bc} \chi_b \phi_c. 
	\end{split}
\end{equation}
Under the $SO(2)$ rotating the $z$ plane, $A_a^{\zbar}$ transforms as a $(0,1)$ form and $A_a^{r}$ as a scalar.  

However, $\phi_a$ does not transform as a scalar, as one might at first expect. This is because  $ v_i$ transforms as $(\d z)^{1/2}$, so that $\vbar_i$ transforms as $(\d z)^{-1/2}$: we find that $\phi_a$ has spin $-1$.   

In a similar way, we can compute the zero modes of $\mc{B}$. We will identify the canonical bundle of twistor space with the bundle of quadratic differentials on the $z$-plane. Then, we find that the zero modes of the field $\mc{B}_a \in \Omega^{3,1}(\PT)$ are
\begin{equation} 
	\begin{split}
		&\eta_a^{r} (z,\zbar,r) (\d z)^2 \dbar r\\ 
		&\eta_a^{\zbar}(z,\zbar,r)(\d z)^2 \d \zbar\\ 
		&B_a(z,\zbar,r) \frac{\eps^{ij} (\d z)^2 \vbar_i \d \vbar_j } { r^4} 
	\end{split}
\end{equation}
From this, we see that $B_a$ transforms as an adjoint valued scalar of spin $1$, and that $\eta$ defines a partial connection in the bundle of quadratic differentials.

Under the gauge transformation with parameter $\chi$, the fields $\eta_a^{r}$, $\chi_a^{\zbar}$, $B_a$ are all adjoint valued scalars.    There is an additional gauge symmetry, where the gauge parameter $\psi$ is adjoint valued and of spin $2$, under which the fields transform as
\begin{equation} 
	\begin{split}
		\delta \eta_a^{r} &= \partial_r \psi_a\\ 
		\delta \eta_a^{\zbar}&= \partial_{\zbar} \psi_a\\
		\delta B_a &= 0.
	\end{split}
\end{equation}

To write the Lagrangian, we combine $A^{r}$, $A^{\zbar}$ into a partial connection
\begin{equation} 
	A = A^{r} \d r + A^{\zbar} \d \zbar,
\end{equation}
and combine $\eta^r$, $\eta^{\zbar}$ into
\begin{equation} 
\eta = \eta^r \d r +  \eta^{\zbar} \d \zbar. 
\end{equation}

The Lagrangian for holomorphic BF theory, when restricted to these zero modes, becomes 
\begin{equation} 
	\int \d z \op{Tr} \left(  B F(A) + \eta \d_A \phi \right).  
\end{equation}
This is precisely the field content, Lagrangian, and symmetries described in \cite{Aganagic:2017tvx,Costello:2020jbh} for the supersymmetric localization of $3d$ $\mathcal{N}=2$ gauge theory with adjoint matter.

\subsection{Including Kaluza-Klein modes}
So far, we have only described the zero modes. When we include all the KK modes we get an infinite tower of fields of the same type.   

Let us describe the result.
	Including all KK modes, the three dimensional topological-holomorphic theory associated to holomorphic BF theory on twistor space has fields 
	\begin{equation}
		\begin{split}
			A_a[k,l] &= A_a^r[k,l] \d r + A_a^{\zbar}[k,l]\d \zbar \\
			\eta_a[k,l] &= \eta_a^r[k,l] \d r + \eta_a^{\zbar} [k,l] \d \zbar \\
			J_a[k, l] &:= B_a[k,l]  \\ \til{J}_a[k, l]&:=\phi_a[k,l]  
		\end{split}
	\end{equation} (In anticipation of their identification with familiar chiral algebra generators, we have renamed the $B_a, \phi_a$ fields, though one should keep in mind their identifications with 3d fields).
	
	Here $k,l$ are integers $\ge 0$.  Under the $SU(2)$ which rotates $S^3$, these expressions live in a representation of spin $(k+l)/2$ and are weight vectors of weight $(k-l)/2$.  Under the $SO(2)$ rotating the coordinate $z$, $A_a[k,l]$ is of spin $(k+l)/2$, $J_a[k,l]$ is of spin $1-(k+l)/2$, $\eta_a[k,l]$ is of spin $2+(k+l)/2$ , $\til{J}_a[k,l]$ is of spin $-1-(k+l)/2$. 

Explicitly, we can expand certain six-dimensional field configurations in terms of these fields as 
\begin{equation} 
	\begin{split}
		\mc{A}_a(v_1,v_2,r,z)  =& \sum A_a[k,l] v_1^k v_2^l \\
		&+ \til{J}_a[k,l] \frac{1}{k!} \frac{1}{l!} \partial_{v_2}^k \partial_{v_1}^l  \frac{\eps^{ij} \vbar_i \d \vbar_j } { r^4} \\ 
		\mc{B}_a^{6d}(v_1,v_2,r,z) =& \sum \eta_a[k,l] v_1^k v_2^l \\
		&+ J_a[k,l] \frac{1}{k!} \frac{1}{l!} \partial_{v_2}^k \partial_{v_1}^l  \frac{\eps^{ij} \vbar_i \d \vbar_j } { r^4}. \\ 
	\end{split}
\end{equation}
There are other fields in six dimensions. However, they are massive and do not propagate in three dimensions and as such can be integrated out. 

A detailed calculation of the KK reduction of holomorphic theories along spheres was performed in \cite{gwilliam2018higher}; we refer to that work for more details. 

It is convenient to organize the KK modes into generating functions
\begin{equation} 
	\begin{split}
		A_a(v_1,v_2) &= \sum_{k,l \ge 0} v_1^k v_2^l A_a[k,l] \\
		\eta_a(v_1,v_2) &= \sum_{k,l \ge 0} v_1^k v_2^l \eta_a[k,l] \\
		J_a(v_1,v_2) &=  \sum_{k,l \ge 0} v_1^{-l}v_2^{-k} J_a[k,l] \\
		\til{J}_a(v_1,v_2) &=  \sum_{k,l \ge 0} v_1^{-l}v_2^{-k} \til{J}_a[k,l]. 
	\end{split}
\end{equation}
where $v_1,v_2$ are an $SU(2)$ doublet of spin $1/2$.  Note that $A_a(v_1,v_2)$ defines a partial connection on $\R_{> 0} \times \C$ for the infinite-dimensional Lie algebra $\g[v_1,v_2]$. We can view $J$ as living in the adjoint representation of $\g[v_1,v_2]$, and $\eta,\til{J}$ as living in the co-adjoint representation, using the residue pairing against $v_1^{-1} v_2^{-1} \d v_1 \d v_2$ to identify $\g[v_1^{-1},v_2^{-1}]$ with the dual of $\g[v_1,v_2]$. 

As before, there are two kinds of gauge transformations which we can also arrange into generating functions:
\begin{equation} 
	\begin{split}
		\chi_a(v_1,v_2) &= \sum v_1^k v_2^l \chi_a[k,l] \\
		\psi_a(v_1,v_2) &= \sum v_1^k v_2^l \psi_a[k,l].
	\end{split}
\end{equation}
The gauge transformations are
\begin{equation} 
	\begin{split}
		\delta_{\chi} A_a(v_1,v_2) &= \d r \partial_r \chi_a(v_1,v_2) + \d \zbar \partial_{\zbar} \chi_a(v_1,v_2) + f_a^{bc} \chi_b(v_1,v_2) A_c(v_1,v_2) \\
		\delta_{\chi} \til{J}_a(v_1,v_2) &= f_{a}^{bc} \chi_b(v_1,v_2) \til{J}_c(v_1,v_2) \\
		\delta_{\chi} \eta_a(v_1,v_2) &= f_{a}^{bc} \chi_b(v_1,v_2) \eta_c(v_1,v_2) \\
		\delta_{\chi} J_a(v_1,v_2) &= f_{a}^{bc} \chi_b(v_1,v_2) J_c(v_1,v_2) \\
		\delta_{\psi} \eta_a(v_1,v_2) &= \d r \partial_r \psi_a(v_1,v_2) + \d \zbar \partial_{\zbar} \psi_a(v_1,v_2).
	\end{split}
\end{equation}
In these expressions, when we multiply a polynomial in $v_1,v_2$ by a polynomial in $v_1^{-1}$, $v_2^{-1}$ we drop any terms which involve any positive powers of either $v_1$ or $v_2$. 

	The Lagrangian is
	\begin{equation} 
		\int_{r,z} \oint_{v_1,v_2} v_1^{-1} v_2^{-1} \d v_1 \d v_2 \d z \left[ \op{Tr} ( J(v_1,v_2) F(A(v_1,v_2)) ) + \op{Tr} ( \til{J}(v_1,v_2) \d_{A(v_1,v_2)} \eta(v_1,v_2) )    \right] .	
	\end{equation}
We can expand this out terms of the components. We will write the terms components involving $J$ and $A$ explicitly:
\begin{equation} 
	\int_{r,z} \d z \left(\delta_{k=s, l = r} J_a[k,l] \d A_a[r,s] + f^{abc} \delta_{k = n+s, l = m+r}  J_a[k,l] A_b[m,n] A_c[r,s] \right) 
\end{equation}

This theory is simply the holomorphic-topological theory of \cite{Aganagic:2017tvx,Costello:2020jbh},  with gauge symmetry given by the infinite dimensional Lie algebra $\g[v_1,v_2]$ and matter in the co-adjoint representation $\g[v_1^{-1},v_2^{-1}]$.  

Of course, as with all theories with an infinite number of fields, loop-level computations can really only be done in the original six-dimensional context. 

It is important to point out (as K. Zeng has explained to us) that this Lagrangian acquires corrections coming from exchanges of the massive fields we have integrated out.  These corrections should be the bulk version of the quantum corrections to the chiral algebra presented in section \ref{s:quantum_chiral}.  

\subsection{KK reduction of the non-linear graviton construction}
Recall that Poisson BF theory on twistor space corresponds to self-dual gravity on $\R^4$.  Poisson BF theory is obtained by adding an interaction term to Abelian holomorphic Chern-Simons theory, where the gauge field is twisted by $\Oo(2)$.  Therefore the field content of the theory on $\R_{>0} \times \C$ is the same as that obtained from Abelian holomorphic BF theory, except that some of the spins have been changed.

The fields are the following.  From the field $\mc{H}$ on twistor space we get the fields 
\begin{equation} 	
	H^t[k,l] \ \ H^{\zbar}[k,l] \ \ \til{w}[k,l] 
\end{equation}
The fields $H^t[k,l]$ and $H^{\zbar}[k,l]$ combine into partial connection on $\R \times \C$ of spin $(k+l)/2 -1$ under rotation of the $z$ plane. The field $\til{w}[k,l]$ is of spin $-2-(k+l)/2$.  

From the field $\beta$ on twistor space, we get towers of fields
\begin{equation} 
	\beta^t[k,l] \ \ \beta^{\zbar}[k,l] \ \ w[k,l] 
\end{equation}
As before, $\beta^t[k,l]$ and $\beta^{\zbar}[k,l]$ combine into a partial connection of spin $(k+l)/2 + 3$.  The field $w[k,l]$ is of spin $2-(k+l)/2$.  

We can combine all these fields into generating functions, as before. We let $\op{Ham}(\C^2)$ be the Lie algebra of Hamiltonian vector fields on $\C^2$, under the Poisson bracket.  Then, we can view 
\begin{equation} 
	H = \sum v_1^k v_2^l H[k,l]  
\end{equation}
as a partial connection on $\R \times \C$ for the Lie algebra $\op{Ham}(\C^2)$.  Note that $\op{Ham}(\C^2)$ is a graded Lie algebra, where the function $v_1^k v_2^l$ has charge $(k+l)/2-1$.  The spin of the components of the connection are given by this grading.

Similarly, we can view the fields $\beta[k,l]$ as a partial connection with values in $\op{Ham}(\C^2)$, where the spin is shifted by $4$ from that coming from the grading on $\op{Ham}(\C^2)$. 

Let $\op{Ham}(\C^2)^\vee$ be the linear dual of $\op{Ham}(\C^2)$. Using the residue pairing we can write an element of $\op{Ham}(\C^2)^\vee$ as a series in $v_1^{-1}$, $v_2^{-1}$:
\begin{equation} 
	\op{Ham}(\C^2)^\vee = v_1^{-1} v_2^{-1} \d v_1 \d v_2 \C[v_1^{-1}, v_2^{-1}].  
\end{equation}
In this way, we can arrange the fields $\til{w}[k,l]$ into a field valued in $\op{Ham}(\C^2)^\vee$, where the spin is given by the natural grading on $\op{Ham}(\C^2)^\vee$, shifted by $-3$. The fields $w[k,l]$ arrange into a field valued in $\op{Ham}(\C^2)^\vee$, where the spin is shifted by $1$ from that induced from the grading on $\op{Ham}(\C^2)^\vee$.

Putting all this together, we find that the Lagrangian is
\begin{equation} 
	\int_{R_{> 0} \times \C} \d z \ip{w, \d H + \tfrac{1}{2}\eps_{ij} \partial_{v_i} H \partial_{v_j}H } + \int_{\R_{> 0} \times \C} \d z \ip{\til{w}, \d \beta} + \ip{\til{w}, \eps_{ij} \partial_i H\partial_j \beta}. 
\end{equation}
There are, as before, two kinds of gauge transformations, with parameters $\chi$ and $\psi$, and gauge transformations 
\begin{equation} 
	\begin{split}
		\delta_{\chi} H(v_1,v_2) &= \d r \partial_r \chi(v_1,v_2) + \d \zbar \partial_{\zbar} \chi(v_1,v_2) + \eps_{ij}\partial_{v_i} \chi(v_1,v_2)\partial_{v_j} H(v_1,v_2) \\
		\delta_{\chi} \til{w}(v_1,v_2) &= \eps_{ij} \partial_{v_i} \chi(v_1,v_2) \partial_{v_j} \til{w}(v_1,v_2) \\
		\delta_{\chi} \beta(v_1,v_2) &=\eps_{ij} \partial_{v_i} \chi(v_1,v_2) \partial_{v_j} \beta(v_1,v_2) \\
		\delta_{\chi} w(v_1,v_2) &= \eps_{ij} \partial_{v_i} \chi(v_1,v_2) \partial_{v_j} w(v_1,v_2) \\
		\delta_{\psi} \eta(v_1,v_2) &= \d r \partial_r \psi_a(v_1,v_2) + \d \zbar \partial_{\zbar} \psi(v_1,v_2).
	\end{split}
\end{equation}
This Lagrangian, and these gauge transformations, are those for the  partially-topological theories of \cite{Aganagic:2017tvx, Costello:2020jbh} where the gauge group is the infinite dimensional Lie algebra $\op{Ham}(\C^2)$ and the matter lives in the co-adjoint representation $\op{Ham}(\C^2)^\vee$.

It can be convenient to write the Lagrangian in components:
\begin{align*}
		 	&\int_{\R_{> 0} \times \C} \d z w[k,l]\d H[l,k]  + \int_{\R_{> 0} \times \C}\d z {1 \over 2} ( ms-nr  )  w[n+s-1,m+r-1]  H[m,n] H[r,s]   \\
		&+ \int_{\R_{> 0} \times \C} \d z \til{w}[k,l] \d \beta[l,k] + \int_{\R_{> 0} \times \C}\d z (ms-nr) \til{w}[n+s-1,m+r-1] H[m,n]\beta[r,s]. 
\end{align*}

\subsection{The boundary algebra for KK modes }
Boundary algebras for holomorphic-topological theories were studied in \cite{Costello:2020jbh}. Here we will review these methods, and apply them to the partially topological theories described above. We will find that it is an enlargement of the chiral algebra studied in celestial holography \cite{Guevara:2021abz}. 

Since this is a first-order Lagrangian, the boundary conditions involve setting half the fields to zero.   In the $3d$ theory coming from holomorphic BF theory on twistor space, we will ask that 
\begin{equation} 
\begin{split}
	\eta_a[k,l] &= 0 \\
	A_a[k,l] &= 0
\end{split}
\end{equation}
at the boundary $r = \infty$.  

The remaining boundary operators, as studied in \cite{Costello:2020jbh}, are functions of the fields $\til{J}_a[k,l]$ and $J_a[k,l]$.  The generators of the chiral algebras are thus elements of the vector space $\g[v_1,v_2]  \oplus \g[v_1,v_2]$.  

In \cite{Costello:2020jbh,gwilliam2019one} the boundary OPEs were computed, at tree level.  We will reproduce the calculation here.  We will change coordinates and let $s = 1/r$, so that the boundary is at $s=0$.  The bulk-boundary propagator for field theories of this type is discussed in equations (5.42), (5.7)  of \cite{Costello:2020jbh}\footnote{Actually, there is a small typo in the expression for the propagator which we correct}.   From these, we find that the field sourced by the operator $J_a[k,l]$ placed at $s = 0$, $z = 0$ is  
\begin{equation} 
	A_a[l,k] = \frac{3}{16 \pi \i}	\frac{  \zbar \d s - \tfrac{1}{2} s \d \zbar   }{ (\norm{z}^2 + s^2)^{3/2}    }. 
\end{equation}
All other field components are zero. Thus, the OPE can be computed by the Feynman diagram with one external line labeled by the $J$-field:
\begin{equation}
	\begin{split}
		J_b[k,l](0,0) J_c[m,n](z,0) &= f^{a}_{bc} \int_{z',s}  J_a[m+k,n+l](z',s)\d z'   \frac{  (\zbar' \d s - \frac{1}{2} s \d \zbar') ((\zbar'-\zbar) \d s - \frac{1}{2} s \d \zbar' )     }{ (\norm{z'}^2 + s^2)^{3/2}   (\norm{z-z'}^2 + s^2)^{3/2}    }  \\
		&= f^{a}_{bc} \int_{z',s}  J_a[m+k,n+l](z',s) \d z'  \frac{  \frac{1}{2} \zbar s \d s  \d \zbar'     }{ (\norm{z'}^2 + s^2)^{3/2}   (\norm{z-z'}^2 + s^2)^{3/2}    }\\
		&= \frac{1}{2} f^{a}_{bc} \zbar  \int_{z',s}  J_a[m+k,n+l](z',s)  \frac{  \d z \d s  \d \zbar'     }{ (\norm{z'}^2 + s^2)^{3/2}   (\norm{z-z'}^2 + s^2)^{3/2}    }\\
	\end{split}
\end{equation}
(We have absorbed factors of $\pi$ into normalization of the operators).  To perform the integral on the last line, we Taylor expand $J[m+k,n+l](z',s)$ as a series in $z',\zbar', s$.  It is easy to check for all terms except the constant term, the integral has no singularities as $z \to 0$.  Therefore, the OPE is computed by replacing $J_a[m+k,n+l](z',s)$ by $J_a[m+k,n+l](0,0)$, giving  
\begin{equation} 
	\frac{1}{2} f^{a}_{bc} \zbar  J_a[m+k,n+l](0,0) \int_{z',s} \frac{  \d s  \d \zbar'     }{ (\norm{z'}^2 + s^2)^{3/2}   (\norm{z-z'}^2 + s^2)^{3/2}    }
\end{equation}
It remains to compute the integral
\begin{equation} 
	\int_{z',s} \frac{ s  \d s \d z \d \zbar'     }{ (\norm{z'}^2 + s^2)^{3/2}   (\norm{z-z'}^2 + s^2)^{3/2}    }  
\end{equation}
which one can see by dimensional reasons is a non-zero multiple of  $\frac{1}{\norm{z}^2}$. The precise constant is unimportant, as it can be absorbed into a redefinition of the operators $J_a[m,n]$. We have thus found that
\begin{equation} 
	J_b[k,l] (0) J_c[m,n](z) \sim \frac{1}{z} f_{bc}^a J_a[k+m,l+n]. 
\end{equation}
This matches the OPE for celestial symmetries of gauge theory computed in \cite{Guevara:2021abz}.

In a similar way, we get the OPE
\begin{equation} 
	\til{J}_b[k,l](0) J_c[m,n](z) \sim \frac{1}{z} f_{bc}^a\til{J}_a[k+m, n+l] (0). 
\end{equation}

\subsection{Gravitational theory}
For the gravitational theory, the computation is almost identical, except that the derivatives with respect to $v_i$ change the result slightly.  Recall that (in components) the interaction term is
\begin{equation} 
	\int_{\R_{> 0} \times \C}\d z{1 \over 2} ( ms-nr  )  w[n+s-1,m+r-1]  H[m,n] H[r,s]      
\end{equation}
This leads to the OPE
\begin{equation} 
	w[m,n](0)w[r,s](z) \simeq\frac{1}{z} (ms-nr)w[m+r-1,s+n-1](0). 
\end{equation}
This is the Kac-Moody algebra for the Lie algebra of Hamiltonian vector fields on the plane, as found in \cite{Strominger:2021lvk}.

Similarly, the interaction
\begin{equation} 
	\int_{\R_{> 0} \times \C} \d z (ms-nr)   \til{w}[n+s-1,m+r-1]  H[m,n] \beta[r,s]      
\end{equation}
leads to the OPE
\begin{equation} 
	\til{w}[m,n](0) w[r,s](z) \simeq\frac{1}{z}(ms-nr) \til{w}[m+r-1,n+s-1]. 	 
\end{equation}

\section{The chiral algebra by Koszul duality}\label{s:Koszul}

We have shown how gauge symmetries of theories on twistor space lead naturally to the celestial chiral algebras of \cite{Guevara:2021abz, Strominger:2021lvk}. In this section, we will explain how to reproduce these chiral algebras using an alternative and calculationally efficient method. 

The underlying idea is very simple. Consider a situation where one is trying to couple two systems together along a common lower-dimensional submanifold in spacetime. One can determine constraints on the space of possible couplings imposed by the requirement that the coupling is gauge invariant. At the classical level, gauge transformations will take the form of Hamiltonian symplectomorphisms, so it is perhaps unsurprising that there are many ways to recover algebras such as the loop algebra of $w_{1 + \infty}$. In holographic contexts, one starts with a coupled ``bulk/defect'' system of closed strings or supergravity, and branes. The program of twisted holography leverages this setup, as well as the cohomological properties of twisted theories, to deduce results about the AdS/CFT dual pairs that result from top-down string theory models \cite{Costello:2018zrm, Costello:2020jbh}. Very loosely speaking, we think of the bulk/defect system in a holographic setup as a bulk/boundary dual pair \footnote{In examples of the AdS/CFT correspondence, one must incorporate into this setup backreaction of the D-branes on the closed string modes, done in the twisted context in \cite{Costello:2018zrm, Costello:2020jbh}, but to recover the celestial algebras pertinent to flat space holography, we can avoid this complication.}. Ultimately, we conjecture that twisted holography in twistor space provides the origin for the aspects of the celestial holography program governed by universal, asymptotic symmetries. Following \cite{Costello:2020jbh}, this point of view then has an immediate connection to Koszul duality. In this section, we will simply review the computation of \cite{Costello:2020jbh} that produces $\textrm{Ham}(\mathbb{C}^2)$ and remark on a twisted holography interpretation in \S \ref{s:conclusions}.

One proceeds in this approach order-by-order in perturbation theory (holographically, in a $1/N$-expansion) to compute the constraints imposed by gauge-invariance: one must evaluate the BRST variation of Feynman diagrams representing the bulk/defect coupling at a given order and demand that the total BRST variation vanish. For a given bulk theory, this imposes nontrivial constraints on a putative defect operator product. Quantum mechanical effects will in general deform the classic gauge algebras; see \cite{CWY} for the result when coupling 4d Chern-Simons theory to a topological line defect.  

As explained in \cite{Costello:2020jbh, Gaiotto:2019wcc}, the algebras resulting from these constraints can be encapsulated mathematically by the notion of Koszul duality. We refer to \cite{PW} for a recent exposition and more details on this point of view.

Let us now review the setup and computations of \cite{Costello:2020jbh}. The celestial symmetry algebra for gauge theory will come about when considering the coupling of 6d holomorphic Chern-Simons theory to a 2d holomorphic\footnote{A 2d defect on $\mathbb{C}$ is holomorphic if antiholomorphic translations $\partial_{\bar{z}}$ are trivial in the cohomology of the (twisted) BRST-differential.} defect, while the symmetry algebra for gravity will come about when coupling 6d Poisson BF theory to a 2d holomorphic defect. Because the computations are so similar, we will consider them in parallel. In both cases, we will work at tree-level to start. 

Consider holomorphic Chern-Simons theory (the worldvolume theory of Euclidean D-branes in the B-twist) with Lie algebra $\mathfrak{g}$ on $\mathbb{C} \times \mathbb{C}^2$.
The Lagrangian can be expressed in terms of a partial connection $A \in \Omega^{0, 1}(\mathbb{C}^3, \mathfrak{g})$, with equations of motion $F^{0, 2}(A) = 0$.

We consider a defect along a holomorphic plane $\mathbb{C}_z \subset \mathbb{C}_{z, v_1, v_2}^3$, which we will endow with holomorphic coordinate $z$. Our analysis is purely local, so this copy of $\mathbb{C}$ should be viewed as an open subset of the twistor fiber $\mathbb{CP}_z^1$ in $\mathbb{PT}$; for this local analysis, considering flat spacetimes suffices. Next, as argued in \cite{Costello:2020jbh}, one can consider the most general bulk/defect coupling 
\begin{equation}
	\sum_{k_1, k_2 \geq 0}\int_{\mathbb{C}_z}{1 \over k_1! k_2!} \partial_{v_1}^{k_1}\partial_{v_2}^{k_2}A^a_{\bar{z}} J_a[k_1, k_2] \label{eqn:general_coupling}
\end{equation} in terms of some general defect operators $J_a[k_1, k_2]$.

The tree-level chiral algebras come from requiring that the BRST variation of the Feynman diagrams in Figure  \ref{fig:cancel1} for gauge theory (we refer the reader to \cite{Costello:2020jbh} for the details).

\begin{figure}
	\begin{center}
	\begin{tikzpicture}
	\begin{scope}
		\node[circle, draw] (J1) at (0,1) {$J$};
		\node[circle, draw] (J2) at (0,-1) {$J$};
		\node (A1) at (2,1)  {$A_{\zbar}$};
		\node (A2) at (2,-1)  {$A_{\zbar}$};
		\draw[decorate, decoration={snake}] (J1) --(A1);
		\draw[decorate, decoration={snake}] (J2) --(A2);
		\draw (0,2) -- (J1) --(J2) -- (0,-2); 
	\end{scope}	
	\begin{scope}[shift={(4,0)}];
		\node[circle, draw] (J) at (0,0) {$J$};
		\node (A1) at (3,1)  {$A_{\zbar}$};
		\node (A2) at (3,-1)  {$A_{\zbar}$};
		\node[circle,draw,fill=black, minimum size = 0.2pt]  (V) at (1.5,0) {};  
		\draw[decorate, decoration={snake}] (J) -- (1.5,0) --  (A1);
		\draw[decorate, decoration={snake}] (1.5,0) --(A2);
		\draw (0,2) -- (J)-- (0,-2);
	\end{scope}
	\end{tikzpicture}
	\caption{Cancellation of the gauge anomaly of these two diagrams leads to the OPEs of the currents $J[k_1,k_2]$. \label{fig:cancel1}}
	\end{center}
\end{figure}
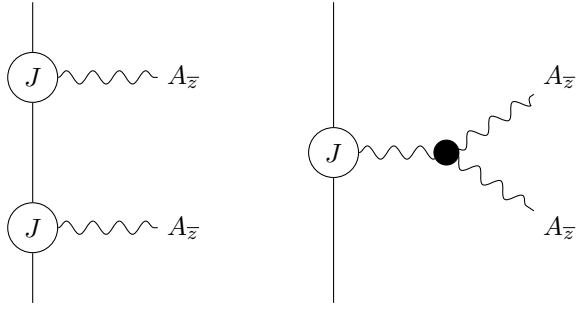
The resulting algebra is 
\begin{equation}
 J_b[r, s](0)J_c[k, l](z) \sim {1 \over z}f^a_{bc}J_a[r + k, s + l](z)\\
\end{equation} which is nothing but the algebra of holomorphic maps into $\mathfrak{g}$. 

One of the nice things about the Koszul duality point of view is that, even at tree-level, this perspective automatically places in the chiral algebra generators in the basis that makes manifest the large symmetry algebra they generate.

We have stated the result for holomorphic Chern-Simons theory. The same analysis for holomorphic BF theory gives us a second set of generators, $\til{J}[k,l]$, which couple the field $\mc{B}$.  The OPEs can be read from  the requirement of gauge invariance, as above.  We find the familiar OPEs
\begin{equation} 
	\begin{split} 
	J_b[r, s](0)J_c[k, l](z) &\sim {1 \over z}f^a_{bc}J_a[r + k, s + l](0)\\	
J_b[r, s](0)\til{J}_c[k, l](z) &\sim {1 \over z}f^a_{bc}\til{J}_a[r + k, s + l](0).\\
	\end{split}
\end{equation}

For Poisson BF theory, which is the twistor space uplift of self-dual gravity, we find something very similar.  As before, we describe the fields of Poisson BF theory as $\mc{H} \in \Omega^{0,1}(\PT, \Oo(2))$ and $\beta \in \Omega^{3,1}(\PT, \Oo(-2))$.  If we consider the universal one-dimensional holomorphic theory that can couple to Poisson BF theory, we have operators $w[r,s]$ that couple to the normal derivatives of $\mc{H}$, and $\til{w}[r,s]$ that couple to the normal derivatives of $\beta$.

The OPEs, which can again be read off from gauge invariance, are
\begin{equation} 
	\begin{split} 
		w[r, s](0) w[k, l](z) &\sim {1 \over z}(rl-ks) w[r + k-1, s + l-1](0)\\	
		w[r, s](0)\til{w}[k, l](z) &\sim {1 \over z}(rl-ks)\til{w}[r + k-1, s + l-1](0)
	\end{split}	
\end{equation}
These OPEs can be computed by a small variant of the computations of \cite{Costello:2020jbh}, section 7.3 (see in particular Theorem 7.3.1). 

\subsection{States and generators of the Koszul dual algebra}
For any holomorphic field theory on twistor space, there is a natural bijection
\begin{center}
	Single particle conformal primary states $\longleftrightarrow$ Conformal primary generators of the chiral algebra
\end{center}
We can see this by considering equation \eqref{eqn:general_coupling}, describing the universal chiral algebra we can couple along a $\CP^1$ in twistor space.  We get a state in the vacuum module of this chiral algebra by studying the chiral algebra in the presence of an on-shell background field which is localized at $z = 0$.     Such a background field is a single-particle state in the celestial CFT, as we saw in \S \ref{s:states}.  If this state is a conformal primary, then so is the state in the chiral algebra.   

In this way, we find there is a natural map from single-particle conformal primary states to the generators of the chiral algebra. It is easy to see that this is a bijection. Perhaps this is best illustrated with the example of holomorphic Chern-Simons theory.    There, conformal primary states are expressions like
\begin{equation} 
	\mc{A} = \delta_{z =0} v_1^{k_1} v_{2}^{k_2} \t_a. 
\end{equation}
These get sent to  the generators $J_a[k_1,k_2]$ of the algebra.

These twistor representatives were also recently employed in \cite{Adamo:2021zpw, Adamo:2019ipt} to obtain the celestial OPEs from a worldsheet computation in the ambitwistor string. It would be desirable to understand the relationship between these two perspectives in four dimensions. 

\subsection{The axion from Koszul duality}
We have already computed the contribution of the axion fields to the chiral algebra by considering gauge transformations.  Here we will redo the computation at tree level by considering Koszul duality.  As we have seen, the axion field contributes two towers $E[r,s]$ and $F[r,s]$.  Here we will derive these operators and their OPEs from the point of view of Koszul duality.

The derivation in this case is a little more complicated than that for gauge theory because the fundamental field is a \emph{constrained} field: it is a $(2,1)$-form $\eta$ constrained to satisfy $\partial \eta = 0$.  Perhaps the simplest way to proceed is to work instead with a $(1,1)$ form $\alpha$ with $\partial \alpha = \eta$.  This $(1,1)$ form is subject to the usual $(1,0)$-form gauge symmetries, where the gauge variation is given by the $\dbar$ operator. In addition, we have a gauge variation by a $(0,1)$ form, so that the gauge invariant quantity is $\eta = \partial \alpha$.

We let $\alpha_i$, $\alpha_z$ be the three $(0,1)$-form components of $\alpha$.  
Throughout, we will use the notation $D_{r, s} = {1 \over r! s!}\partial_{v_1}^r \partial_{v_2}^s$. The most general coupling invariant under the $\dbar$ gauge transformations is
\begin{equation} 
	\int_{z}  D_{r,s} \alpha_i e^i[r,s] + D_{r,s} \alpha_z e_z [r,s]. \label{eqn:axion_coupling} 
\end{equation}
This must be be invariant under the $(0,1)$ form gauge transformation $\alpha \mapsto \partial \gamma$, for $\gamma \in \Omega^{0,1}(\PT)$.  This only happens if
\begin{equation} 
	\int_{z} D_{r,s} \partial_{v_1} \gamma e^1[r,s] + D_{r,s} \partial_{v_2} \gamma e^2[r,s] + D_{r,s} \partial_z \gamma e_z[r,s] = 0. 
\end{equation}
The last term can be integrated by parts.  Inserting $\gamma = \delta_{z = z_0} v_1^r v_2^s$, we find the identity
\begin{equation} 
	r e^1[r-1,s] + s e^2[r,s-1] = \partial_z e_z[r,s]  \label{eqn:axion_relation}  
\end{equation}
Thus, the operators $e^i[r,s]$ are not independent, a linear combination of them can be expressed as a descendent of $e_z[r,s]$.

Let us find the linear combination of these operators that match with the notation in section \ref{s:axion}. There, we defined $E[r,s]$ and $F[r,s]$ by certain explicit closed $2$-forms, which were presented as the de Rham operator applied to $\dbar$ closed $(1,1)$-forms: 
\begin{equation} 
	\begin{split} 
		E[r,s]   	 &= - \partial \left(  \frac{1}{r+s} \d z \delta_{z = 0}  ( v_1^r v_2^s) \right)  \\
		F[r,s]  &= \partial \left( \delta_{z = 0} \frac{1}{r+s+2}(v_1^{r+1} v_2^s \d v_2 - v_1^{r} v_2^{s+1} \d v_1 )   \right) 
	\end{split}
\end{equation}
The corresponding operators are those obtained by replacing $\alpha$ by these  $(1,1)$-forms in equation \eqref{eqn:axion_coupling}.  We find
\begin{equation}
	\begin{split} 
		E[r,s] &= - \frac{1}{r+s} e_z[r,s] \\ 
		F[r,s] &= \frac{1}{r+s+2} \left( e_2[r+1,s] - e_1[r,s+1]  \right) . 
	\end{split} \label{eqn:axion_ef}	
\end{equation}
The linear relation \eqref{eqn:axion_relation} tells us that these form a basis for the generators of the Koszul dual algebra; except that we miss $e_z[0,0]$ (because $E[r,s]$ is only defined for $r+s> 0$).  

The relation \eqref{eqn:axion_relation} tells us that $e_z[0,0]$ is a topological operator, and so it must have trivial OPE with all other operators.  The algebra does not change in a significant way if we include or remove $e_z[0,0]$, and in fact it is most natural to remove it. This is because it is the operator coming by coupling to the bulk theory in the background where the $(1,1)$-form is $\d z \delta_{z = 0}$.  This $(1,1)$ form is closed, so we do not see it if we view the $(2,1)$ form as our fundamental field.  

One can reproduce the OPEs found in \S \ref{s:axion} by studying the BRST variation of these couplings directly.  It is automatic that the  Koszul duality approach will give the same answer, because at tree level the mode algebra of the Koszul dual algebra always reproduces the Lie algebra of gauge transformations that preserve the vacuum field configuration.  

Let us illustrate this with a particularly simple example, in which we supress the factor of $\frac{\lambda_{\g}}{2 (2 \pi \i)^{3/2}}$ present in the axion coupling

Let us consider the BRST variation of the coupling 
\begin{equation}
    \sum_{m,n}\int_{z''} \til{J}_a[m,n](z'')D_{m,n}B^a(z'')
\end{equation} in the presence of an axion. One of the terms in the BRST variation of $B$  is 
\begin{equation}
	\begin{split} 
		\delta_{BRST} B^a \sim \epsilon_{ij} (\partial_i \chi^a) \eta_{j z}   + \eps_{ij} (\partial_i A^a) \gamma_{j z} \\
		\eps_{ij} \partial_i \chi^a \partial_j  \alpha_z  + \eps_{ij} \partial_i A^a \partial_j \theta_z .  \label{eqn:Bvariation} 	 
	\end{split}	
\end{equation}
where $\gamma \in \Omega^{2,0}$ is the gauge transformation which shifts $\eta$ by $\dbar \gamma$, and $\theta \in \Omega^{1,0}$ is the gauge transformation shifting $\alpha$ by $\dbar \theta$.

We can insert this into the path integral, giving 
\begin{equation}
	\sum_{m,n}\int_{z''} \til{J}_a[m,n](z'')D_{m,n} \left(
\eps_{ij} \partial_i \chi^a \partial_j  \alpha_z  + \eps_{ij} \partial_i A^a \partial_j \theta_z  \right).	
\end{equation} 
This expression involves the ghost field $\chi$, which shifts $A$ by $\dbar \chi$, and the ghost $\theta$, shifting $\alpha$ by $\dbar \theta$. 

As such, this can be cancelled by the linearized BRST variation of the  bi-local expression involving the coupling of the gauge field $A$ and the $(1,1)$-form $\alpha$: 
\begin{align}\label{eq:JE}
	\sum_{r, s}\sum_{k, l}\int_{z, z'}J^a[r, s](z)D_{r, s}A^a_{\bar{z}}(z) e_z[k, l](z')D_{k,l}\alpha_z(z').
\end{align} 
using equations \eqref{eqn:axion_coupling} and \eqref{eqn:axion_ef}. As usual, requiring that the BSRT variation  \eqref{eq:JE} cancels with that of \eqref{eqn:Bvariation} will constrain the OPEs between the $E$ and $J^a$ towers.  

The linearized BRST variation of equation \eqref{eq:JE} replaces $A^a$ by $\dbar \chi^a$, or $\alpha$ by $\dbar \theta$.  Inserting this gives
\begin{align}
	\sum_{r, s}\sum_{k, l}\int_{z, z'}  J^a[r, s](z) e_z[k,l] (z') \left(  D_{r, s}A^a_{\bar{z}}(z)  D_{k,l}\dbar\theta _z(z')+  D_{r, s}\dbar \chi  D_{k,l}\alpha_z(z') \right)
\end{align} 
Integrating by parts then gives the required equality for BRST invariance:
\begin{multline}\label{eq:equality}
	\sum_{r, s}\sum_{k, l} \int_{z, z'}\bar{\partial}_{\bar{z}'}(J^a[r, s](z)e_z[k, l](z'))  D_{r, s}\chi^a(z) D_{k,l}\alpha_z(z') \\ = \sum_{m,n}\int_{z''} \til{J}_a[m,n](z'')D_{m,n}(\epsilon_{ij}\partial_i \chi^a \partial_j \alpha_z)(z''),
\end{multline} which must hold for general field configurations $\chi, \alpha_z$.

To constrain the OPE we can therefore insert test functions (suppressing the Lie algebra data for ease of notation)
\begin{align}
   \chi &= G(z, \bar{z})v_1^r v_2^s \\
   \alpha_z &= H(z, \bar{z}) v_1^k v_2^l
\end{align} where $G, H$ are both arbitrary functions on the defect. 
Inserting the test functions into \eqref{eq:equality} for arbitrary $G, H$ yields the following equality on integrands:
\begin{align}
    \bar{\partial}_{\bar{z}'}(J^a[r, s](z)e_z[k, l](z')) \\ &=  \delta_{z = z', \bar{z}= \bar{z}'}(lr-ks)\til{J}_a[k+r-1,l + s-1](z)
\end{align} which, translating from $e_z[k, l]$ to $E[k, l]$ and re-introducing the factor of $\what{\lambda}_{\g}$  gives exactly the desired OPE:
\begin{align}
	J^a[r, s](0)E[k, l](z) =  \what{\lambda}_{\g} \frac{1}{z}\frac{(lr-ks)}{k + l}\til{J}_a[k+r-1,l + s-1](0).
\end{align}

The other axion contributions to the OPEs can be extracted analogously. 

\subsection{Quantum corrections to the Koszul dual algebra}\label{s:quantum_chiral}

The Koszul duality point of view can also be used to readily obtain loop-level corrections to OPEs.  Of course, any attempt to quantum-correct the Koszul dual algebra will run into difficulties if the theory is anomalous on twistor space.   For anomalous theories, we can still perform leading-order quantum corrections to the algebra.  However, we do not expect that this persists to all orders\footnote{More formally, only for anomaly free theories do we expect these quantum corrections to define a flat family of vertex algebras. ``Flat'' means that the family of vertex algebras does not jump in size when we quantize.  We expect that in the anomalous cases, the quantum-corrected OPEs will force some states to vanish that do not vanish in the classical limit. }. 

For $5$-dimensional cousins of Chern-Simons theory, this was studied in \cite{Costello:2017fbo}.  A related analysis appears in the forthcoming work of \cite{Yehao}.  We state the result in \cite{Costello:2020jbh}.  We find that the diagram in Figure \ref{fig:correction} is not invariant under gauge transformations.    
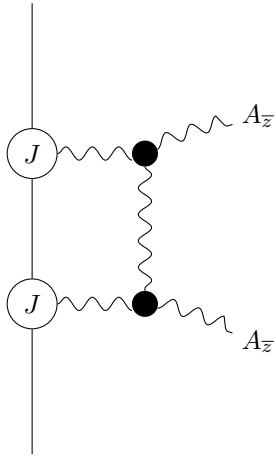
\begin{figure}
	\begin{center}
	\begin{tikzpicture}	
		\node[circle, draw] (J1) at (0,1) {$J$};
		\node[circle, draw] (J2) at (0,-1) {$J$};
		\node (A1) at (3,1.5)  {$A_{\zbar}$};
		\node (A2) at (3,-1.5)  {$A_{\zbar}$};

		\node[circle,draw,fill=black ]  (V1) at (1.5,1) {}; 
		\node[circle,draw,fill=black]  (V2) at (1.5,-1) {};  
		\draw[decorate, decoration={snake}] (J1) -- (V1) --  (A1);
		\draw[decorate, decoration={snake}] (J2) -- (V2) -- (A2);
		\draw[decorate, decoration={snake}] (V1) -- (V2); 
		\draw (0,3) -- (J1) -- (J2) -- (0,-3);

	\end{tikzpicture}
	\end{center}
	\caption{This diagram has a gauge anomaly leading to a quantum correction of the chiral algebra. \label{fig:correction}}
\end{figure}

The gauge anomaly is proportional to
\begin{equation} 
	\hbar \int_{w_1 = w_2 = 0} \eps_{ij} \left( \partial_{w_i} A_{\zbar}^a \right) (\partial_{w_j} \mf{c}^b ) K^{fe}  f^c_{ae} f^d_{bf}   J_c J_d    + \dots 
\end{equation}
where the ellipses indicate terms with more than two derivatives applied to the bulk fields and $K^{fe}$ is the Killing form on $\mf{g}$.   We would like this anomaly to be canceled by the first Feynman diagram in figure \ref{fig:cancel1}.   A necessary condition for this to happen is that the classical OPE of the operators $J[1,0]$ and $J[0,1]$ acquires a quantum correction:
\begin{equation} 
	J_a [1,0] (0)  J_b [0,1] (z) \simeq \frac{1}{z} f^c_{ab} J_c[1,1] + \hbar \frac{1}{z}  K^{fe} f_{ae}^c f_{bf}^d J_c[0,0] J_d[0,0]. \label{eqn:quantum_correction} 
\end{equation}
If we have such a quantum-corrected OPE, then the gauge variation of the expression
\begin{equation} 
	\int_{z_1, z_2} J_a[1,0] (z_1) \partial_{w_1}A_{\zbar}^a(z_1) J_b[0,1] (z_2)  
\end{equation}
gives us, at order $\hbar$, 
\begin{equation} 
	\hbar \int_{w_1 = w_2 = 0} \eps_{ij} \left( \partial_{w_i} A_{\zbar}^a \right) (\partial_{w_j} \mf{c}^b ) K^{fe}  f^c_{ae} f^d_{bf}   J_c J_d    
\end{equation}
which cancels the anomaly from the diagram in figure \ref{fig:correction}.

This relation is particularly powerful in the case that the indices $a,b$ correspond to commuting elements of $\g$. Then, classically, $J_a[1,0]$, $J_b[0,1]$ has a non-singular OPE, but it acquires polar part at the quantum level.

In the case of holomorphic BF theory, the relation coming from this diagram is different. This is because the propagator connects $\mc{A}$ with $\mc{B}$.   
\begin{figure}
	\begin{center}
	\begin{tikzpicture}
		\begin{scope}
		\node[circle, draw] (J1) at (0,1) {$\til{J}$};
		\node[circle, draw] (J2) at (0,-1) {$J$};
			\node (A1) at (3,1.5)  {$\mc{A}_{\zbar}$};
			\node (A2) at (3,-1.5)  {$\mc{A}_{\zbar}$};

		\node[circle,draw,fill=black ]  (V1) at (1.5,1) {}; 
		\node[circle,draw,fill=black]  (V2) at (1.5,-1) {};  
		\draw[decorate, decoration={snake}] (J1) -- (V1) --  (A1);
		\draw[decorate, decoration={snake}] (J2) -- (V2) -- (A2);
		\draw[decorate, decoration={snake}] (V1) -- (V2); 
		\draw (0,3) -- (J1) -- (J2) -- (0,-3);
		\end{scope}

		\begin{scope}[shift={(5,0)}]
			\node[circle, draw] (J1) at (0,1) {$\til{J}$};
		\node[circle, draw] (J2) at (0,-1) {$\til{J}$};
			\node (A1) at (3,1.5)  {$\mc{A}_{\zbar}$};
			\node (A2) at (3,-1.5)  {$\mc{B}_{\zbar}$};

		\node[circle,draw,fill=black ]  (V1) at (1.5,1) {}; 
		\node[circle,draw,fill=black]  (V2) at (1.5,-1) {};  
		\draw[decorate, decoration={snake}] (J1) -- (V1) --  (A1);
		\draw[decorate, decoration={snake}] (J2) -- (V2) -- (A2);
		\draw[decorate, decoration={snake}] (V1) -- (V2); 
		\draw (0,3) -- (J1) -- (J2) -- (0,-3);
		\end{scope}	
	\end{tikzpicture}
	\end{center}
	\caption{These diagrams quantum correct the chiral algebra for holomorphic BF theory. \label{fig:anomaly2}}
\end{figure}
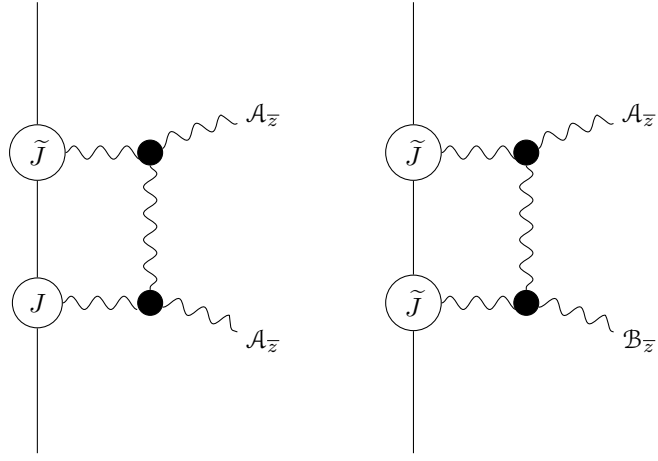

Let us assume that the Lie algebra elements $\t_a$, $\t_b$ corresponding to the external lines commute with one another.  Then, these diagrams give rise to quantum corrections
\begin{equation} 
	\begin{split} 
		J_a [1,0] (0)  J_b [0,1] (z)  \simeq C \frac{1}{z}  K^{fe} f_{ae}^c f_{bf}^d ( \til{J}_c[0,0] J_d[0,0] + \til{J}_d[0,0] J_c[0,0]  )   \\  
J_a [1,0] (0)  \til{J}_b [0,1] (z)  \simeq C \frac{1}{z}  K^{fe} f_{ae}^c f_{bf}^d \til{J}_c[0,0] \til{J}_d[0,0]    
		\label{eqn:quantum_correction_bf}  
	\end{split}
\end{equation}
Here $C$ is a constant we have not determined.  

Computing the analogous corrections for the gravitational theory should be quite interesting; in that setting, one must contend with the anomaly of Poisson BF theory which is not yet understood. 

It will also be interesting to compute higher-loop corrections to the gauge theory chiral algebra, including the axion field. We plan to pursue this in future work.

\section{Conformal blocks and local operators}\label{s:blocks}

So far, we have constructed a chiral algebra associated to any local holomorphic QFT on twistor space.  In this section, we will show how the vector space of conformal blocks of the chiral algebra is isomorphic to the space of local operators of the $4d$ theory. We will consider a local operator inserted at the origin of $\mathbb{R}^4$ and subsequently suppress the position dependence of the operator $\mathcal{O}:=\mathcal{O}(0)$ and its corresponding conformal block $\langle \mathcal{O}|$. We discuss infinitesimal translations of the local operator below.

Since the term conformal blocks is somewhat overloaded (with slightly different meanings in the math and physics literature), let us explain precisely what we 
mean.  Let $\mc{C}$ be any vertex algebra, which we \emph{do not} assume has a stress-energy tensor.  We will assume that the algebra $\mc{C}$ is the algebra of local operators at the boundary of a three-dimensional partially holomorphic theory, on $\R_{>0} \times \C$, with coordinates $r$ and $z$. We will assume that the bulk $3d$ theory has a stress-energy tensor, but that the boundary algebra may not; this means that $\mc{C}$ has an action of the Virasoro algebra but that it does not come from a Virasoro current. As we have seen, this is the case for  the algebras relevant to celestial holography.  We will assume the algebra $\mc{C}$ lives at $r = \infty$.

We will define the conformal blocks of $\mc{C}$ to be the Hilbert space of the $3d$ theory on $\R_{> 0 } \times S^2$, at $r = 0$.  If $\psi$ is a conformal block (with this definition) then, for any collection $\mc{O}_i$ of local operators in the algebra $\mc{C}$, we can define
\begin{equation}
	\ip{\psi \mid \mc{O}_1(z_1) \dots \mc{O}_n(z_n)}.
\end{equation}
This is a correlator in the bulk-boundary system on $[0,\infty] \times \CP^1$, where at $0$ we place the state $\psi$, at $\infty$ we have the boundary condition giving rise to the algebra $\mc{C}$, and we insert the operator $\mc{O}_i(z_i)$ at points $z_i \in \infty \times \CP^1$. 

Let us explain how conformal blocks of the Koszul dual vertex algebra can be matched with local operators of the $4d$ theory. Consider any holomorphic theory on twistor space.  Let $\PT'$ denote the complement of the $\CP^1$ at the origin in $\PT$.  We have a double fibration
\begin{equation} 
	\R^4 \setminus 0 =  R_{> 0} \times S^3 \longleftarrow  \PT' = S^3 \times \CP^1 \times \R_{> 0} \longrightarrow \R_{> 0} \times \CP^1
\end{equation}
The left-hand arrow is the standard $\CP^1$ fibration of twistor theory, and the right-hand arrow is the $S^3$ fibration we used to build our $3d$ theory.  

We are starting with a field theory on $\PT'$, and performing KK reduction (including all KK modes, if they are present) along the left or right arrows. When we do this, the Hilbert space of the lower-dimensional theory is the same as that of the theory on $\PT'$, since we have not really changed anything:
\begin{equation} 
	\mc{H}(S^3 \times \CP^1) = \mc{H}(S^3) = \mc{H}(\CP^1).  
\end{equation}
In each case, this is the Hilbert space at the locus where the radial direction $r$ is zero, working in radial quantization. 

The Hilbert space $\mc{H}(\CP^1)$ of the $3d$ theory is, by definition, the space of conformal blocks. The Hilbert space $\mc{H}(S^3)$ is the space of local operators in the theory on $\R^4$, since it is obtained by radial quantization.  It is important to note here that compactifying from $\PT$ to $\R^4$ does not introduce KK modes, so we find the space of local operators of an ordinary $4d$ theory. 

In this way, by considering compactification of the theory on twistor space in two different ways, we have identified conformal blocks with local operators.

For the rest of the section we will give several different perspectives on this result.

\subsection{Defining conformal blocks axiomatically}
At a more abstract level, following Beilinson and Drinfeld \cite{beilinson2004chiral}, we can define the  vector space of conformal blocks in an axiomatic way. We say that a conformal block $\psi$ is the data needed to define correlation functions for elements of the algebra $\mc{C}$. 
 Let $\mc{C}$ to mean the vacuum module of the vertex algebra, and let $\mc{C}_z$ be this module placed at $z \in \CP^1$ (the vacuum module naturally forms a bundle on $\CP^1$).  Beilinson-Drinfeld's definition takes a conformal block to be the data of a linear map
 \begin{equation}
\begin{split}
 \psi &: \mc{C}_{z_1} \otimes \dots \otimes \mc{C}_{z_n} \to \C \\
 \mc{O}_1 \otimes \dots\otimes \mc{O}_n &\mapsto \ip{\psi \mid \mc{O}_1(z_1) \dots \mc{O}_n(z_n)} 
 \end{split}
 \end{equation}
 These linear maps must satisfy all the properties expected for correlation functions:
\begin{enumerate} 
	\item First, $\ip{\psi \mid \mc{O}_1(z_1) \dots \mc{O}_n(z_n)}$  must be rational functions of $z_i$. 
	\item Secondly, as  
$z_i$ approaches $z_j$, we can replace $\mc{O}_i(z_i) \mc{O}_j(z_j)$ by the OPE
 \begin{equation}
	 \mc{O}_i(z_i) \mc{O}_j(z_j) \sim \sum_{k > 0}  (z_i - z_j)^{-k} \mc{O}'_{ij,k}(z_i)  
\end{equation}
in the correlator. We write the expression for $i = 1$, $j = 2$ for simplicity:
		\begin{equation}
			\begin{split} 	
			\ip{\psi \mid \mc{O}_1(z_1) \dots \mc{O}_n(z_n)} =   \sum_{k > 0}  (z_1 - z_2)^{-k} \ip{\psi \mid  \mc{O}'_{12,k}(z_1) \mc{O}_3(z_3) \dots \mc{O}_n(z_n)} \\
				+ \text{ expressions regular in } z_1 - z_2
				\end{split}
		\end{equation}
	\item Finally, differentiating the correlator with respect to $z_i$ is the same as replacing the operator $\mc{O}_i$ by $L_{-1} \mc{O}_i$. 
\end{enumerate}
		This definition is the linear dual of Beilinson-Drinfeld's \emph{factorization homology}.  Factorization homology\footnote{ It is important in some cases that Beilinson-Drinfeld's theory is homological in nature, and so produces a graded vector space.  We have described $H_0$ of their construction.  The (linear duals of) parts of this vector space in other degrees will correspond to operators in the $4d$ theory of non-zero ghost number.  }  is the universal vector space in which correlation functions can take values. An element of its linear dual gives a set of correlation functions valued in $\C$.

\subsection{Explicitly matching conformal blocks with local operators}
\label{sec:explicit_match}
		To indicate how this axiomatic definition works in practice, let us consider a very simple example: a free scalar field theory on $\R^4$.  The corresponding theory on $\PT$ is an Abelian holomorphic Chern-Simons theory, where the gauge field lives is $\mc{A} \in \Omega^{0,1}(\PT, \Oo(-2))$. We will identify sections of the bundle $\Oo(-2)$ with functions which vanish to order $2$ at $z = \infty$.  If we do this, then the Lagrangian on twistor space is
		\begin{equation} 
			\int \mc{A} \dbar \mc{A} \d v_1 \d v_2 \d z 
		\end{equation}
		where, as before, $v_i$ are linear functions on the $\Oo(1)$ fibres on twistor space, that have a first order pole at $z = \infty$.  

In this situation, the Koszul dual chiral algebra is Abelian, generated by conformal primaries $J[k,l]$ with trivial OPEs.    Because $J$ couples to $\mc{A}$ by
\begin{equation} 
	\int \mc{A} \d z J[0,0] 
\end{equation}
the zero of order $2$ in $\mc{A}$ at $\infty$ cancels the pole in $\d z$. Therefore, the correlation functions of $J[0,0](z)$ only have poles at the location of the other operators, and not at at $z = \infty$.  More generally, the correlation functions of $J[k,l](z)$ can have a pole of order $k+l$ at $\infty$. 

Under the correspondence between local operators and conformal blocks,  the operator $\mc{O}$ with $\mc{O}(\phi) = \phi(0)$ corresponds to the conformal block where
		\begin{equation} 
			\ip{\mc{O} \mid J(z)} = 1. 
		\end{equation}

		To understand how to relate other operators to conformal blocks, we need to know how to differentiate conformal blocks with respect to the space-time coordinates $u_i$, $\br{u}_i$ (where $u_i$ are holomorphic with respect to the complex structure associated to $z = 0$, and $\ubar_i$ with respect to the complex structure associated to $z = \infty$).  

		Differentiation of conformal blocks comes from a symmetry of the vertex algebra associated with the action of the translation symmetry $\C^4$ on twistor space.  The vector fields $\partial_{u_i}$, $\partial_{\ubar_i}$ on $\R^4$ become the vector fields
		\begin{equation} 
		\begin{split}
			\partial_{u_1} &= \partial_{v_1} \\
			\partial_{u_2} &= \partial_{v_2} \\
			\partial_{\ubar_1} &= -z\partial_{v_2} \\
			\partial_{\ubar_2} &= z \partial_{v_1}
		\end{split}
		\end{equation}
(where on the right hand side, we have dropped anti-holomorphic vector fields whose action on everything is BRST exact; see (5.2.3) of \cite{Costello:2021bah} for the complete expressions). 

From the equation for how the fields $\mc{B}$, $\mc{A}$ on twistor space couple to the generators $J[k,l]$, $\til{J}[k,l]$ of the Koszul dual algebra, we find that 
\begin{equation} 
	\begin{split} 
		\partial_{v_1} J[k,l] =-k J[k-1 ,l] \\
		\partial_{v_2} J[k,l] = -l J[k,l-1].
	\end{split}
\end{equation}
This means that, if $\mc{O}$ as above is the operator which measures $\phi(0)$, the operators measuring the derivatives of $\phi$ give rise to the correlation functions
\begin{equation} 
	\begin{split} 
		\ip{\partial_{u_1} \mc{O} \mid J[1,0]  }  &= 1\\
		\ip{\partial_{u_2} \mc{O} \mid J[0,1]  }  &= 1\\
		\ip{\partial_{\ubar_1} \mc{O} \mid J[0,1]  }  &= -z\\
		\ip{\partial_{\ubar_2} \mc{O} \mid J[1,0]  }  &=z.
	\end{split}
\end{equation}
Continuing in this vein, we note that this prescription is compatible with the equations of motion, since
\begin{equation} 
\begin{split}
	\ip{\partial_{u_1} \partial_{\ubar_1}  \mid J[1,1]   }  &= -z\\
\ip{\partial_{\ubar_2} \partial_{u_2}  \mid J[1,1]   }  &= z. 
\end{split}
\end{equation}

\subsection{Conformal blocks as Lie algebra cohomology}
In this subsection we will use a result of Beilinson-Drinfeld \cite{beilinson2004chiral} to give a mathematical proof that conformal blocks of the chiral algebra match local operators of the corresponding $4d$ theory, at the classical level.   

The argument is quite general, but  we will illustrate the result for gauge theory without the axion field, where our chiral algebra is built from $J$ and $\til{J}$.  This chiral algebra is a Kac-Moody vertex algebra built from a holomorphic bundle of Lie algebras on $\CP^1$, which we now describe. 

Give $\g \oplus \g^\vee$ its natural Lie bracket, where $\g^\vee$ transforms in the coadjoint representation of $\g$ and the bracket between two elements of $\g^\vee$ vanishes.   In the same way, we can make $\g \otimes \Oo \oplus \g^\vee \otimes \Oo(-4)$ into a holomorphic bundle of Lie algebras on twistor space $\PT$.  There is a natural map $\pi : \PT \to \CP^1$, and we can push forward this bundle of Lie algebras to get a sheaf $\mc{L}$ of Lie algebras on $\CP^1$:
\begin{equation} 
	\mc{L} = \pi_\ast (\g \otimes \Oo \oplus \g \otimes \Oo(-4)). 
\end{equation}
As a sheaf on $\CP^1$, we have
\begin{equation} 
	\mc{L} = (\g \otimes \Oo \oplus \g^\vee \otimes \Oo(-4) ) \otimes_{\Oo} \Sym^\ast \left( \Oo(-1) \oplus \Oo(-1)  \right) . 
\end{equation}

The chiral algebra built from $J$ and $\til{J}$ is described in terms of $\mc{L}$ by a construction that Beilinson-Drinfeld call the \emph{chiral envelope}; it is a vertex algebra version of the universal enveloping algebra of a Lie algebra. Concretely, this means the chiral algebra is a kind of Kac-Moody algebra built from $\mc{L}$, so that:
\begin{enumerate} 
	\item The mode algebra of the vertex algebra is the universal enveloping algebra $U \mc{L}(\C^\times)$ of sections of $\mc{L}$ on the punctured complex line $\C^\times$.
	\item The vacuum module of the vertex algebra is the induced module for the trivial module of $U \mc{L}(\C)$. 
\end{enumerate}

Beilinson-Drinfeld prove a general result about the cochain complex of (derived) conformal blocks in this context.  To state the theorem, let us consider the \v{C}ech cohomology
\begin{equation} 
	 H^\ast (\CP^1,\mc{L})  
\end{equation}
of $\CP^1$ with coefficients in $\mc{L}$. This is a graded Lie algebra. 

Beilinson-Drinfeld show that\footnote{More precisely, there is a spectral sequence starting at the right hand side of the equation and converging to conformal blocks, but in this case it degenerates.} 
\begin{equation} 
	\text{ conformal blocks} = H^\ast_{Lie} ( H^\ast(\CP^1,\mc{L} ) ) 
\end{equation}
On the right hand side we have the Lie algebra cohomology of the graded Lie algebra $H^\ast(\CP^1, \mc{L})$.  

We would like to identify this with the space of local operators of self-dual gauge theory.  To do this, we will compute the cohomology groups $H^\ast(\CP^1, \mc{L})$.    Because $\mc{L}$ is built as a sheaf push-forward from $\PT$, we have
\begin{equation} 
	H^\ast(\CP^1,\mc{L}) = H^\ast(\PT, \g \otimes \Oo \oplus \g \otimes \Oo(-4)) . 
\end{equation}
It is easy to see that 
\begin{equation} 
	 H^0(\PT, \g \otimes \Oo \oplus \g \otimes \Oo(-4))  = \g. 
\end{equation}
Further, by the Penrose-Ward correspondence, we have 
\begin{equation}
	\begin{split} 
		H^1(\PT,  \Oo ) &= \{ F \in \Omega^2_+(\R^4) \mid \d F = 0 \}   	\\
		H^1(\PT, \Oo(-4) ) &= \{ B \in \Omega^2_- (\R^4) \mid \d B = 0 \}. 
	\end{split}\label{eqn:PW}	
\end{equation}

Now let us compute the Lie algebra cohomology of $H^\ast (\CP^1, \mc{L})$.  This graded Lie algebra is concentrated in degrees $0$ and $1$, and in degree $0$ it is $\g$. Therefore the Lie algebra cohomology is
\begin{equation} 
	H^\ast(\g, \Sym^\ast (H^1(\CP^1,\L)^\vee).
\end{equation}
For any representation $R$ of a simple Lie algebra $\g$, the Lie algebra cohomology of $\g$ with coefficients in $R$ is
\begin{equation} 
	H^\ast(\g,R) = R^{G} \otimes H^\ast(\g) 
\end{equation}
where $R^G$ is the $G$-invariants.   In particular,
\begin{equation} 
	H^\ast\left(\g, \Sym^\ast (H^1(\CP^1,\L)^\vee)	\right) = \left( \Sym^\ast (H^1(\CP^1,\L)^\vee \right)^G \otimes H^\ast(\g). \label{eqn:liecoho}
\end{equation}

It follows from equation \eqref{eqn:PW} that the symmetric algebra of the dual of $H^1(\PT, \L)$ is polynomials in the value of $F^a_{\alpha \beta}$, $B^a_{\dot{\alpha} \dot{\beta}}$ and their derivatives, modulo the linearized equation of motion: 
\begin{equation} 
	\Sym^\ast H^1(\PT,\L)^\vee = \C[F^a_{\dot\alpha\dot\beta}, \partial_{i} F^a_{\dot\alpha \dot\beta}, \dots , B^a_{{\alpha}{\beta}}, \partial_{i} B^a_{{\alpha}{\beta}}, \dots ] / \ip{ \Gamma_i^{\alpha \dot \alpha}  \partial_i F^a_{\dot\alpha \dot\beta},   \Gamma_i^{\alpha \dot \alpha}  \partial_i B^a_{ \alpha  \beta}  }\label{eqn:opng} 
\end{equation}
The right hand side consists of local operators, without imposing gauge invariance.  

Gauge invariance here is equivalent to being invariant under the constant gauge transformations $G$ (as all expressions are gauge-covariant, if derivatives are taken to be covariant).

Equation \eqref{eqn:liecoho} tells us that the Lie algebra cohomology  is, in ghost number zero, the $G$-invariants of \eqref{eqn:opng}. This completes the proof that the Lie algebra cohomology of $H^\ast(\CP^1, \mc{L})$ is isomorphic to gauge-invariant local operators of self-dual gauge theory, in ghost number zero.   

\subsection{Conformal blocks from a \v{C}ech picture}
Next, let us discuss a \v{C}ech picture for conformal blocks, which connects closely with Ward's \cite{Ward:1977ta} description of self-dual gauge fields.  We will use the sheaf $\mc{L}$ of Lie algebras on $\CP^1$ introduced in the previous subsection.  

The Lie algebra $\mc{L}(\C^\times)$ is the gluing data for solutions of self-dual gauge theory.  More precisely, let $\exp(\mc{L}(\C^\times))$ be the group exponentiating $\mc{L}(\C^\times)$. (It is isomorphic to the group of holomorphic maps $\C^2 \times \C^\times \to G$).  Similarly, we can define $\exp(\mc{L}(\C_0))$  and $\exp(\mc{L}(\C_{\infty}))$ as the groups associated to $0$ and $\infty$.

The Penrose-Ward correspondence tells us that the space of complexified solutions to the equations of motion of self-dual gauge theory on $\R^4$ is an open subset of the double quotient
\begin{equation} 
	\exp(\mc{L}(\C_{0})) \backslash \exp(\mc{L}(\C^\times)) / \exp(\mc{L}(\C_{\infty})). 
\end{equation}
It is an open subset as we are exluding the locus corresponding to holomorphically non-trivial bundles on $\CP^1$.

The space of local operators of the theory can be written as functions on the solutions to the equations of motion. Since we are working perturbatively, we will replace the space of solutions to the EOM by a formal neighbourhood of the trivial solution.  To do this, we will replace the group $\exp(\mc{L}(\C^\times))$ by the formal group $\what{\exp}(\mc{L}(\C^\times))$, i.e.\ the formal neighbourhood of the identity in the group.  We will do the same for $\exp(\mc{L}(\C_0))$, $\exp(\mc{L}(\C_{\infty}))$.  Then, the formal moduli space of solutions to the equations of motion is the double quotient
\begin{equation} 
	\what{\op{EOM}} = 	\what{\exp}(\mc{L}(\C_{0}))\backslash  \what{\exp}(\mc{L}(\C^\times)) / \what{\exp}(\mc{L}(\C_{\infty})). 
\end{equation}
The vector space of local operators is then functions on this:
\begin{equation} 
	\text{local operators} = \Oo( \what{\op{EOM}} ). 
\end{equation}
(To be precise, we should look at functions which are finite sums of functions which are eigenvalues under the scaling action of $\C^\times$ on $\what{\op{EOM}}$.  This corresponds to looking at operators which involve only finitely many derivatives). 

We would like to identify this with conformal blocks.  To do this, we will introduce a \v{C}ech description of conformal blocks, which might be familiar to some readers from the study of conformal blocks of the WZW model.  The algebra of operators on the theory $\C^\times$ is the universal enveloping algebra $U(\mc{L}(\C^\times))$.  The vacuum modules at zero and infinity are the induced module 
\begin{equation} 
	\begin{split}
		\op{Vac}_0 &= \op{Ind}_{U(\mc{L}(\C)_0)}^{U(\mc{L}(\C^\times) }  \C \\
\op{Vac}_\infty &= \op{Ind}_{U(\mc{L}(\C)_\infty)}^{U(\mc{L}(\C^\times) }  \C  
	\end{split}
\end{equation}
The conformal blocks are then
\begin{equation} 
	\Hom ( \op{Vac}_0 \otimes_{U(\mc{L}(\C_0))} U(\mc{L}(\C^\times) ) \otimes_{U(\mc{L}(\C_{\infty})} \op{Vac}_{\infty} , \C ).  
\end{equation}
This is the same as 
\begin{equation} 
	\Hom ( U ( \mc{L}(\C^\times) ), \C  )^{\mc{L}(\C_0) \oplus \mc{L}(\C_{\infty} ) } 
\end{equation}
i.e.\ the linear maps from $U (\mc{L}(\C^\times) )$ to $\C$ which are invariant under the action of $\mc{L}(\C_0)$, acting on the left, and $\mc{L}(\C_{\infty})$, acting on the right.

The connection of this to correlation function definition of conformal blocks is as follows.  The algebra $U(\mc{L}(\C^\times))$ is the mode algebra of the vertex algebra, generated by the modes $\oint J[r,s] z^k \d z$, $\oint \til{J}[r,s] z^k \d z$.  Suppose we  have a set of correlation functions denoted by $\ip{ \dots }$. Then, expressions like
\begin{equation} 
	\oint_{\abs{z}_1 < \dots < \abs{z_k} } \ip{J[r_1,s_1](z_1) \dots \til{J}[r_k,s_k] (z_k) } z_1^{n_1} \dots z_k^{n_k} \d z_1 \dots \d z_k	 
\end{equation}
give a linear functional
\begin{equation} 
	U(\mc{L}(\C^\times)) \to \C. 
\end{equation}
This linear functional is invariant under left multiplication by an element of $\mc{L}(\C_0)$ and right multiplication by $\mc{L}(\C_{\infty})$, because these are the  modes that preserve the vacuum at $0$ and $\infty$. 

Now let us connect the \v{C}ech definition of conformal blocks to local operators.  The first point we will need is that -- as is well-known by algebraists -- we can identify the linear dual of $U(\mc{L}(\C^\times))$ with functions on the formal group:
\begin{equation} 
	\Hom( U(\mc{L}(\C^\times)), \C) = \Oo( \what{\exp} ( \mc{L}(\C^\times) ) ). 
\end{equation}
Now, conformal blocks are the invariants of the left hand side with respect to $\mc{L}(\C_0) \oplus \mc{L}(\C_{\infty})$, and so can be identified with functions on the double quotient:
\begin{equation}
	\begin{split} 
		\Hom( U(\mc{L}(\C^\times)), \C)^{\mc{L}(\C_0) \oplus \mc{L}(\C_{\infty} )}  &=\Oo\left( \what{\exp}(\mc{L}(\C_{0}))\backslash  \what{\exp}(\mc{L}(\C^\times)) / \what{\exp}(\mc{L}(\C_{\infty})) \right) \\
		&= \Oo( \what{\op{EOM}} ). 
	\end{split}	
\end{equation}
In this way, conformal blocks (classically) are identified in a canonical way with (classical) local operators.

\subsection{Conformal blocks after quantizing and factorization algebras}
We have presented several perspetives for why conformal blocks are the same as local operators: by thinking about different dimensional reductions of the $6d$ theory, or by using more-or-less standard computations of conformal blocks of Kac-Moody type algebras. 

Here, let us describe briefly another rather abstract perspective, based on the theory of factorization algebras \cite{Costello:2016vjw}.   The argument is quite general. We start with any holomorphic quantum field theory on twistor space, whose factorization algebra of quantum observables is $\Obs^q_{6d}$.  Restricted to the $\CP^1$ over $0\in \R^4$, $\Obs^q_{6d}$ can be viewed as  a dg vertex algebra. 

For self-dual gauge theory, at the classical level, this factorization algebra sends $U \in \CP^1$ to 
\begin{equation} 
	\Obs^{cl}_{6d}(U) =  C^\ast(\mc{L}(U)).  
\end{equation}

Let us consider the Koszul dual $(\Obs^q_{6d})^!$.   The general theory of Koszul duality of vertex algebras has not been fully developed; here we will assume that it works in a similar way to Koszul duality for associative algebras.

Self-dual gauge theory on $\R^4$, with the axion field, comes from dimensional reduction of the theory on twistor space. The factorization-algebra way to perform dimensional reduction is the push forward \cite{Costello:2016vjw}, also known as factorization homology. In the case of chiral theories on a curve coincides with Beilinson-Drinfeld's chiral homology \cite{beilinson2004chiral}.  Therefore, $4d$ observables are
\begin{equation} 
	\Obs^q_{4d} = \pi_\ast \Obs^q_{6d} 
\end{equation}
where $\pi_\ast$ is the push-forward along the map from $\CP^1$ to a point. 

Factorization homology (or push-forward) is the linear dual of conformal blocks of $\Obs^q_{6d}$.   In the context of topological factorization algebras, it is known \cite{Ayala:2014zkd} that factorization homology sends Koszul duality to linear duality.   If we assume that this result holds in the vertex algebra context, then we deduce that 
\begin{equation} 
	\left( \pi_\ast (\Obs^q_{6d})^! \right)^\vee = \pi_\ast \Obs^q_{6d} = \Obs^q_{4d} 
\end{equation}
The left hand side of this equation is the conformal blocks of the Koszul dual vertex algebra,  which we have identified with the right hand side, local operators of the $4d$ theory.

\subsection{Form factors and correlators }
We have seen that conformal blocks of the vertex algebra are in bijection with local operators in the $4d$ CFT.   We also know that generators of the chiral algebra are single-particle conformal primary states of the $4d$ theory, in the language of celestial holography.  

It is essentially a formal consequence of this that correlators of the chiral algebra using a particular conformal block, are the same as scattering amplitudes in the presence of an insertion of the corresponding local operator, i.e. form factors. More precisely, to obtain the complete form factor from our form factor integrand, one must integrate over the positions of the operator insertions; while one can do this by hand, the chiral algebra formulation most naturally computes the integrand, namely with fixed positions of operator insertions, so in what follows ``form factor'' should be understood to mean ``form factor integrand.''

It is perhaps easiest to understand the form factor/correlation function correspondence by thinking about a conformal block which arises by introducing some new degrees of freedom along the $\CP^1$ in twistor space living over $0 \in \R^4$.

Suppose we can couple, in some gauge invariant way, some $2d$ chiral CFT to the bulk system along this $\CP^1$, given by some collection of chiral fermions $\psi_i$. We will not be explicit about the nature of the coupling or how many fermions we have, as our goal is to give an inuitive understanding of the relationship between scattering amplitudes and correlation functions.

Let $\mc{C}$ is the algebra of operators of the $2d$ free fermion system we couple. By the definition of Koszul duality, we have a homomorphism from the Koszul dual algebra to $\mc{C}$. This means that we have states 
\begin{equation}
	J[k,l], \ \til{J}[k,l], \ E[k,l],\ F[k,l] 	
\end{equation}
in $\mc{C}$.  These could be written schematically, in Lagrangian terms,
\begin{equation} 
	\sum D_{k,l}  \mc{A} J[k,l] ( \psi) + \dots
\end{equation}
where $D_{k,l}  = \frac{1}{k! l!} \partial_{v_1}^k \partial_{v_2}^l $, and $J[k,l]$ is some even polynomial in the fermionic fields and their derivatives.   

Since our $2d$ system is a system of free fermions, it has only one conformal block.  Then, it gives rise to a conformal block for the Koszul dual chiral algebra, defined by the correlation functions of the operators $J[k,l],\dots$ in the $2d$ system.  These have a path integral representation
\begin{equation} 
	\ip{J[k,l] (z_1) \dots \til{J}[r,s](z_n) } = \int_{\psi} e^{\int_{\CP^1} \psi \dbar \psi} J[k,l](z_1) \dots \til{J}[r,s](z_n) \label{eqn:fermion_coupling} 
\end{equation}
This conformal block corresponds to a local operator in the $4d$ CFT, which is of course that obtained by integrating out the fermionic degrees of freedom. 

The operators $J[k,l], \dots$ in the $2d$ CFT are obtained by coupling to background bulk fields which are conformal primary state corresponding to soft modes. To compute form factors, we typically resum those soft modes as in \eqref{eqn:generating_function} to obtain standard momentum eigenstates. These serve as the asymptotic states in the explicit form factor computations that follow.   

Our claim is that this correlator is the same as the form factor integrand of the $4d$ CFT, with the corresponding choice of local operator insertion.  Since the $4d$ system with a local operator arises from the $6d$ system with a defect by dimensional reduction, we can compute scattering in the $6d$ + defect system. 

In $6d$, scattering in the presence of the defect is given by a path integral just like \eqref{eqn:fermion_coupling}, but where we also need to perform a path integral over the $6d$ fields $\mc{A}$, $\mc{B}$, ...  The key point is that the exchange of $6d$ fields can not contribute to the form factor. 

This is because, working in $6d$, we can choose an axial gauge\footnote{Axial gauges are often too singular to work at the quantum level, but we can make it a little less singular by allowing the field to propagate a very small amount in the $z$ direction.} where fields propagate only in the $v_{1}$, $v_{2}$ plane, and not in the $z$ direction.  If we do this, the conformal primary states at different values of $z$ do not talk to each other in the bulk, and any scattering process is entirely mediated by the exchange fermions. 

This kind of reasoning generalizes to apply to any conformal block of the Koszul dual chiral algebra, not just one that arises by coupling to a free fermion system.

This argument also implies our Theorem \ref{mainthm}.  This result gives an expansion for the integrand of form factors, in terms of the operator product expansion
\begin{equation} 
	\op{tr}(B^2)(0) \op{tr}(B^2)(x_1) \dots \op{tr}(B^2)(x_{n-1)} \sim \sum F^i(x_1,\dots, x_{n-1}) \mc{O}_i(0) 
\end{equation}
where $\mc{O}_i$ runs over a basis of local operators in the $4d$ CFT.  The statement is that scattering amplitudes in the presence of the operators  $\op{tr}(B^2)(0) \op{tr}(B^2)(x_1) \dots \op{tr}(B^2)(x_{n-1})$ also have an expansion 
	\begin{equation} 
		\begin{split} 
			\sum F^i(x_1,\dots, x_{n-1})           \Big\langle \mc{O}_i(0) \mathrel{\Big|} J^{a_1}[r_1,s_1] (z_1) & \dots J^{a_n}[r_1,s_1](z_n)  \\
				&  \til{J}^{b_1}[t_1,u_1](z'_1) \dots \til{J}^{b_m}[t_m,u_m](z'_m) \Big \rangle. 
		\end{split}
	\end{equation}	
This is a consequence of what we already know: the form factors for $\prod \op{tr}(B^2)(x_i)$ are equivalent to the form factors in the presence of the OPE of those operators.

\section{Correlation functions for the operator $B^2$ and the Parke-Taylor formula}
In this section, we will show that the correlation functions of our vertex algebra, built from the conformal block corresponding to $B^2$, give the Parke-Taylor formula for the color-ordered tree-level MHV scattering of $n$ gluons:
\begin{equation}
    \mathcal{A}_n = \frac{\langle \lambda_i \lambda_j \rangle^4}{\langle \lambda_1 \lambda_2 \rangle \langle \lambda_2 \lambda_3 \rangle \ldots \langle \lambda_n \lambda_1 \rangle},
\end{equation}where we have omitted the standard group theory factor. Here, we have expressed the amplitude in terms of the standard homogeneous coordinates $\lambda_{\alpha}, \ \alpha =0,1$ on the twistor base $\mathbb{CP}^1$. In affine coordinates, $\lambda_{i, \alpha} = (1, \ z_i)$, and we can re-express the brackets as, e.g., $\langle \lambda_i \lambda_j \rangle = z_i - z_j$. 

Let us explain why we should expect this to be true (in fact, this computation has a lot in common with the computation by Lionel Mason in \cite{Mason:2005zm}).  Self-dual Yang-Mills deformed by the operator $B^2$ is equivalent to ordinary Yang-Mills.  We will view the $B^2$ term as a bi-valent vertex added to self-dual Yang-Mills. Tree-level scattering processes which involve only one $B^2$ vertex are the MHV amplitudes, with two negative helicity particles and an arbitrary number of positive helicity.  

If we have only one $B^2$ vertex, we get essentially the same answer if we view it as an operator, placed at the origin, or if we integrate over the position of the operator. The only difference is whether we include the conservation of momentum $\delta$-function in the amplitude.  Our chiral algebra formulation naturally connects to scattering of self-dual Yang-Mills in the presence of an operator, corresponding to the choice of conformal block.   

Now let us turn to the computation.  For self-dual gauge theory, our field $\mc{A}$ on twistor space is a $(0,1)$ form valued in $\g$.  The coupling 
\begin{equation} 
	\int_{\CP^1} \mc{A} \d z J[0,0](z) 
\end{equation}
makes sense as long as $J[0,0](z)$ vanishes to order $2$ at $\infty$.  Similarly, the coupling 
\begin{equation} 
	\int_{\CP^1}\frac{1}{k!}\frac{1}{l!} \partial_{v_1}^k \partial_{v_2}^l  \mc{A} \d z J[k,l](z) 
\end{equation}
makes sense if $J[k,l]$ vanishes to order $2-k-l$ at $z = \infty$. 

Since $\mc{B}$ is a section of $\Oo(-4)$, which is the square of the canonical bundle on $\CP^1$, we can build a $(1,1)$ form on $\CP^1$ by contracting $\mc{B}$ with the vector field $\partial_z$, which vanishes to order $2$ at $z = \infty$. Thus,
\begin{equation} 
	\int_{\CP^1}\frac{1}{k!}\frac{1}{l!} \partial_{v_1}^k \partial_{v_2}^l \iota_{\partial_z} \mc{B} \d z \til{J} [k,l](z) 
\end{equation}
makes sense as long as $\til{J}[k,l]$ has a pole of order at most $2+k+l$ at $z = \infty$.

Next, we need to identify the operator corresponding to $B^2$.  Clearly, since $B$ comes from the field $\mc{B}$ on twistor space, and $\mc{B}$ couples to the elements $\til{J}[k,l]$ of our chiral algebra, the corresponding conformal block must give an expectation value to $\til{J}[0,0](z_1) \til{J}[0,0](z_2)$:
\begin{equation} 
	\ip{\op{tr}(B^2) \mid \til{J}_a[0,0](z_1) \til{J}_b[0,0](z_2) } = F(z_1,z_2) \op{tr}(\t_a \t_b)  
\end{equation}
where the constraints on the behaviour of the operators $\til{J}$ at $z = \infty$ tell us that $F(z_1,z_2)$ is at most quadratic in each variable.

We can fix $F(z_1,z_2)$ by symmetry.  As in the discussion above, it is most natural to view $F(z_1,z_2) \partial_{z_1} \partial_{z_2}$ as a bivector on $\CP^1 \times \CP^1$, and then we can ask that it is invariant under the action of the $SU(2)$ rotating $\CP^1$.  This invariance corresponds to the fact that $\op{tr}(B^2)$ is $SO(4)$ invariant, and hence in particular $SU(2)$ invariant. There is only one bivector on $\CP^1 \times \CP^1$ invariant under $SU(2)$, namely $(z_1 - z_2)^2 \partial_{z_1} \partial_{z_2}$.  

We conclude that, up to a constant,
\begin{equation} 
	\ip{\op{tr}(B^2) \mid \til{J}_a[0,0](z_1) \til{J}_b[0,0](z_2) } = z_{12}^2 \op{tr}(\t_a \t_b). 
\end{equation}
Further, correlation functions containing the operators $J[k,l]$ or $\til{J}[k,l]$ for $k + l > 0$ must vanish (using the tree-level chiral algebra). This is because the operator $\op{tr}(B^2)$ does not have any derivatives.  Correlation functions also vanish unless there is an insertion of exactly $2$ $\til{J}$ operators. 

It turns out that this identity essentially fixes the conformal block corresponding to $\op{tr}(B^2)$.  
\begin{lemma} 
	There is a unique conformal block in our chiral algebra (at tree level) satisfying the properties listed above. 
\end{lemma}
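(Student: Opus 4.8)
The plan is to establish uniqueness by showing that the difference of any two conformal blocks satisfying the listed properties annihilates every correlator, arguing by induction on the number $n$ of inserted operators. The hypotheses already cut down the class of possibly-nonzero correlators drastically: since insertions of $J[k,l]$ or $\til{J}[k,l]$ with $k+l>0$ give zero, and a correlator vanishes unless exactly two $\til{J}$'s appear, the only correlators that can be nonzero are
\begin{equation}
	\ip{ \op{tr}(B^2) \mid J^{a_1}[0,0](z_1) \cdots \til{J}^{a_i}[0,0](z_i) \cdots \til{J}^{a_j}[0,0](z_j) \cdots J^{a_n}[0,0](z_n) },
\end{equation}
with precisely two $\til{J}[0,0]$ factors and $n-2\ge 0$ factors of $J[0,0]$. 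Let $\psi,\psi'$ be two blocks obeying the hypotheses and set $\delta=\psi-\psi'$; the goal is to prove that $\delta$ evaluates to zero on every such correlator.

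For the base case $n=2$ the only candidate is $\ip{\op{tr}(B^2)\mid \til{J}_a[0,0](z_1)\til{J}_b[0,0](z_2)}$, whose value is fixed by hypothesis to be $z_{12}^2\op{tr}(\t_a\t_b)$, so $\delta$ gives zero. For the inductive step with $n\ge 3$ I would exploit the fact that at least one insertion, say the one at $z_1$, is necessarily a $J[0,0]$. View the $\delta$-correlator as a rational function $G(z_1)$ of $z_1$, with the remaining points held fixed. By the conformal-block axioms its only poles lie at $z_1=z_j$, and at tree level the OPEs \eqref{eqn:OPEs} have only simple poles, so the principal part of $G$ at each $z_1=z_j$ is the single residue in which $J^{a_1}[0,0]$ and the operator at $z_j$ have fused into one generator ($J\cdot J\to J$ or $J\cdot\til{J}\to\til{J}$) placed at $z_j$. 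This residue is an $(n-1)$-point $\delta$-correlator of exactly the same restricted form (the count of two $\til{J}$'s is preserved), hence it vanishes by the inductive hypothesis. Therefore every principal part of $G$ vanishes and $G$ is a polynomial in $z_1$.

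To finish I would invoke the growth condition at $z=\infty$ established above: because the coupling $\int_{\CP^1}\mc{A}\,\d z\,J[0,0]$ is well defined only when $J[0,0]$ vanishes to order two at $\infty$, any correlator vanishes to order two in the position of a $J[0,0]$ insertion as that position tends to infinity (as one sees concretely in the Parke-Taylor expression \eqref{eqn:PT}). A polynomial that vanishes at infinity is identically zero, so $G\equiv 0$. Since $n\ge 3$ always provides a $J[0,0]$ insertion through which to run this argument, $\delta$ annihilates all correlators, so $\psi=\psi'$ and the block is unique.

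The conceptual packaging of the two inputs is a Liouville/Riemann--Roch rigidity statement: a rational section, in the variable $z_1$, of the line bundle on $\CP^1$ attached to the spin of the inserted operator is determined by its principal parts at the finite insertion points together with its prescribed order of vanishing at $\infty$. The one point I would be most careful about, and the main obstacle, is precisely that it is the global $\CP^1$ structure of the theory on twistor space, rather than the purely local OPE on $\C$, that supplies the behaviour at infinity; without it the OPE recursion would fix only the singular part of each correlator and leave an undetermined polynomial ambiguity. Everything else is a routine tree-level OPE computation, so once the growth condition is in hand the induction closes cleanly.
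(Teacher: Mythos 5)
Your uniqueness argument is sound and is in substance the same as the paper's: the paper likewise fixes the correlators ``by the poles coming from the OPEs and the fact that we must have a second order zero at $z_k = \infty$,'' and your induction-plus-Liouville packaging, with the explicit remark that the behaviour at $z=\infty$ is the global $\CP^1$ input that kills the polynomial ambiguity, is a clean way of making that precise.

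The genuine gap is that the lemma asserts \emph{existence} as well as uniqueness, and your proposal proves only uniqueness: showing that the difference of two blocks annihilates every correlator does not show that the set of such blocks is nonempty. Existence is not automatic here, for two reasons the paper's proof addresses directly. First, the vanishing conditions must be compatible with the OPE axiom: it is consistent to set all correlators involving $J[k,l]$, $\til{J}[k,l]$ with $k+l>0$ to zero only because these operators span an ideal of the chiral algebra, and it is consistent to set correlators without exactly two $\til{J}$ insertions to zero only because the algebra is graded by the number of $\til{J}$'s; neither structural fact appears in your argument. Second, your own recursion is over-determined, as you implicitly acknowledge when you call the constraints a rigidity statement: the principal parts at the various poles $z_1=z_j$, together with the analogous recursions obtained by stripping a \emph{different} $J$ insertion, must all patch into a single rational function with the prescribed second-order zero at infinity, and a priori these conditions could admit no common solution. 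The paper flags exactly this (``These constraints are somewhat over-determined, but can be solved precisely because the two-point correlator is invariant under the $G$ symmetry''). To close the gap you would either invoke that $G$-invariance argument, or simply exhibit a solution --- e.g.\ the Parke--Taylor correlators \eqref{eqn:PT} --- and check that it satisfies all the pole and infinity constraints; as written, your induction establishes only that there is \emph{at most} one block, not that there is one.
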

\begin{proof} 
 The operators $J[k,l]$, $\til{J}[k,l]$ for $k + l > 0$ form an ideal in the chiral algebra, so it is consistent to define a conformal block where the correlation functions involving these operators all vanish. Further, the chiral algebra has a grading by the number of $\til{J}$'s we have.  This means it is consistent to define a conformal block by saying that the correlation functions vanish unless we have exactly two $\til{J}$ insertions. 

Correlation functions of the form
	\begin{equation} 
		\ip{\op{tr}(B^2) \mid \til{J}_{a_1}[0,0](z_1) \til{J}_{a_2}[0,0](z_2) J_{a_3}[0,0](z_3) \dots J_{a_{n}} [0,0] (z_n)   }   
	\end{equation}
	are uniquely determined by the correlation function with no $J$ insertions  by the poles  coming from the OPEs and the fact that we must have a second order zero at $z_k = \infty$, $k = 3 \dots n$.  These constraints are somewhat over-determined, but can be solved precisely because the two-point correlator is invariant under the $G$ symmetry.  
\end{proof}

Finally, we will show the following:
\begin{proposition}
	In the case the gauge group is $U(n)$, the colour-ordered correlator in our chiral algebra is
	\begin{equation} 
		\ip{\op{tr}(B^2) \mid J_{a_1} (z_1) \dots \til{J}_{a_i}(z_i) \dots \til{J}_{a_j} (z_j) \dots J_{a_n}(z_n) } = \frac{z_{ij}^4}{z_{12} z_{23} \dots z_{n1} } \op{tr}( \t_{a_1} \dots \t_{a_n} ).
	\end{equation}
	\end{proposition}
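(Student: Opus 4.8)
The plan is to prove the formula by induction on the number of positive-helicity insertions $J$, using the analytic structure of the correlator in a single variable together with the tree-level OPEs \eqref{eqn:OPEs}. The base case is the defining two-point block $\ip{\op{tr}(B^2)\mid\til{J}_a(z_1)\til{J}_b(z_2)}=z_{12}^2\op{tr}(\t_a\t_b)$, equivalently the three-point function $\tfrac{z_{12}^3}{z_{13}z_{23}}f^{abc}$ already verified in the introduction; the conventional cyclic sign arising at $n=2$ is absorbed into the normalization of the block.

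For the inductive step I would single out one $J$ insertion. Since there are $n-2\ge 1$ of them, and both the trace $\op{tr}(\t_{a_1}\cdots\t_{a_n})$ and the denominator $z_{12}z_{23}\cdots z_{n1}$ are cyclically invariant, I may assume without loss of generality that position $1$ carries a $J$. Viewed as a function of $z_1$, the correlator is rational with at worst simple poles governed by the $1/z$ OPEs, and — because the coupling $\int_{\CP^1}\mc{A}\,\d z\,J[0,0]$ forces $J[0,0]$ to vanish to second order at $z=\infty$ — it has a double zero at $z_1=\infty$. A rational function of $z_1$ with only simple poles and a second-order zero at infinity is fixed by its residues, subject to the single constraint that the residues sum to zero.

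The key step is the colour bookkeeping. Extracting the coefficient of the single trace $\op{tr}(\t_{a_1}\cdots\t_{a_n})$, I use that for the unitary gauge algebra $\mf{u}(N)$ one has $f^{a_1 a_k}_c\,\t^c=[\t^{a_1},\t^{a_k}]$, so that $J^{a_1}$ colliding with $V^{a_k}$ can feed the cyclic word $(\cdots a_1 a_2\cdots)$ only for $k\in\{2,n\}$. Hence, in this colour-ordered piece, $z_1$ has poles only at $z_2$ and $z_n$, and
\[
\ip{\op{tr}(B^2)\mid J^{a_1}(z_1)\cdots}=\frac{R_2}{z_{12}}+\frac{R_n}{z_{1n}},\qquad R_2+R_n=0.
\]
The residue $R_2=\op{Res}_{z_1=z_2}$ is read off from $J^{a_1}(z_1)V^{a_2}(z_2)\sim \tfrac{1}{z_{12}}f^{a_1 a_2}_c V^c(z_2)$; the merged operator is again of type $[0,0]$, so the right-hand side is an $(n-1)$-point block with exactly two $\til{J}$'s, to which the inductive hypothesis applies. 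Only the $+\t^{a_1}\t^{a_2}$ term of the commutator feeds the target trace, giving $R_2=z_{ij}^4/(z_{23}z_{34}\cdots z_{n2})$, where I use that the tree-level correlators are purely single-trace as established earlier. Reassembling $R_2\bigl(\tfrac{1}{z_{12}}-\tfrac{1}{z_{1n}}\bigr)=R_2\,\tfrac{z_2-z_n}{z_{12}z_{1n}}$ and using $z_{n2}=-(z_2-z_n)$ and $z_{n1}=-z_{1n}$, the factor $z_{n2}$ in $R_2$ cancels and one lands exactly on $\tfrac{z_{ij}^4}{z_{12}z_{23}\cdots z_{n1}}$, closing the induction.

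The hard part is the colour combinatorics rather than the complex analysis: one must check that passing $f^{a_1 a_2}_c$ through an $(n-1)$-point single-trace correlator produces the $n$-point single trace with coefficient precisely $1$ and no stray contributions, which is exactly where $\g=\mf{u}(N)$ and the absence of multi-trace terms are essential. The pole/zero counting and the partial-fraction reconstruction are then routine. As a consistency check I would repeat the residue analysis in one of the $\til{J}$ variables, which carries a double \emph{pole} at infinity rather than a zero; matching the resulting degree-$4$ behaviour reproduces the $z_{ij}^4$ numerator, confirming that the block is uniquely pinned down by the data of the preceding lemma.
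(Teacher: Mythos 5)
Your proposal is correct and follows essentially the same route as the paper: induction on the number of $J$ insertions, with the colour-ordered piece in one $J$ variable having simple poles only at the two cyclically adjacent insertions and the residues fixed by the OPE plus the second-order vanishing at $z=\infty$. The only cosmetic difference is that you deduce $R_n=-R_2$ from the sum-of-residues constraint, whereas the paper reads off the two adjacent poles with their opposite signs directly from the before/after placement of the commutator in the trace; the partial-fraction identity $\tfrac{1}{z_{12}}-\tfrac{1}{z_{1n}}$ rebuilding the Parke--Taylor denominator is the same in both.
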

Here, we simply write $J,\til{J}$ instead of $J[0,0]$, $\til{J}[0,0]$. By the colour-ordered correlator we mean the same thing as is meant in the Parke-Taylor formula: the full correlator is a sum over terms where the colour indices have been contracted using a single trace. We are focusing on the term where the order in the trace is the same as the order in which we wrote our operators in the correlation function.  Since the full correlator does not care which order we write our operators, there is no loss of generality: the full correlator function is simply a sum over permutations of the colour-ordered correlator.  
\begin{proof} 
	We prove this by induction.  It is true for $n  = 2$.  Let us assume it is true with $n$ operators, and prove the case with $n+1$ operators. Without loss of generality we can assume that the $n+1$st operator is a $J$, not a $\til{J}$.  Consider the  correlator 
\begin{equation} 
	\ip{\op{tr}(B^2) \mid J_{a_1} (z_1) \dots \til{J}_{a_i}(z_i) \dots \til{J}_{a_j} (z_j) \dots J_{a_n}(z_n) J_{a_{n+1}} (z_{n+1})} 
\end{equation}
	There are poles in this expression when $z_{n+1} = z_i$. The residue at these poles is a correlator with $n$ insertions, where we have removed $J_{a_{n+1}}(z_{n+1})$ and replaced $J_{a_i}(z_i)$ by $f_{a_i a_{n+1}}^b J_{b}(z_i)$. 

	When we do this, the colour indices are contracted in an order in which $n+1$ is adjacent to $i$; it can be before or after $i$, and the two possibilities have opposite signs.

	This cannot contribute to the colour-ordered correlator, however, unless $i = n$ or $i=1$.  We conclude that, in the colour-ordered correlator,
	\begin{equation}
		\begin{split} 
			&\ip{\op{tr}(B^2) \mid J_{a_1} (z_1) \dots \til{J}_{a_i}(z_i) \dots \til{J}_{a_j} (z_j) \dots J_{a_n}(z_n) J_{a_{n+1}} (z_{n+1})}  \\	&= \left( \frac{1}{z_{n+1} - z_1}  - \frac{1}{z_{n+1} - z_n}  \right) \ip{\op{tr}(B^2) \mid J_{a_1} (z_1) \dots \til{J}_{a_i}(z_i) \dots \til{J}_{a_j} (z_j) \dots J_{a_n}(z_n) } \\
			&= \frac{z_{n1} }{ z_{n,n+1} z_{n+1,1} } \ip{\op{tr}(B^2) \mid J_{a_1} (z_1) \dots \til{J}_{a_i}(z_i) \dots \til{J}_{a_j} (z_j) \dots J_{a_n}(z_n). } 
		\end{split}	
	\end{equation}
Thus, the formula is proved by induction.
\end{proof}

\section{CSW rules}\label{s:CSW}
In the introduction, we stated that our method of writing the form factor integrand also allows one to understand part of the structure of the unintegrated NMHV amplitude.  Let us explain how this works in more detail.

The first step is to understand the OPE $\op{tr}(B^2)(0) \op{tr}(B^2)(x)$.  If we work at tree level, it is easy to see that the result is an operator cubic in $B$.   The coefficient of $\norm{x}^{-2}$ must be an operator of dimension $6$.  The only Lorentz invariant operator of this nature is 
\begin{equation} 
	\op{tr}(B^3) := B^a_{\alpha_1 \beta_1} B^b_{\alpha_2 \beta_2} B^c_{\alpha_3 \beta_3} f_{abc} \eps^{\beta_1 \alpha_2} \eps^{\beta_2 \alpha_3} \eps^{\beta_3 \alpha_1}  
\end{equation}
Therefore, the tree-level OPE must be of the form
\begin{equation} 
	\op{tr}(B^2)(0) \op{tr}(B^2)(x) \sim C \norm{x}^{-2} \op{tr}(B^3) + \dots  
\end{equation}
where $\dots$ indicates terms which are less singular, and $C$ is a constant.  One might worry that $C$ is zero, but a simple explicit computation with the Feynman diagram in figure \ref{fig:KMfig} tells us that it is not. 
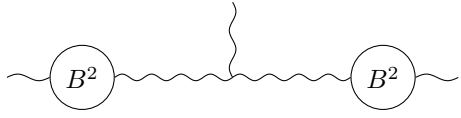
\begin{figure}
	\begin{center}
	\begin{tikzpicture}	
		\node[circle, draw] (J1) at (-2,0) {$B^2$};
		\node[circle, draw] (J2) at (2,0) {$B^2$};

			\draw[decorate, decoration=complete sines]  (-3,0) -- (J1);  
	\draw[decorate, decoration=complete sines]  (J1) -- (0,0);  
\draw[decorate, decoration=complete sines]  (J2) -- (0,0); 
\draw[decorate, decoration=complete sines]  (J2) -- (3,0);  
			
			\draw[decorate,decoration=complete sines] (0,0) -- (0,1); 
		\end{tikzpicture}
	\end{center}
	\caption{The Feynman diagram capturing the $\op{tr}(B^2)\op{tr}(B^2)$ OPE. \label{fig:KMfig}}
\end{figure}

Since $\op{tr}(B^3)$ is a Lorenz invariant operator, symmetry considerations tell us that we have
\begin{equation} 
	\ip{\op{tr}(B^3) \middle|  \til{J}^a(z_1) \til{J}^b(z_2) \til{J}^c(z_3) } = z_{12} z_{13} z_{23} f_{abc}. 
\end{equation}
We can compute the correlators with insertions of $3$ $\til{J}$'s and $n$ $J$'s by the same method we used to compute the MHV amplitudes .  We want to show that these correlators factorize as a product of MHV amplitudes, in the same way as the CSW rules.

Let $V_i(z_i)$ denote the $n$ chiral algebra insertions, states, $3$ of which are $\til{J}$ and $n-3$ are $J$. The precise formula we want to prove is that 
\begin{multline} 
	\ip{\op{tr}(B^3) \middle| V_1(z_1) \dots V_n(z_n)   } \\ = -\frac{1}{6} \sum \ip{\op{tr}(B^2) \middle| V_{i_1} (z_{i_1}) \dots V_{i_k}(z_{i_k}) \til{J}^a (z) }    \ip{\op{tr}(B^2) \middle|J_a(z)  V_{j_1} (z_{j_1}) \dots V_{j_{n-k}}(z_{j_{n-k} }) }   \label{eqn:NMHV2}  
\end{multline}
where $z$ is arbitrary.  The sum on the right hand side is over all ways of distributing the chiral algebra insertions among the correlators.

We will prove this by induction on the number of $J$ insertions.  We first check that the right hand side of equation \eqref{eqn:NMHV} has the same structure of poles and zeroes as the left hand side as we vary the position of the $J$ insertions.  That is,  the function on the right hand side should vanish to second order when a $J$ insertion goes to $z = \infty$, and have a pole determined by the OPE when a $J$ insertion hits a $J$ or $\til{J}$ insertion. This is easily seen to be the case. (To see this,  it is important to note that given two insertions $J_i(z_i)$, $V_j(z_j)$, where $V$ is either a $J$ or a $\til{J}$,  there are always  terms in the sum on the right hand of \eqref{eqn:NMHV2} where the insertions are placed in the same factor).

However, in principle, there are spurious poles when the insertion $J^{b_i}(z_i)$ coincides with the insertion of $J^a(z)$ or $\til{J}_a(z)$.  These cancel, as 

	\begin{align} 
		\Big\langle \cdots J^{b_i}(z_i) J^a(z) \Big\rangle \Big \langle  \til{J}^a(z) \cdots \Big\rangle  
		&= \Big\langle \cdots \frac{1}{z-z_i} f^{b_i a c} J^c(z) \Big\rangle \Big \langle  \til{J}^a(z) \cdots \Big\rangle  \\ 
		\Big\langle \cdots  J^a(z) \Big\rangle \Big \langle  \til{J}^a(z) J^{b_i}(z_i) \cdots \Big\rangle 
		&= \Big\langle \cdots J^a(z) \Big\rangle \Big \langle  \til{J}^c(z) \frac{1}{z-z_i} f^{b_i c a}  \cdots \Big\rangle  
	\end{align}	
(We are freely lowering and raising indices here using the Killing form).

This allows us to reduce the proof of the equality \eqref{eqn:NMHV2} to the case that there are only $3$ insertions on the left hand side, all of which are $\til{J}$.  That is, we need to show that
\begin{equation}
	\begin{split} 
	-3	\ip{\op{tr}(B^3) \middle| \til{J}^{a_1}(z_1) \til{J}^{a_2} (z_2) \til{J}^{a_3}(z_3)   } =&  \ip{\op{tr}(B^2) \middle| \til{J}^{a_1}(z_1)  \til{J}^b (z)   }    \ip{\op{tr}(B^2) \middle|J^b(z) \til{J}^{a_2}(z_2) \til{J}^{a_3}(z_3)   }  \\
		+&  \ip{\op{tr}(B^2) \middle| \til{J}^{a_2}(z_2)  \til{J}^b (z)   }    \ip{\op{tr}(B^2) \middle|J^b(z) \til{J}^{a_3}(z_3) \til{J}^{a_1}(z_1)   }  \\
		+&  \ip{\op{tr}(B^2) \middle| \til{J}^{a_3}(z_3)  \til{J}^b (z)   }    \ip{\op{tr}(B^2) \middle|J^b(z) \til{J}^{a_1}(z_1) \til{J}^{a_2}(z_2)   }. 
	\end{split}	
\end{equation}
The right hand side is
\begin{equation} 
	f^{a_1 a_2 a_3} \left( 	\frac{ (z-z_1)^2 z_{23}^3 }{(z-z_2)(z-z_3)  } +  \frac{ (z-z_2)^2 z_{31}^3 }{(z-z_3)(z-z_1)  }  +        \frac{ (z-z_3)^2 z_{12}^3 }{(z-z_1)(z-z_2)  }       \right)  
\end{equation}
whereas the left hand side is
\begin{equation} 
	-3 f^{a_1 a_2 a_3} z_{12} z_{13} z_{23}. 
\end{equation}
To complete the proof we need simply check that
\begin{equation} 
	-3 z_{12} z_{13} z_{23} (z-z_1)(z-z_2)(z-z_3) = (z-z_1)^3 z_{23}^3 + (z-z_2)^3 z_{31}^3 + (z-z_3)^3 z_{12}^3 . 	 
\end{equation}
This we obtain by cubing the  identity
\begin{equation} 
	(z-z_1) z_{23} + (z-z_2) z_{31} + (z-z_3)z_{12} = 0.  
\end{equation}

In sum, we have demonstrated that our chiral algebra correlators result in a close match to NMHV amplitudes obtained using the CSW rules. We emphasize that in our form factor integrand, where we do not integrate over operator positions, the common operator position in the factorized expression is undetermined; in the standard CSW prescription for integrated amplitudes, it is fixed. It would be fascinating to find a more systematic match between the CSW rules and our chiral algebra correlation functions, perhaps by deriving the CSW rules as Feynman rules from our twistor space theories. We plan to pursue this, as well as further connections between the two approaches at loop-level, and at the level of integrated amplitudes, in future work.

\section{One-loop amplitudes by axion exchange}
\label{s:oneloop}
In the introduction, we showed that in the conformal block associated to $(\tr \rho)^2$, the four-point all $+$ amplitude matches the known one-loop all $+$ amplitude of gauge theory. This is to be expected, as the $4d$ theory by itself does not have any scattering amplitudes, so that the known one-loop amplitude of self-dual gauge theory must be cancelled by the axion exchange.  

Here we will verify that this holds for the $n$-point all $+$ amplitude.  The formula for the one-loop colour-ordered amplitude is \cite{Bern:1993qk,Mahlon:1993fe}
\begin{equation} 
	\ip{1^+ \dots n^+}_{\text{colour-ordered}} = \frac{ H_n  }{\ip{12} \dots \ip{n1} } 
\end{equation}
where
\begin{equation} 
	 H_n = \sum_{1 \le i_1 < i_2 < i_3 < i_4 \le n} \ip{i_1 i_2} [i_2 i_3] \ip{i_3 i_4} [i_4 i_1]  
\end{equation}
Let us thus write the full amplitude, up to normalization, and including the colour factors:
\begin{equation} 
	\ip{1^+ \dots n^+} = \frac{1}{n} \sum_{\sigma \in S_n}  \frac{   \sum_{1 \le i_1 < i_2 < i_3 < i_4 \le n} \ip{\sigma_{i_1} \sigma_{i_2}} [\sigma_{i_2} \sigma_{i_3} ] \ip{\sigma_{i_3} \sigma_{i_4} } [\sigma_{i_4} \sigma_{i_1} ]       }{\ip{\sigma_1 \sigma_2} \dots \ip{\sigma_n \sigma_1} } \op{Tr}(\t_{\sigma_1} \dots \t_{\sigma_n} ) \label{eqn:oneloop_amplitude} 
\end{equation}
where we are summing over elements of the symmetric group $S_n$. The pre-factor of $\tfrac{1}{n}$ is a symmetry factor, accounting for the fact that the colour-ordered amplitude has cyclic group symmetry.

Our construction works with gauge group $SU(2)$, $SU(3)$, $SO(8)$ and the exceptional groups, with no matter. In these cases, $\op{Tr}$ denotes the trace in the adjoint representation.  If we want to consider $SU(N_c)$ gauge group, then we need to introduce matter with $N_f = N_c$. In that case, we will have fermions in the loop, which will change the colour factor. In that case, $\op{Tr}$ means
\begin{equation} 
	\op{Tr} = \op{tr}_{\op{adj}} - N_c \op{tr}_{\op{fun}} - N_c \op{tr}_{\br{\op{fun}}} 
\end{equation}
i.e.\, trace in the adjoint, minus $N_c$ times trace in the fundamental and anti-fundamental.   

To write this in terms of chiral algebra correlators, we will write a generating function for the generators of the chiral algebra
\begin{equation} 
	J(\mu_i^{\alpha}, z_i) = \sum J[r,s](z_i) \frac{1}{r! s!} (\mu_i^{\dot{1}})^r (\mu_i^{\dot{2}})^s 
\end{equation}
In this notation we will let
\begin{equation} 
	[ij] = \eps_{\dot\alpha \dot\beta} \mu_i^{\dot\alpha} \mu_j^{\dot\beta}. 
\end{equation}
Here $\mu$ is, as usual, an auxiliary spinor which together with $\lambda=(1,z)$ specifies the momentum of the external state. 

Let us write a proposal for the correlator, which we will then check satisfies the properties dictated by the OPE. Our proposed correlator is given by the same expression:
\begin{multline} 
	\ip{ (\tr \rho)^2 \mid J_{a_1}(\mu_1, z_1) \cdots J_{a_n}(\mu_n, z_n) }^{\text{proposed} } \\
	= \frac{1}{n}   \sum_{\sigma \in S_n}  \frac{   \sum_{1 \le i_1 < i_2 < i_3 < i_4 \le n} \ip{\sigma_{i_1} \sigma_{i_2}} [\sigma_{i_2} \sigma_{i_3} ] \ip{\sigma_{i_3} \sigma_{i_4} } [\sigma_{i_4} \sigma_{i_1} ]       }{\ip{\sigma_1 \sigma_2}\ip{\sigma_2 \sigma_3}  \dots \ip{\sigma_n \sigma_1} } \op{Tr}(\t_{a_{\sigma_1}} \dots \t_{a_{\sigma_n}} ) \label{eqn:proposed_correlator} 
\end{multline}
It is clear that the proposed correlator reproduces the correct scattering amplitude.
	
Let us rewrite the correlator slightly in a way so that the indices that appear in the numerator are $1,2,3,4$.  For a given permutation $\sigma$, we order the elements $1,2,3,4$ according to how they appear in the list $\sigma_1,\dots, \sigma_n$.  We let $1_{\sigma}$ be the first element in the set $\{1,2,3,4\}$ in this order, and similarly $2_{\sigma}$, $3_{\sigma}$, $4_{\sigma}$.  Then, according to the formula \eqref{eqn:proposed_correlator}  we have
\begin{equation}
	\begin{split} 
		\ip{(\tr \rho)^2 \mid J_{a_1}[1](\mu_1, z_1) J_{a_2}[1](\mu_2, z_2) J_{a_3}[1](\mu_3, z_3) J_{a_4}[1] (\mu_4, z_4) J_{a_5}(z_5) \dots J_{a_{n}}(z_n) }  	\\
		=\frac{1}{n} \sum_{\sigma \in S_n} \op{Tr}\left(t_{a_{\sigma_1}} \dots t_{a_{\sigma_n}} \right) \frac{ \ip{1_{\sigma} 2_{\sigma}} [2_{\sigma} 3_{\sigma} ] \ip{3_{\sigma} 4_{\sigma} }[4_{\sigma} 1_{\sigma} ]  } {\ip{\sigma_1 \sigma_2} \dots \ip{\sigma_n \sigma_1 }    }  
	\end{split}\label{eqn:proposed2}	
\end{equation}
where $J[1](z_i)  = J[1,0] (z_i) \mu_i^{\dot 1} + J[0,1](z_i) \mu_i^{\dot 2}$ and $J(z_i)$ means $J[0,0](z_i)$.

What we need to check is that our proposed correlator has the correct structure of poles in $z$ to be the actual chiral algebra correlator. The first thing to check is the poles and zeroes at $z = \infty$ 	

The operator $J[0,0](z_i)$ has a zero of order $2$ at $z = \infty$, and $J[1,0](z_i)$, $J[0,1](z_i)$ has a zero of order $1$.  Each index appears twice in the denominator of \eqref{eqn:proposed_correlator}.  Therefore, we have a zero of order $2$ at $z_i = \infty$ except for the four indices $i,j,k,l$  appearing in the numerator. These have a pole of order $1$, because we have $\ip{ij} \ip{kl}$ in the numerator.  For these operators, we are taking a correlator of $J[1,0]$ or $J[0,1]$, because of the appearance of $[jk]$ and $[li]$.  Thus, as required, for these indices we have a first order zero at $z = \infty$. 

Next, let us assume by induction that the expression in equation \eqref{eqn:proposed2} is the correct correlator when we have $n-1$ insertions.  To see that it is the correct correlator when we have $n$ insertions, let us consider what happens when $z_n$ approaches one of the other points $z_i$.   There are two cases: $i> 4$, or $i \le 4$. In each term in the sum in \eqref{eqn:proposed_correlator}, there is a pole when $z_n$ approaches $z_i$ only if $\ip{in}$ appears in the denominator.     This can only happen  if  $\dots t_{a_i} t_{a_n} \dots$ or $\dots t_{a_n} t_{a_i} \dots$ appear in the trace, and these two terms appear with opposite signs.

This means that the pole at $z_n = z_i$ in the correlator \eqref{eqn:proposed2} is given by the $n-1$ point correlator, where we make the replacement 
\begin{equation} 
	J_{a_i}[0,0] (z_i) J_{a_n}[0,0](z_n) \mapsto  f_{a_i a_n}^b J_b[0,0](z_i) \frac{1}{z_n - z_i}. 
\end{equation}
Similarly, if $i \le 4$, the pole at $z_n = z_i$ is the $n-1$ point correlator where we have made the replacement
\begin{equation} 
	J_{a_i}[1] (v_i^{\alpha}, z_i) J_{a_n}[0,0](z_n) \mapsto  f_{a_i a_n}^b J_b[1](v_i^{\alpha}, z_i) \frac{1}{z_n - z_i}. 
\end{equation}
In each case, the poles are determined by the OPEs in the chiral algebra.

As a function of $z_n$, the correlator has a zero at $z_n = \infty$ of order $2$ and $n-1$ first-order poles. Inductively, the residue at each pole is fixed, and this fixes the $n$-point correlator in terms of the $n-1$ point correlator.

This proves by induction that our proposed correlator \eqref{eqn:proposed2} matches the actual chiral algebra correlator, as long as we check the initial case which is $n= 4$. This we have already done in the introduction,  but let us repeat it here.

By induction, the proposed amplitude must equal the actual amplitude if they do when $n = 4$. For the $n=4$ case, we have
\begin{equation} 
	\ip{ (\tr \rho)^2 \mid J_{a_1}[1] (v^{\alpha}_1, z_1) \cdots   J_{a_4}[1](v^{\alpha}_4, z_4) }^{\text{proposed}} =
	\sum_{\sigma \in S_4} \op{Tr}( \t_{a_{\sigma_{1}}} \dots \t_{a_{\sigma_{4}} } )  \frac{\ip{\sigma_{1} \sigma_{2} } [\sigma_{2} \sigma_{3}] \ip{\sigma_{3} \sigma_{4}} [\sigma_{4} \sigma_{1}]    } {  \ip{\sigma_{1} \sigma_{2} } \ip{\sigma_{2} \sigma_{3}}  \ip{\sigma_{3} \sigma_{4}} \ip{\sigma_{4} \sigma_{1}}  } 
\end{equation}
Recalling that conservation of momentum tells us that $\frac{[12][34]}{ \ip{12} \ip{34} }$ is totally symmetric, we can rewrite the right hand side as 
\begin{equation} 
	4!  \op{Tr}( \t_{ (a_1 } \dots t_{a_4)}   ) \frac{[12][34]}{ \ip{12} \ip{34} } 
\end{equation}
which, up to a factor, matches the correlator 
\begin{equation} 
	\ip{ (\tr \rho)^2 \mid J[1](z_1,v^{\alpha}_1) \dots J[1](z_4,v^{\alpha}_4}  
\end{equation}
we have already determined in the introduction.

In the introduction, we determined this using the OPE
\begin{equation} 
	J_a[1](z_1,v^{\alpha}_1) J_b[1](z_2,v^{\alpha}_2) \sim \frac{[12]}{\ip{12}}  \op{tr}(\t_a \t_b) F[0,0] 
\end{equation}
where $F[0,0]$ is an operator built from the twistor uplift of the axion field. 

\subsection{Calculating the conformal block associated to $\tr \rho$.} \label{sec:normalizing_F} 
There is perhaps one more point we need to elaborate on, which is why the conformal block for $\tr \rho$ gives a non-zero expectation value to the operator $F[0,0]$.   

This can be seen by symmetry reasons, as we mentioned in the introduction. It is also possible to check this explicitly, and we will do that now.  The operator $F[0,0]$ is corresponds to a field configuration on twistor space for the field $\eta$.     It follows from equation \eqref{eqn:axionmodes} that 
\begin{equation} 
	F[r,s]  = 2 \pi \i \partial \left( \delta_{z = z_0} \frac{1}{r+s+2}(v_1^{r+1} v_2^s \d v_2 - v_1^{r} v_2^{s+1} \d v_1 )   \right)
\end{equation}
so that in particular,
\begin{equation} 
	F[0,0] = 2 \pi \i \partial \left( \delta_{z = z_0} \eps^{ij} v_i \d v_j   \right).  
\end{equation}
According to the analysis of \cite{Costello:2021bah}, the value of the axion field $\rho(x)$ is given by the integral of $\partial^{-1} \eta$ over the curve corresponding to $x$.  

Let us work in the complex coordinates $u_i$, $\br{u}_i$ on $\R^4 = \C^2$, using the complex structure associated to $z = 0$. 

We have
\begin{equation}
	\begin{split} 
		\frac{1}{2 \pi \i}	\ip{\rho(u,\ubar) \mid F[0,0](z_0) } &= \int_{\substack{ v_1 = u_1 + z \ubar_2 \\ v_2 = u_2 - z \ubar_1} }  \delta_{z = z_0} \eps^{ij} v_i \d v_j   \\  	
		&= -  \int_{\substack{ v_1 = u_1 + z \ubar_2 \\ v_2 = u_2 - z \ubar_1} }  \delta_{z = z_0} (u_1 + z \ubar_2) \ubar_1 \d z + (u_2 - z \ubar_1) \ubar_2 \d z   \\  
		&= - \norm{u}^2. 
	\end{split}	
\end{equation}
This means that 
\begin{equation} 
	\ip{\Lap \rho(u,\ubar) \mid F[0,0] } = -16 \pi \i,  \end{equation}
We can normalize the conformal block to make this $1$.  

This completes the proof that the all $+$ one loop amplitudes in self-dual gauge theory as computed in \cite{Bern:1993qk,Mahlon:1993fe} are the correlators of the chiral algebra with respect to the conformal block associated to $(\tr \rho)^2$, up to an overall normalization of the conformal block.

\section{Tree-level scattering amplitudes in the presence of an axion}\label{s:KacMoody}
In this section we will prove that the tree-level, all $+$ scattering amplitudes of Yang-Mills theory, in the presence of an axion with a logarithmic pole, are given by Kac-Moody correlation functions.  The novelty here is that there is a non-zero level. 

The axion field on $\R^4$ arises from the closed $(2,1)$-form field $\eta$ on twistor space.   This couples to gauge field on twistor space by
\begin{equation} 
	\frac{\lambda_{\g}}{4 (2 \pi \i)^{3/2} \sqrt{3} }\int \eta \mc{A} \partial \mc{A} 	\label{eqn:etaA} 
\end{equation}
where $\lambda_{\g}$ is a constant tuned to cancel the anomaly (see \S \ref{s:anomalies}). 

Suppose we allow $\eta$ to have a singularity so that the equations of motion are modified to
\begin{equation} 
	\dbar \eta = C \delta_{v_1 = v_2 = 0} \label{eqn:dbareta} 
\end{equation}
(using the coordinates $z,v_i$ on a patch of twistor space as before). Here $C$ is some non-zero constant which we will find controls the Kac-Moody level. Such a modification is a disorder defect in the $\eta$ field; it can be realized as an order defect where we integrate $\partial^{-1} \eta$ over the curve $v_1 = v_2 = 0$. 

Consider, as before, coupling $\partial_{v_1}^r \partial_{v_2}^s \mc{A}$ to some currents $J^a[r,s]$ living in a chiral algebra at $v_1 = v_2 = 0$.  As before, we will determine the OPEs between these currents by requiring that the coupled system is gauge invariant. 

Applying the gauge variation $\delta \mc{A} = \dbar \chi + [\chi, \mc{A}]$ to the  Lagrangian \eqref{eqn:etaA} and using \eqref{eqn:dbareta} gives us an extra term, which is
\begin{equation} 
	C	\frac{\lambda_{\g}}{4 (2 \pi \i)^{3/2} \sqrt{3} } \int \delta_{v_1 = v_2 = 0} \op{tr}(\chi \partial_z \mc{A} + \mc{A} \partial_z \chi ). 
\end{equation}
The gauge variation of 
\begin{equation} 
	\int J^a(z)J^b(z') \mc{A}_a(z) \mc{A}_b(z') 
\end{equation}
cancels this, as long as the OPE between $J^a(z)$ and $J^b(z')$ has a second-order pole
\begin{equation} 
	J^a(z) J^b(z') = \op{tr}(\t^a \t^b) \frac{1}{2 \pi \i} \frac{1}{z^2}  C	\frac{\lambda_{\g}}{4 (2 \pi \i)^{3/2} \sqrt{3} } \label{eqn:KM_level} 
\end{equation}
In other words, we find that the (deformed) Koszul dual algebra is the Kac-Moody algebra with a non-zero level.  

When we pass to real space, the field $\eta$ becomes the axion field. In \cite{Costello:2021bah}, section 5.5, the field $\rho$ corresponding to an $\eta$ with $\dbar \eta = C \delta_{v_i = 0}$ was computed.   This is
\begin{equation} 
	\rho = \frac{C}{2 \pi \i} \log \norm{x}^2, 
\end{equation}
in the normalization where $\rho$ is  the integral of $\partial^{-1} \eta$ over a $\CP^1$. In this normalization, the coupling between $\rho$ and the $4d$ gauge field $A$ is
\begin{equation} 
	 \frac{\lambda_{\g} }{8  (2 \pi \i)^{3/2} \sqrt{3} } \rho F(A)^2. 
\end{equation}
We can absorb the factor of $\frac{\lambda_{\g}}{4 (2 \pi \i)^{3/2}\sqrt{3} }$ that appears in the coupling of $\rho$ to $A$ and $\eta$ to $\mc{A}$ in a rescaling of both $\rho$ and $\eta$.    If we do this, an axion profile which couples to the gauge field by 
\begin{equation} 
	\frac{C}{2 \pi \i} \log \norm{x}^2 \tfrac{1}{2}  F(A)^2 
\end{equation}
gives rise to a Kac-Moody level of $\frac{C}{2 \pi \i}$.  

This tells us that, as desired, the all $-$ scattering amplitudes of tree-level gauge theory in the presence of an axion coupled by
\begin{equation} 
	(\log \norm{x}^k)  F(A)^2 
\end{equation}
are the correlators of the Kac-Moody algebra at level $k$.

\section{Discussion \& Conclusions}\label{s:conclusions}

In this note, we have explained the twistorial origin of various aspects of the celestial holography program, including: how chiral algebras may be constructed from local holomorphic theories on twistor space, why they coincide with (more precisely, enlarge) celestial chiral algebras for certain theories that descend to self-dual limits of Yang-Mills and Einstein gravity in four dimensions, and how their generators correspond to negative-dimension conformal primary states in 4d. We further explored aspects of these correspondences using inspiration and techniques from the twisted holography program, emphasizing the role of Koszul duality in obtaining the chiral algebras \textit{and their deformations at loop-level} from tractable computations. Finally, we illustrated how correlation functions in the chiral algebra can be used to reproduce certain scattering amplitudes in Yang-Mills theory. 

In addition to the future directions mentioned in the main text, we are pursuing various open questions suggested by this study; we preview some of them below.

\begin{itemize}
    \item In work in progress with A. Sharma, we study additional gauge theory and gravity amplitudes, at tree and loop-level, from correlation functions in our Koszul dual chiral algebra. It will be fascinating to better understand which amplitudes are accessible from our methods, and to which loop-order they are effectively computable. \\
    \item The twistorial perspective on celestial symmetries was also recently emphasized in \cite{Adamo:2021lrv}, which explored celestial holography from a worldsheet, ambitwistor string construction. It would be interesting to connect this approach with ours more directly. \\
    \item Our approach to celestial holography has been inspired by the twisted holography program \cite{Costello:2018zrm, Costello:2020jbh}, which focuses on computable properties of holomorphic/partially topological theories that arise from twists of supersymmetric string constructions. Our proposal is that (at least, certain aspects of) celestial holography should be viewed as an instance of twisted holography on twistor space. From this point of view, it would be desirable to have a more concrete string theory embedding, and explore if this enables one to access, e.g., aspects of massive states in the conformal basis. As discussed in \cite{Costello:2021bah}, a natural candidate for an anomaly-free example of a holomorphic twistor space theory is a type I topological string \cite{Costello:2019jsy}, which is the result of twisting the type IIB string in the presence of an $O7^-$-plane and $D7$-branes, placed in an Omega-background. It may be interesting to study this example along the lines of \cite{Costello:2018zrm, Costello:2020jbh}, incorporating backreaction from open string sectors in a large-N limit.\\
    \item We have explored the celestial chiral algebras from their realization as boundary algebras of 3d holomorphic-topological theories. Such boundary conditions on 3d $\mathcal{N}=2$ theories support nonperturbative boundary monopole operators \cite{Dimofte:2017tpi}, which persist after twisting \cite{Costello:2020ndc, Zeng:2021zef}. What is the interpretation of these operators in 4d? It would be interesting to explore this question by pushing/pulling along the double fibration of twistor space discussed in \S \ref{s:blocks}. \\
    \item The chiral algebras on the boundary of these 3d theories also enjoy higher products arising from their interaction with bulk operators, as discussed in \cite{Costello:2020ndc}, and it would be interesting to interpret these operations from the celestial point of view. It would also be interesting if the 3d picture sheds any light (pun intended) on the role of shadow/light transforms which are often employed in the study of celestial symmetry algebras.\\
    \item We explored how states of negative conformal dimension are in correspondence with generators of the chiral algebra. We have also studied states of positive conformal dimension, which correspond to algebra modules. Physically, these are defects supported on zeros of a polynomial, as described in \S \ref{s:states}, and are sourced by certain Wilson lines. It would be interesting to determine extensions of the chiral algebra by (some subset of) such modules. Could such an extension provide a natural description of the maximal asymptotic symmetry algebra of the corresponding 4d theory? \footnote{For a recent exploration of the asymptotic symmetry algebra from a different perspective see, e.g., \cite{Donnay:2020guq}.}
\end{itemize}

We hope that these, and many other, questions will provide a fruitful bridge between twisted and celestial holography.

\section{Acknowledgements} We thank F. Cachazo, D.  Gaiotto, L. Mason, S. Shao, A. Sharma, A. Strominger, and K. Zeng for helpful discussions and correspondences. N.P. acknowledges support from the University of Washington and the DOE award DE-SC0022347. K.C. is supported by the NSERC Discovery Grant program and by the Perimeter Institute for Theoretical Physics. Research at Perimeter Institute is supported by the Government of Canada through Industry Canada and by the Province of Ontario through the Ministry of Research and Innovation.

\printbibliography

\end{document}